\documentclass[11pt,letterpaper]{article}
\usepackage[margin=1in]{geometry}
\usepackage[T1]{fontenc}
\usepackage{lmodern}
\usepackage{amsmath,amssymb,amsthm,mathtools}
\usepackage[shortlabels]{enumitem}
\usepackage{microtype}
\usepackage{xcolor}
\usepackage{tikz}
\usetikzlibrary{arrows.meta,decorations.pathreplacing}
\usepackage[colorlinks=true,linkcolor=blue!50!black,citecolor=blue!50!black,urlcolor=blue!50!black]{hyperref}
\usepackage[nameinlink,capitalise]{cleveref}

\hypersetup{
  pdftitle={Tight Bounds for Randomized Generalized k-Server on Uniform Metrics},
  pdfauthor={Junho Hwang}
}

\newtheorem{theorem}{Theorem}[section]
\newtheorem{lemma}[theorem]{Lemma}
\newtheorem{corollary}[theorem]{Corollary}
\theoremstyle{definition}
\newtheorem{definition}[theorem]{Definition}
\theoremstyle{remark}
\newtheorem{remark}[theorem]{Remark}

\newcommand{\OPT}{\mathrm{OPT}}
\newcommand{\ALG}{\mathrm{ALG}}
\newcommand{\E}{\mathbb{E}}
\newcommand{\cX}{\mathcal{X}}
\newcommand{\cL}{\mathcal{L}}
\newcommand{\cH}{\mathcal{H}}
\newcommand{\cT}{\mathcal{T}}
\newcommand{\cU}{\mathcal{U}}
\newcommand{\cA}{\mathcal{A}}
\newcommand{\cQ}{\mathcal{Q}}
\newcommand{\Perm}{\mathfrak{S}}
\newcommand{\one}{\mathbf{1}}

\title{Tight Bounds for Randomized Generalized $k$-Server\\ on Uniform Metrics}
\author{Junho Hwang}
\date{}

\begin{document}

\maketitle
\thispagestyle{empty}

\begin{abstract}
\normalsize
In the generalized $k$-server problem, each of $k$ servers moves in its own metric space, and a request $(r_1,\dots,r_k)$ is served once some server $i$ stands at the point $r_i$ of its space.
We determine the randomized competitive ratio against an oblivious adversary when every space is a uniform metric.
With unit distances the ratio is $\Theta(k\log k)$: we give a $(k+1)\ln(2k)$-competitive algorithm and show that for large $k$ every algorithm has ratio at least $\frac1{20}k\ln k$, improving the bounds $O(k^2\log k)$ and $\Omega(k)$ of Bienkowski, Je\.z, and Schmidt.
When server $i$ pays $w_i$ per move, the ratio is $\exp(\Theta(k))$: we give a $20^k$-competitive algorithm, which also applies to weighted $k$-server on uniform metrics.
This improves the $\exp(O(k^2))$ bound of Bijoy, Mondal, and Chiplunkar and matches the $\Omega(2^k)$ lower bound of Ayyadevara and Chiplunkar up to the base of the exponent.
Both algorithms follow normalized weights exactly, through couplings whose expected movement is controlled by the decrease of the logarithm of the total weight.
The lower bound hides a random permutation among the solutions of random permutation constraints, which split into small, well-separated components, and then forbids one position to one server at a time.
\end{abstract}

\section{Introduction}\label{sec:intro}

In the $k$-server problem of Manasse, McGeoch, and Sleator~\cite{MMS90}, an online algorithm serves a sequence of requests, each a point of a metric space, by moving one of $k$ servers to the requested point, and it aims to minimize the total distance moved.
Koutsoupias and Taylor~\cite{KT04} introduced a far-reaching generalization, the \emph{generalized $k$-server problem}.
Each server $i$ moves in its own metric space $M_i$.
A request is a tuple $r=(r_1,\dots,r_k)$ with $r_i\in M_i$, and it is served once some server $i$ is at $r_i$.
The $k$-server problem is the special case in which all spaces are equal and every request has the form $(r,\dots,r)$; in the weighted $k$-server problem~\cite{New91}, the spaces are scaled copies of one metric space, again with requests of the form $(r,\dots,r)$.
Koutsoupias and Taylor showed that natural $k$-server algorithms fail already for two servers on two lines, the CNN problem, which they described as a problem that ``may act as a stepping stone towards building a robust (and less ad hoc) theory of online computation''.
For general metric spaces, competitive algorithms are known only for $k=2$~\cite{SS06,Sit14}.

\paragraph{Uniform metrics.}
Bansal, Eli\'a\v{s}, Koumoutsos, and Nederlof~\cite{BEKN18} obtained the first $f(k)$-competitive algorithms for all $k$, for the case where every $M_i$ is a uniform metric, in which distinct points are at distance~$1$.
Moving the servers from a configuration $x=(x_1,\dots,x_k)$ to a configuration $y$ then costs the number of coordinates in which $x$ and $y$ differ, and a request $r$ forbids exactly the configurations that differ from $r$ in every coordinate.
For deterministic algorithms they gave an $O(k2^k)$ upper bound by the polynomial method, which nearly matches the lower bound $2^k-1$ of Koutsoupias and Taylor~\cite{KT04}.
For randomized algorithms against an oblivious adversary, they gave an $O(k^3\log k)$-competitive algorithm and a lower bound of $\Omega(k/\log^2k)$.
Bienkowski, Je\.z, and Schmidt~\cite{BJS19} recast the phase structure of~\cite{BEKN18} as an online game on trees, the Hydra game, and improved the upper bound to $O(k^2\log k)$; they also improved the lower bound to $\Omega(k)$.
They note that their strategy for the Hydra game is optimal up to a constant factor on a class of trees that includes those arising in their reduction, so that improving $O(k^2\log k)$ requires a different approach~\cite[Section~5]{BJS19}.
Against adaptive online adversaries, the best competitive ratio of memoryless algorithms is $\Theta(k!)$~\cite{CFK20}.

\paragraph{Weighted uniform metrics.}
In the \emph{weighted} version, distinct points of $M_i$ are at distance $w_i$, so that a move of server $i$ costs $w_i$.
Its special case where the spaces are copies of one uniform metric and all requests have the form $(r,\dots,r)$ is weighted $k$-server on uniform metrics, introduced by Fiat and Ricklin~\cite{FR94}.
Here the deterministic competitive ratio is doubly exponential in $k$: Fiat and Ricklin gave a doubly exponential upper bound, Bansal, Eli\'a\v{s}, and Koumoutsos~\cite{BEK17} a doubly exponential lower bound, and Bansal et al.~\cite{BEKN18} extended the upper bound $2^{2^{k+3}}$ to generalized requests.
For randomized algorithms, the best ratio of memoryless algorithms against adaptive online adversaries is doubly exponential~\cite{CV20}, and Ayyadevara and Chiplunkar~\cite{AC21} proved a lower bound of $\Omega(2^k)$ for weighted $k$-server on uniform metrics that holds for all randomized algorithms against oblivious adversaries.
Ayyadevara, Chiplunkar, and Sharma~\cite{ACS24} reduced the design of a $2^{\mathrm{poly}(k)}$-competitive algorithm to one of two relaxed problems.
Bijoy, Mondal, and Chiplunkar~\cite{BMC26} gave the first algorithm that is not doubly exponential, with ratio $\exp(O(k^2))$, together with a sketch of the same bound for generalized requests.
They wrote that this almost solves a question of Bansal et al., whether generalized $k$-server on weighted uniform metrics admits an $\exp(k)$-competitive algorithm, and that ``the only gap left to close is between the $\Omega(2^k)$ and $\exp(O(k^2))$ bounds on the randomized competitive ratio''~\cite[Section~7]{BMC26}.

\subsection{Our results}

We determine the randomized competitive ratio for uniform metrics, with and without weights.
All logarithms are natural.

\begin{theorem}\label{thm:upper}
Let $M_1,\dots,M_k$ be uniform metric spaces with unit distances, finite or infinite, and let $a$ be an initial configuration.
There is a randomized online algorithm such that, for every finite request sequence $\rho$,
\[
\E[\ALG(\rho)]\le (k+1)\ln(2k)\cdot\OPT(\rho),
\]
where $\OPT(\rho)$ is the cost of an optimal offline solution that starts from $a$.
\end{theorem}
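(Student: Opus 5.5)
We follow the hint in the abstract: the algorithm maintains a probability distribution over configurations equal, at every moment, to normalized weights, and its movement is paid for by the decrease of the logarithm of the total weight. The analysis is by phases. In each phase we keep the set $S$ of configurations that agree with at least one coordinate of every request seen so far in the phase. Every such configuration could have served all of these requests at zero extra cost. When $S$ becomes empty, $\OPT$ must have moved at least once during the phase, and a new phase begins with $S$ equal to all configurations. We will show that within a phase the algorithm moves at most $(k+1)\ln(2k)$ times in expectation. The theorem then follows, since $\OPT$ pays at least one per completed phase; the last, incomplete phase only adds a constant, so we will need a slightly more careful charging for it, or we will fold the initial phase into the bound using the start configuration $a$.

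The weights are as follows. Configurations live in a possibly infinite product space, so we cannot take the uniform measure on $S$. Instead we use the structure of $S$: a configuration survives iff, for each request, it agrees with it in some coordinate. We therefore organize $S$ by ``which coordinates are pinned''. For a set $I\subseteq[k]$ of pinned coordinates with pinned values $v_I$, all completions are admissible iff every request hits $v_I$ on $I$. The free coordinates are irrelevant, since any value works there. So the relevant objects are the finitely many minimal pinned partial assignments $(I,v_I)$. We give a partial assignment with $|I|=j$ weight $c^{-j}$, or more precisely a weight depending only on $j$, tuned so that the total weight $W$ starts at $1$ when $I=\emptyset$. The total weight also stays at least roughly $(2k)^{-(k+1)}$ while $S$ is nonempty, because a surviving full configuration is one pinned assignment with $|I|\le k$, contributing at least $(2k)^{-k}$ up to normalization. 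This gives the budget $\ln W_{\text{start}}-\ln W_{\text{end}}\le (k+1)\ln(2k)$.

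The algorithm samples a pinned assignment proportional to weight and keeps its server configuration consistent with it. When a request arrives, each assignment not hit by the request is replaced by its refinements: we pin one additional coordinate $i\notin I$ to $r_i$, for each such $i$. The weights $c^{-(j+1)}$ are chosen so that these refinements carry at most the parent's weight. The key lemma is a coupling: moving from the old normalized distribution to the new one, the expected number of server moves is at most $\ln W_{\text{old}}-\ln W_{\text{new}}$. A killed or refined assignment moves one server, namely coordinate $i$ to $r_i$. The mass that must move is the killed weight fraction $\Delta W/W$, and since $\Delta W/W\le \ln(W_{\text{old}}/W_{\text{new}})$, the bound follows. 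The care needed is that mass surviving unchanged must not move at all, and mass redistributed into a child must cost only one move. We get this by constructing the coupling explicitly: surviving mass stays put, and the killed mass, rescaled, is sent to children of itself where possible, and otherwise to arbitrary survivors at cost at most $k$. This last case must be charged separately, perhaps through the $(k+1)$ factor.

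The main obstacle is choosing the weights so that, simultaneously, (i) refinement never increases weight beyond what the coupling can pay for with one move per unit of lost log-weight, (ii) the final weight of a nonempty $S$ is at least $(2k)^{-(k+1)}$, and (iii) jumps to unrelated survivors are rare enough. Our first attempt is to give a pinned set of size $j$ weight $\binom{k}{j}^{-1}(2k)^{-j}$-like values, with refinements splitting the parent's mass among the $k-j$ free coordinates. We will verify (i)–(iii) by induction on requests within a phase.
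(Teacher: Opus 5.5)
\paragraph{There is a genuine gap: the phase structure itself.} The problem is the balance between your budget and your coupling.

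\emph{The budget.} Your per-phase log-weight budget is $\ln(W_{\mathrm{start}}/W_{\mathrm{end}})\approx k\ln(2k)$, because the surviving configuration may need all $k$ coordinates pinned. That budget has to pay for a single offline move.

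\emph{The coupling.} Your coupling lemma, that the expected movement is at most $\ln W_{\mathrm{old}}-\ln W_{\mathrm{new}}$, is false. With weights decaying like $(2k)^{-j}$, the refinements of a killed assignment receive at most half of its weight, so the rest of its normalized mass must go to other survivors. An assignment with $|I|=k$ has no refinements at all: all its mass jumps to an unrelated survivor at distance up to $k$, while the log-weight drops only by about the killed fraction. The best per-step bound is roughly $(k+1)\ln(W_{\mathrm{old}}/W_{\mathrm{new}})$, and the factor $k+1$ cannot also be the exponent in $(2k)^{-(k+1)}$.

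\emph{Consequence.} Accounted correctly, your scheme costs about $k(k+1)\ln(2k)$ per phase, which is the $O(k^2\log k)$ bound of \cite{BJS19}. As the paper recalls, \cite{BJS19} noted that their Hydra-game strategy, which models exactly this phase structure, is optimal up to a constant. So retuning the weights in (i)--(iii) will not bring a strategy that stays inside the configurations serving all requests of the current phase down to $O(k\log k)$.

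\emph{Additive term.} Separately, phases leave an additive term that the statement forbids. Suppose $a$ serves $r_1$ in exactly one coordinate. On $\rho=(r_1)$ your algorithm refines the root into $k$ equally weighted children and moves with probability $(k-1)/k$, although $\OPT(\rho)=0$.

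\paragraph{The missing idea is to give up the phase invariant and never reset the weights.} The paper puts weights on configurations, not on pinned sets, starting from $\one_{\{a\}}$. A configuration that fails $r_t$ passes $1/(2k)$ of its weight to each of its $k$ repairs and loses the other half. The servers follow $\mu_t=\phi_t/\Phi_t$ exactly:
\begin{itemize}
\item they stay if $X_{t-1}$ serves $r_t$;
\item otherwise, with probability $1/2$ they move to a uniformly random repair, and with probability $1/2$ they resample from $\mu_t$.
\end{itemize}
This keeps $X_t\sim\mu_t$ and costs at most $b_t(1+k)/2=(k+1)(1-z_t)\le(k+1)\ln(\Phi_{t-1}/\Phi_t)$. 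That is the $(k+1)$ factor you were reaching for.

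The log-weight budget is then charged per offline move, not per phase. An optimal solution converts into a lazy one from $a$ that moves at most $\OPT(\rho)$ times. Each move keeps a $1/(2k)$ fraction of its weight, so $\Phi_T\ge(2k)^{-\OPT(\rho)}$: the cost is $\ln(2k)$ per move rather than $k\ln(2k)$ per phase. This gives $(k+1)\ln(2k)\,\OPT(\rho)$ with no additive term. Infinite spaces need no pinned-assignment bookkeeping, since the support of $\phi_t$ is automatically finite.
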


\begin{theorem}\label{thm:lower}
There is an absolute constant $k_0$ such that, for every $k\ge k_0$, every randomized online algorithm for generalized $k$-server on $k$ copies of the uniform metric on $k$ points has competitive ratio at least $\frac1{20}k\ln k$ against an oblivious adversary.
\end{theorem}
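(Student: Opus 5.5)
We will prove \cref{thm:lower} by Yao's principle. We build a random request sequence on $k$ copies of the uniform metric on $[k]$ and show that its expected optimal cost is $O(1)$ per phase, while every deterministic online algorithm pays $\Omega(k\ln k)$ per phase in expectation.

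\textbf{Hidden configuration.} The configurations $x\in[k]^k$ are viewed as functions from servers to positions. The adversary picks a uniformly random permutation $\pi\in\Perm_k$ and uses it as the hidden configuration. A request $r$ forbids exactly those $x$ with $x_i\ne r_i$ for all $i$. Any request $r$ with $r_i=\pi(i)$ for some $i$ is therefore served by $\pi$ at no cost.

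\textbf{Phase structure.} A phase has two stages.
\begin{enumerate}[(i)]
\item The adversary issues random permutation constraints that are all consistent with $\pi$. Each constraint is a request $r$ that is a permutation agreeing with $\pi$ in at least one coordinate. Together these constraints cut down the set $S$ of permutations consistent with everything seen so far.
\item The adversary then repeatedly forbids one position to one server. For a pair $(i,p)$ with $p\ne\pi(i)$, it issues a request that forces any surviving configuration to avoid $x_i=p$. Such a request can be built as $r_i=p$ with the other coordinates chosen so that any configuration with $x_i\neq p$ is still allowed when it agrees with $\pi$ elsewhere.
\end{enumerate}

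\textbf{Components of $S$.} The key structural claim is that, for suitably chosen random constraints, the solution set $S$ splits into small, well-separated components. Each component has size polylogarithmic, or at least small compared to $k$, and two distinct components differ in $\Omega(k)$ coordinates. This is what the abstract describes.

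\textbf{Offline cost.} $\OPT$ moves directly to $\pi$ at cost at most $k$ once per phase. Alternatively, we repeat phases with fresh $\pi$ and charge $k$ per phase.

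\textbf{Online cost.} The online algorithm, knowing only the requests, faces a coupon-collector situation. In stage (ii), each forbidding step eliminates candidates one server-position pair at a time. The algorithm's current configuration is hit with probability about $1/|\text{remaining}|$. Summing over the $\Theta(k)$ relevant eliminations within a component gives a harmonic sum $\sum 1/j\approx\ln k$. Whenever the algorithm's configuration lies in a wrong component, it must move $\Omega(k)$ coordinates, and this happens with constant probability per elimination round. Combining the two gives expected online cost $\Omega(k\cdot k\ln k)$ per phase against $\OPT=O(k)$, a ratio of $\Omega(k\ln k)$. Tracking constants should yield $\frac1{20}$ for large $k$.

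\textbf{Main obstacle.} The hard step is the probabilistic analysis of the random permutation constraints. We must show that with high probability the set $S$ consists of components that are small and pairwise far apart in Hamming distance. We must also show that the posterior distribution of $\pi$ given the requests stays spread over many components, so that the online algorithm cannot guess. I would try first-moment counting. The expected number of permutations agreeing with each of $m$ random permutations in at least one coordinate is about $k!\,(1-e^{-1})^m$. We choose $m$ so this is small but not tiny. To get separation, we bound the expected number of pairs at Hamming distance between $2$ and $ck$ that both survive, using that such pairs share most coordinates and their survival events are correlated in a controlled way. Getting the constant $\frac1{20}$ will require care in balancing the harmonic sum against the failure probabilities of these concentration bounds.
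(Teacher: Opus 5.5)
Your outline has the paper's skeleton: a hidden permutation, random permutation constraints whose solutions cluster, requests that forbid one position to one server, and $\OPT\le k$ per phase. But the mechanism that turns this into cost for the algorithm is missing or wrong at several points.

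\textbf{The request gadgets.} A request $r$ forbids exactly the configurations that differ from $r$ in every coordinate. So if $r_i=p$, every configuration with $x_i=p$ \emph{serves} $r$, and your stage~(ii) request cannot forbid $x_i=p$. Moreover, nothing restricts the online configuration to permutations. A constant configuration $(a,\dots,a)$ serves every request that is a permutation, so your stage~(i) constraints bind nothing. The paper first issues the $k$ requests $(a,\dots,a)$, which are served exactly by permutations (\cref{lem:enforce}). Inside $\Perm$, a single request can then forbid one position to one server: $\langle i,a\rangle$, whose $i$-th coordinate is not $a$ and whose other coordinates are all $a$ (\cref{lem:forbid}).

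\textbf{Confinement.} More seriously, your claim that an algorithm in a wrong component must move $\Omega(k)$ presupposes that it stays in the solution set. Constraints issued once do not enforce this. Even with the correct gadget and $\Theta(k^2\ln k)$ forbidding requests, consider an algorithm that keeps a permutation and swaps two servers whenever $\langle i,a\rangle$ forbids its current position. It pays at most $2$, with probability at most $1/(k-1)$ per request, hence $O(k\ln k)$ per phase. That is far below the $\Omega(k^2\ln k)$ the argument needs against $\OPT\le k$. The paper closes this gap with \cref{lem:confine}: after each new request it repeats earlier requests until the algorithm's distribution has mass at most $k^{-3}$ outside $Q_t$. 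This terminates because the algorithm is assumed $\gamma$-competitive while $\OPT$ pays at most $k$ extra. The repeated sequence is computed from the algorithm's distribution, so it does not fit Yao's principle with a fixed input distribution. The paper instead argues directly that the random input is independent of the algorithm's coins.

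\textbf{Accounting and structure.} The factor $\ln k$ is not a harmonic sum over $\Theta(k)$ eliminations. It comes from $\ln Z(F)\approx\varepsilon k\ln k$, which permits $T\approx(k-1)\varepsilon k\ln k$ forbidding requests while $2^k\E[1/|Q_t|]\le e^{-2k}$ (\cref{lem:posterior}). At each such request, the algorithm is, with high probability, in a component without $\sigma$. That component differs from $\sigma$ in more than $\beta k$ coordinates but varies in fewer than $\eta k$. So it is eliminated with probability at least $(\beta-\eta)/(k-1)$, at cost more than $\beta k$, for a total of about $\varepsilon\beta(\beta-\eta)k^2\ln k$.

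What you need, then, is that components are frozen outside $\eta k$ coordinates. For the posterior step a size bound of $2^k$ suffices; ``polylogarithmic'' is neither shown nor needed, and small size alone does not give frozenness. The paper proves frozenness via the sets $\cL_x(I)$ and properties (G2) and (G3). It also needs a second moment, not only first moments, to fix one good system with $Z\ge\E_0Z/2$ (\cref{lem:fixed}). Fixing $F$ and drawing $\sigma$ uniformly from $Q(F)$ also avoids the planted-versus-null transfer that your planted stage~(i) would require.

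\textbf{Your constraint type.} Finally, single permutation requests look too weak. Both members of a pair at distance $uk$ survive one random permutation request with probability about $1-2/e+e^{-1-u}$. At the paper's density $\ln\E Z\approx k\ln k/10$, this makes the expected number of surviving pairs at distance $k/2$ about $e^{0.09k\ln k}$, so first moments give no separation at all. The paper's families of $k$ bijections each force $\pi^j(x)\in\Perm$. Such a family accepts $y$, given that it accepts $x$, with probability $D_d/(k-1)^d$ when $x,y$ differ in $d$ coordinates, which penalizes nearby pairs far more strongly.
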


\begin{theorem}\label{thm:weighted}
Let $M_1,\dots,M_k$ be uniform metric spaces in which distinct points of $M_i$ are at distance $w_i$, where $0\le w_1\le\dots\le w_k$, and let $a$ be an initial configuration.
There is a randomized online algorithm such that, for every finite request sequence $\rho$,
\[
\E[\ALG(\rho)]\le 20^k\cdot\OPT(\rho)+10^kw_k .
\]
\end{theorem}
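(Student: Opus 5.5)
The statement to prove is \cref{thm:weighted}, the weighted upper bound. The abstract indicates the approach: the algorithm follows normalized weights exactly, through couplings whose expected movement is controlled by the decrease of the logarithm of the total weight.

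First reduce to weights that are powers of a large base, say $w_i\in\{1,\beta,\beta^2,\dots\}$ with $\beta$ a constant such as $10$, after rounding up. Rounding costs a constant factor per level, which is absorbed in $20^k$. Merge servers of equal rounded weight into one class; this changes $\OPT$ by at most a constant factor. The proof then proceeds by induction on $k$, in the style of the $\exp(O(k^2))$ argument of Bijoy, Mondal, and Chiplunkar. The aim is to make the dependence between levels multiplicative with a constant factor per level, rather than $k$ per level.

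Structure the algorithm hierarchically. The heaviest server $k$ holds a position $x_k\in M_k$. Conditioned on $x_k$, the remaining $k-1$ servers run the inductive algorithm on the restricted instance. That instance consists of the requests $r$ with $r_k\neq x_k$, which must be served by servers $1,\dots,k-1$. For each candidate position $p$ of server $k$, maintain a weight $W_p=\exp(-c\,\mathrm{cost}_p/w_k)$. Here $\mathrm{cost}_p$ is an online estimate, coming from the inductive guarantee, of the optimal $(k-1)$-server cost when server $k$ is frozen at $p$ since the start of the phase.

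The algorithm keeps $x_k$ distributed exactly as the normalized weights $W_p/\sum_q W_q$. When the weights change, it uses a coupling that moves $x_k$ only on the mass that decreased. The probability that server $k$ moves is then at most the total-variation decrease. This is bounded by $\ln(\sum W)_{\text{old}}-\ln(\sum W)_{\text{new}}$, up to the constant $c$, by the same potential argument used for \cref{thm:upper}. Over a phase, $\ln\sum W$ drops from $\ln(\#\text{candidates})$ to roughly $-c\cdot(\text{phase threshold})$. Infinite spaces are handled by only considering the finitely many points that have appeared, plus one fresh point.

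The accounting argument charges server $k$'s expected movement, times $w_k$, to the drop in the log-weight. It charges the lower servers' cost to the inductive bound $20^{k-1}$ times the restricted $\OPT$, plus the additive term $10^{k-1}w_{k-1}$ each time server $k$ moves. Because the weights are geometrically separated, $w_{k-1}\le w_k/\beta$. The additive terms therefore sum to at most $10^{k-1}w_k/\beta$ per move of server $k$. A phase ends when every position has accumulated restricted cost at least $w_k$; in that case $\OPT$ pays at least $\Omega(w_k)$ during the phase. The recursion $C_k\le a\,C_{k-1}+b$ with constants $a,b$ then gives $C_k\le 20^k$, and the leftover incomplete phase gives the additive $10^k w_k$.

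The main obstacle is the number of candidate positions. Starting from $\ln(\#\text{candidates})$ would cost a $\log$ factor that could be as large as $\log k$ or even unbounded. This would lose the clean $\exp(O(k))$ bound unless the phase threshold and the constant $c$ are chosen so that the initial entropy is paid for by $\OPT$. I would first try a non-uniform prior: restart the weights at each phase using the previous phase's normalized weights, so that the log-ratio telescopes across phases. The plan is to bound the total drop by $O(1)$ times the number of phases plus an amount charged against $\OPT$.

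The second difficulty is that $\mathrm{cost}_p$ is only available through the randomized lower-level algorithm. The coupling must therefore use a deterministic online lower bound, for example a work-function value of the restricted instance, rather than realized cost.
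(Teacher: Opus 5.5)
\textbf{The gap.} The step you call ``the main obstacle'' is the whole difficulty, and your plan does not get past it. At the top level, your algorithm is a metrical task system on the uniform metric of candidate positions of server~$k$, with switching cost $w_k$ and cost vectors $\mathrm{cost}_p$ supplied by the lower levels. Following exponential weights until every position has accumulated cost $w_k$ charges about $w_k(\ln n+c)$ per phase, where $n$ is the number of positions named so far. Against this, $\OPT$ pays only $\Omega(w_k)$ per phase. The theorem allows arbitrarily large or infinite $M_i$, and the adversary can name fresh points of $M_k$ in every phase, so $n$ is unbounded. Your argument uses no property of the cost vectors beyond what a uniform metrical task system provides, and there the randomized ratio is $\Omega(\log n)$, so no choice of prior fixes it. In particular, inheriting the previous phase's normalized weights fails: the good position changes from phase to phase, and a position that was expensive before starts with tiny prior mass. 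The relative-entropy term $\ln(1/\nu_0(j))$ is then again not bounded in terms of $k$.

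\textbf{The missing idea.} It is the phase structure of Bijoy, Mondal, and Chiplunkar, which bounds the number of experts at level $\ell$ by a function of $\ell$ alone. A phase of level $\ell$ (\cref{def:phase}) first runs an explore part of $m_{\ell-1}$ phases of level $\ell-1$ on the real servers. The $d_\ell-1$ points of $M_\ell$ named most often by its witnesses form the critical set $S$, and only the $d_\ell-1$ runs with server $\ell$ held at a point of $S$ are aggregated by \cref{lem:aggregate}. The offline bound is then not ``every position has cost at least $w_k$'' but the demand/contamination argument (\cref{lem:demand,lem:hard,lem:phasecost}):
\begin{enumerate}[(i)]
\item A solution that keeps server $\ell$ at $b\notin S$ is $\ell$-contaminated in at most $m_{\ell-1}/16$ explore phases, since every point of $S$ has demand at least that of $b$ and $d_\ell\ge2\Gamma_{\ell,\ell}$.
\item A solution that keeps server $\ell$ at $b\in S$ is never $\ell$-contaminated in the exploit run for $b$, whose demand vanishes at $b$.
\end{enumerate}
So every solution pays $v_k/2^k$ per completed phase of level $k$ (\cref{cor:opt}). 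The entropy term becomes $\ln(d_\ell-1)<5^{\ell-1}$ and enters additively in $c_\ell=3c_{\ell-1}+4\ln(d_\ell-1)+2$. This gives $c_k\le10\cdot5^{k-1}$ and, with the factors $2^k$ and $2^{k-1}$, the ratio $20^k$.

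\textbf{Two smaller steps that fail.} First, servers of equal weight live in different spaces and cannot be merged. The paper instead raises the weights so that $v_{i+1}/v_i$ is an integer at least $2$, with $v_i\le2^{i-1}w_i$. Second, weighting experts by a work-function value does not fit the switching analysis, which needs each expert's cost $\lambda_i(t)$ to dominate the actual movement of its simulated copy. The paper uses realized movements and compares with one copy fixed by the requests alone, bounding its expected cost by induction. The lower bound on $\OPT$ comes from the phase structure, not from the experts' costs.
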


\begin{corollary}\label[corollary]{cor:main}
The randomized competitive ratio of generalized $k$-server is $\Theta(k\log k)$ on uniform metrics and $\exp(\Theta(k))$ on weighted uniform metrics.
The randomized competitive ratio of weighted $k$-server on uniform metrics is $\exp(\Theta(k))$.
\end{corollary}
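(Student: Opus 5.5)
The corollary follows by combining the three theorems with the previously known lower bound of Ayyadevara and Chiplunkar~\cite{AC21}. No new argument is needed beyond checking that each bound applies to the right problem class.

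\emph{Unweighted uniform metrics.} For the upper bound, \cref{thm:upper} gives an algorithm with $\E[\ALG(\rho)]\le (k+1)\ln(2k)\,\OPT(\rho)$ with no additive term. Since $(k+1)\ln(2k)=O(k\log k)$, the randomized competitive ratio is $O(k\log k)$. For the lower bound, \cref{thm:lower} gives $\frac1{20}k\ln k$ for all $k\ge k_0$, on the instance of $k$ copies of the uniform metric on $k$ points. This instance is a special case of generalized $k$-server on uniform metrics, so the lower bound transfers. The finitely many $k<k_0$ only affect constants: there the ratio is at least $1$, which is $\Omega(k\log k)$ for bounded $k\ge 2$. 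Hence the ratio is $\Theta(k\log k)$.

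\emph{Weighted uniform metrics.} For the upper bound, \cref{thm:weighted} gives $\E[\ALG]\le 20^k\OPT+10^kw_k$. The additive term $10^kw_k$ does not depend on $\rho$, so this is the usual (non-strict) notion of competitiveness with ratio $20^k=\exp(O(k))$. This covers generalized requests, so it also covers weighted $k$-server on uniform metrics. As noted after the theorem, weighted $k$-server is the special case where all $M_i$ are the same uniform metric scaled by $w_i$ and requests have the form $(r,\dots,r)$, after relabeling so that $w_1\le\dots\le w_k$. The only hypothesis to check is that the theorem allows arbitrary spaces $M_i$, including equal ones, which it does. For the lower bound, the $\Omega(2^k)=\exp(\Omega(k))$ bound of~\cite{AC21} is stated for weighted $k$-server on uniform metrics against oblivious adversaries. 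Since that problem is a special case of generalized $k$-server on weighted uniform metrics, the bound transfers to the generalized problem as well. Together these give $\exp(\Theta(k))$ for both problems.

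The only point needing care is the additive constant in \cref{thm:weighted}. We should make sure the paper's definition of competitive ratio permits an additive constant that depends on $k$ and the weights but not on $\rho$. This is standard, and the lower bound of~\cite{AC21} is proved against that same non-strict notion. Everything else is bookkeeping.
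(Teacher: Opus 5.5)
Your proposal is correct and follows the paper's own proof: the upper bounds are \cref{thm:upper,thm:weighted}, the unit-weight lower bound is \cref{thm:lower}, and the $\Omega(2^k)$ bound of~\cite{AC21} carries over to the generalized weighted problem because weighted $k$-server is its special case with (disjoint) copies of one uniform metric and requests $(r,\dots,r)$. Your extra checks on small $k$ and on the additive term $10^kw_k$ are consistent with the paper's definition of $\gamma$-competitiveness, which allows an additive constant depending on the spaces, the weights and the starting configuration.
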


\begin{proof}
The upper bounds are \cref{thm:upper,thm:weighted}, and \cref{thm:lower} is the lower bound with unit weights.
Weighted $k$-server on a uniform metric is the special case of generalized $k$-server on weighted uniform metrics described above, so the lower bound $\Omega(2^k)$ of Ayyadevara and Chiplunkar~\cite[Theorem~2]{AC21} applies to both weighted problems.
\end{proof}

\Cref{thm:upper} improves the upper bound by a factor of $k$, has no additive term, and holds for spaces of any cardinality.
\Cref{thm:lower} improves the lower bound by a factor of $\log k$; its construction uses spaces with $k$ points, while the construction of~\cite{BJS19} uses two points per space.
With \cref{thm:upper}, it shows that for large $k$ the ratio lies between $\frac1{20}k\ln k$ and $(1+o(1))k\ln k$.
\Cref{thm:weighted} holds for any number of points and any weights, and its constants do not depend on the ratios between the weights.

The algorithm of \cref{thm:upper} also serves requests that specify a set of points for each server: if the sets of every request have at most $B$ points in total, it is $(k+1)\ln(2B)$-competitive (\cref{thm:sets}).
This gives the ratio $(k+1)\ln(2k)$ for paging with heterogeneous cache slots~\cite{CHLPRSY23}, in which each request may be served only from certain slots, and $(k+1)\ln(2k\ell)$ for metrical service systems with $k$ servers on a uniform metric whose requests are sets of $\ell$ points~\cite{CV15}.
\Cref{tab:bounds} compares all bounds.

\begin{table}[t]
\centering
\small
\begin{tabular}{@{}lll@{}}
\hline
Problem & Previous bounds & This paper\\
\hline
generalized $k$-server, unit distances & $O(k^2\log k)$, $\Omega(k)$~\cite{BJS19} & $(k+1)\ln(2k)$, $\Omega(k\log k)$\\
generalized $k$-server with weights & $\exp(O(k^2))$~\cite{BMC26}, $\Omega(2^k)$~\cite{AC21} & $20^k$\\
weighted $k$-server & $\exp(O(k^2))$~\cite{BMC26}, $\Omega(2^k)$~\cite{AC21} & $20^k$\\
\hline
paging with heterogeneous cache slots & $O(k^2\log k)$, $\Omega(k)$~\cite{CHLPRSY23} & $(k+1)\ln(2k)$\\
$k$ servers, requests of $\ell\ge2$ points & $O(k^3\log\ell)$, $\Omega(\log(k\ell))$~\cite{CV15} & $(k+1)\ln(2k\ell)$\\
\hline
\end{tabular}
\caption{Competitive ratios of randomized algorithms on uniform metrics against oblivious adversaries, before and after this paper.
The last two rows are applications of \cref{thm:sets}.}
\label{tab:bounds}
\end{table}

\subsection{Techniques}\label{sec:techniques}

\paragraph{The algorithm for unit weights.}
The algorithms of~\cite{BEKN18,BJS19} work in phases.
Within a phase they stay in configurations that serve every request of the phase so far; a request that no such configuration serves is served separately and ends the phase, so the optimum moves in every completed phase.
Our algorithm gives up this invariant.
It keeps a weight on every configuration, initially concentrated on the starting configuration.
When a request $r$ arrives, a configuration that serves $r$ keeps its weight.
A configuration $x$ that fails $r$ passes a $1/(2k)$ fraction of its weight to each of the $k$ configurations obtained from $x$ by moving one server to its requested point, which we call the \emph{repairs} of $x$, and loses the remaining half (\cref{fig:update}).
The weights thereby account for lazy offline solutions, which move one server only when the request forces them to.
A lazy solution that has moved $m$ times keeps weight at least $(2k)^{-m}$, and some lazy solution moves at most $\OPT$ times.
The servers follow the normalized weights exactly.
If the current configuration fails the request, they move to a uniformly random one of its $k$ repairs with probability $1/2$, and otherwise to a fresh sample from the new distribution.
This keeps the configuration distributed exactly according to the weights, and its expected movement is at most $k+1$ times the decrease of the logarithm of the total weight, which is at most $\OPT\cdot\ln(2k)$ in total.
We build on the analysis of the weighted majority algorithm~\cite{LW94} and its extension to comparators that switch~\cite{HW98}, which Blum and Burch~\cite{BB00} used for metrical task systems, and on the shrinking dartboard algorithm of Geulen, V\"ocking, and Winkler~\cite{GVW10}, which follows a weight distribution exactly and resamples only to account for lost weight.

\begin{figure}[t]
\centering
\begin{tikzpicture}[x=1.45cm,y=1.45cm,
  serve/.style={circle,draw,fill=black!12,minimum size=8mm,inner sep=0pt},
  fail/.style={circle,draw,minimum size=8mm,inner sep=0pt},
  pass/.style={-{Stealth[length=2.2mm]},semithick,black!65}]
  \foreach \dx in {0,4.2} {
    \foreach \i in {1,2,3} {
      \node[font=\small] at (\dx+\i,0.35) {$\i$};
      \node[font=\small] at (\dx+0.4,\i) {$\i$};
    }
    \node[font=\small] at (\dx+2,-0.05) {position of server 1};
    \node[font=\small,rotate=90] at (\dx+0.0,2) {position of server 2};
  }
  \foreach \i/\j in {1/1,1/2,1/3,2/1,3/1} { \node[serve] (L\i\j) at (\i,\j) {$4$}; }
  \foreach \i/\j in {2/2,2/3,3/2,3/3} { \node[fail] (L\i\j) at (\i,\j) {$4$}; }
  \draw[pass] (L22) -- (L12);
  \draw[pass] (L23) -- (L13);
  \draw[pass] (L32) to[bend right=50] (L12);
  \draw[pass] (L33) to[bend right=50] (L13);
  \draw[pass] (L22) -- (L21);
  \draw[pass] (L32) -- (L31);
  \draw[pass] (L23) to[bend left=50] (L21);
  \draw[pass] (L33) to[bend left=50] (L31);
  \node[serve] at (4.2+1,1) {$4$};
  \foreach \i/\j in {1/2,1/3,2/1,3/1} { \node[serve] at (4.2+\i,\j) {$6$}; }
  \foreach \i/\j in {2/2,2/3,3/2,3/3} { \node[fail,draw=black!35,text=black!45] at (4.2+\i,\j) {$0$}; }
\end{tikzpicture}
\caption{One weight update for $k=2$ with three points per space, at the request $r=(1,1)$.
The configuration $(x_1,x_2)$ sits in column $x_1$ and row $x_2$, and weights are multiplied by $36$.
Left: an illustrative weight function, with weight $4$ on every configuration.
The shaded configurations serve $r$ and keep their weight.
Each white configuration fails $r$ and passes a quarter of its weight along each of its two arrows, to its repairs: server~1 moves to point~$1$ (leftward arrows) or server~2 moves to point~$1$ (downward arrows); the other half of its weight is discarded.
Right: after the request.
The total weight drops from $36$ to $28$, by half the weight of the white configurations, and the servers follow the new weights divided by $28$.}
\label{fig:update}
\end{figure}

\paragraph{The lower bound for unit weights.}
The lower bound of~\cite{BJS19} uses two points per space.
A configuration is then a binary string, and the request that equals the complement of a string $c$ forbids $c$ and nothing else.
The adversary removes the strings of a code of size $2^{\Omega(k)}$ and minimum distance $\Omega(k)$ one at a time, each time the one that the algorithm is most likely to occupy, and between removals it forbids configurations outside the remaining set until the algorithm is in that set with high probability.
Both steps use only the distribution of the algorithm's configuration, which the adversary computes from the algorithm and the requests so far, so the requests are independent of the algorithm's random choices, as for an oblivious adversary; the same holds for our adversary.
When $n$ strings remain, a removal thus moves the algorithm a distance $\Omega(k)$ with probability about $1/n$ or more, so over a phase the algorithm pays $\Omega(k)$ times the logarithm of the size of the code, while the optimum pays at most $k$.
To force $\omega(k)$, we therefore hide the target among $e^{\Omega(k\log k)}$ candidates, which needs more than a constant number of points per space; but then a single request forbids every configuration that disagrees with it in all coordinates, and candidates can no longer be removed one at a time.

Our adversary hides a uniformly random permutation $\sigma$ of $[k]$, viewed as a configuration on $k$ copies of the uniform metric on $[k]$, among a fixed set of candidate permutations.
A configuration serves the $k$ requests $(a,\dots,a)$ with $a\in[k]$ if and only if it is a permutation, and a permutation $x$ serves the request whose $i$-th coordinate is some $b\ne a$ and whose other coordinates are all $a$ if and only if $x_i\ne a$.
So, among permutations, this request forbids one position to one server: it removes the candidates that place server~$i$ at point~$a$.
The candidates are the solutions of random permutation constraints: for $h\approx0.9\ln k$ families $\pi^j=(\pi^j_1,\dots,\pi^j_k)$ of bijections of $[k]$, the adversary issues the $k$ requests $((\pi^j_1)^{-1}(a),\dots,(\pi^j_k)^{-1}(a))$ with $a\in[k]$, and a configuration $x$ serves all of them if and only if $(\pi^j_1(x_1),\dots,\pi^j_k(x_k))$ is a permutation.
The permutations that serve these \emph{setup requests} form the solution set of a random constraint satisfaction problem, and we call a configuration \emph{feasible} if it serves all requests so far.
We show that for some families, which the adversary fixes, there are $e^{\Omega(k\log k)}$ solutions, and they split into components at distance more than $0.77k$ from each other, each of which has fewer than $2^k$ configurations, in which more than $0.92k$ servers each take a single position.
In each phase, the adversary draws $\sigma$ uniformly from the solutions and then issues $\Theta(k^2\log k)$ requests, each forbidding a uniformly random position that $\sigma$ avoids.
Given the requests, $\sigma$ is uniform on the feasible configurations, and the expected inverse of their number stays far below $2^{-k}$, so at each request the algorithm is in a component without $\sigma$ with probability close to~$1$.
In more than $0.69k$ coordinates, all configurations of that component agree and differ from $\sigma$, so the request eliminates the whole component with probability more than $0.69/k$; repeating earlier requests then drives the algorithm back into the feasible set, a distance of more than $0.77k$ away (\cref{fig:clusters}).
Over a phase the optimum pays at most $k$, while the algorithm pays at least $\frac1{20}k^2\ln k$ in expectation.

The core of the proof is the component structure, which resembles the clusters of solutions of random constraint satisfaction problems~\cite{AC08}.
It reduces to bounding, for a solution $x$ of the setup requests and a set $I$ of coordinates, the values that the coordinates in $I$ take in solutions that agree with $x$ outside $I$.
We bound the probabilities of rare events by first moments under the uniform law of the constraints, and the second moment of the number of solutions~\cite{AM06} shows that some families have many solutions and none of these events.
We build on the routine Confine of~\cite{BJS19}, which forbids configurations outside a target set until the algorithm is in that set with high probability, at no cost to an offline solution that stays in the set; we implement it by repeating earlier requests.

\begin{figure}[t]
\centering
\begin{tikzpicture}[x=1cm,y=1cm,
  live/.style={circle,fill=black!65,inner sep=0pt,minimum size=2.7pt},
  dead/.style={circle,fill=black!15,inner sep=0pt,minimum size=2.7pt},
  oval/.style={draw=black!45,densely dashed},
  deadoval/.style={draw=black!18,densely dashed},
  alg/.style={circle,fill=blue!70!black,inner sep=0pt,minimum size=6.5pt},
  oldalg/.style={circle,draw=blue!70!black,line width=0.8pt,inner sep=0pt,minimum size=6.5pt},
  pics/cl/.style={code={
    \foreach \dx/\dy in {-0.24/0.05,-0.09/0.17,0.1/0.13,0.26/-0.01,0.04/-0.14,-0.16/-0.12,0.0/0.0}
      \node[#1] at (\dx,\dy) {};
  }},
  pics/half/.style={code={
    \foreach \dx/\dy in {-0.24/0.05,-0.09/0.17,0.1/0.13}
      \node[live] at (\dx,\dy) {};
    \foreach \dx/\dy in {0.26/-0.01,0.04/-0.14,-0.16/-0.12,0.0/0.0}
      \node[dead] at (\dx,\dy) {};
  }}]
  \draw[black!30,rounded corners=10pt] (0,0) rectangle (6.3,3.3);
  \node[font=\small,anchor=north west] at (0.08,3.25) {$Q_t$};
  \foreach \x/\y in {1.0/2.35, 2.45/0.95, 3.35/2.45, 5.25/2.4, 5.0/0.85, 0.95/0.9} {
    \draw[oval] (\x,\y) ellipse (0.44 and 0.32);
    \pic at (\x,\y) {cl=live};
  }
  \node at (3.45,2.58) {$\star$};
  \node[font=\small,anchor=south] at (3.45,2.78) {$\sigma$};
  \node[alg] at (2.45,0.95) {};
  \node[font=\small,anchor=north,blue!70!black] at (2.45,0.58) {$Y_t$};
  \draw[{Stealth[length=1.8mm]}-{Stealth[length=1.8mm]},black!60] (2.95,0.93) -- (4.5,0.87)
    node[midway,below,font=\small,black] {$>0.77k$};
  \node[font=\small] at (3.15,-0.35) {(a) before $r_{t+1}$};
  \begin{scope}[xshift=7.1cm]
  \draw[black!30,rounded corners=10pt] (0,0) rectangle (6.3,3.3);
  \node[font=\small,anchor=north west] at (0.08,3.25) {$Q_{t+1}$};
  \foreach \x/\y in {2.45/0.95, 5.0/0.85, 0.95/0.9} {
    \draw[deadoval] (\x,\y) ellipse (0.44 and 0.32);
    \pic at (\x,\y) {cl=dead};
  }
  \foreach \x/\y in {1.0/2.35, 3.35/2.45, 5.25/2.4} {
    \draw[oval] (\x,\y) ellipse (0.44 and 0.32);
    \pic at (\x,\y) {half};
  }
  \node at (3.45,2.58) {$\star$};
  \node[font=\small,anchor=south] at (3.45,2.78) {$\sigma$};
  \node[oldalg] at (2.45,0.95) {};
  \node[alg] at (5.01,2.45) {};
  \draw[blue!70!black,line width=1pt,-{Stealth[length=2.2mm]}] (2.6,1.1) .. controls (3.4,1.9) and (4.2,2.1) .. (4.88,2.4)
    node[pos=0.5,below right,font=\small,black] {$>0.77k$};
  \node[font=\small,anchor=north,blue!70!black] at (2.45,0.58) {$Y_t$};
  \node[font=\small,anchor=south,blue!70!black] at (5.01,2.62) {$Y_{t+1}$};
  \node[font=\small] at (3.15,-0.35) {(b) after $r_{t+1}$ and the confinement requests};
  \end{scope}
\end{tikzpicture}
\caption{One step of the adversary of \cref{thm:lower}, schematically.
Dots are configurations that serve the setup requests, dashed ovals are their components, at distance more than $0.77k$ from each other, and $\star$ marks the hidden permutation $\sigma$; $Q_t$ (dark dots) is the set of feasible configurations after the requests $r_1,\dots,r_t$.
(a) With probability close to~$1$, the configuration $Y_t$ of the algorithm lies in $Q_t$ and in a component without $\sigma$, since, given the requests so far, $\sigma$ is uniform on $Q_t$ and independent of $Y_t$, components have fewer than $2^k$ configurations, and the expected inverse of $|Q_t|$ is far below $2^{-k}$.
(b) With probability more than $0.69/k$, the request $r_{t+1}$ forbids a position that all configurations of the component of $Y_t$ use and $\sigma$ avoids, which eliminates the component (light dots fail $r_{t+1}$), and the confinement requests, which repeat earlier requests, drive the algorithm into $Q_{t+1}$, a distance of more than $0.77k$ away.}
\label{fig:clusters}
\end{figure}

\paragraph{The algorithm for weighted uniform metrics.}
Bijoy, Mondal, and Chiplunkar define phases recursively.
The \emph{witness} of a phase of level~$1$ is its first request that the heavier servers, held at their points, do not serve; server~$1$ then serves it, and the phase lasts while these servers serve the requests.
A phase of level $\ell\ge2$ starts with an \emph{explore part} made of phases of level $\ell-1$, after which the $d_\ell-1$ points of $M_\ell$ named most often by the witnesses of the explore part form the \emph{critical set}, with $d_\ell=2^{5^{\ell-1}-1}$.
In its \emph{exploit part}, phases of level $\ell-1$ are run in parallel, one run for each critical point, with server~$\ell$ held at that point (\cref{fig:phase}).
An offline solution that moves server~$\ell$ pays its weight.
One that keeps server~$\ell$ still pays in many of the phases of level $\ell-1$, unless its heavier servers stand at points that many witnesses name: outside the critical set, server~$\ell$ serves few of the witnesses of the explore part, and at a critical point~$p$, it serves none of the witnesses of the run for~$p$.
At level $k$ there are no heavier servers, so once the weights are rounded so that consecutive ratios are integers at least~$2$, every offline solution pays at least a $2^{-k}$ fraction of the largest weight per completed phase of level $k$.

Their algorithm, like ours, simulates all runs of the exploit part and lets the real servers follow one run at a time.
It follows a uniformly random run until that run ends, then another one, and so on, so its cost bound for a phase of level $\ell$ is about $1+H_{d_\ell-1}$ times that for level $\ell-1$, where $H_n\approx\ln n$ is the $n$-th harmonic number; these factors multiply out to $\exp(O(k^2))$.
We instead draw the followed run with exponential weights on the costs that the runs have paid so far, delete a run when it ends, and change the followed run by a maximal coupling of consecutive distributions.
If the configurations of the $n$ runs and the starting configuration of the real servers lie in a set of diameter $\Delta$, the real servers move at most $\Delta+2\Delta\ln n$ plus twice the cost of any run that has not ended, in expectation (\cref{lem:aggregate}).

After rounding, the explore part and every run consist of $w_\ell/w_{\ell-1}$ phases of level $\ell-1$; so if such a phase costs at most $c_{\ell-1}$ times the weight of server~$\ell-1$, each of them costs at most $c_{\ell-1}$ times the weight of server~$\ell$, and $\Delta$ is at most twice that weight.
So the logarithm of the number of runs enters additively: a phase of level $\ell$ costs at most $c_\ell$ times the weight of server~$\ell$, where $c_\ell=3c_{\ell-1}+4\ln(d_\ell-1)+2$ collects $c_{\ell-1}$ for the explore part, $2c_{\ell-1}$ for the followed runs, and $2+4\ln(d_\ell-1)$ for switching.
Hence $c_k\le10\cdot5^{k-1}$, and with the factors $2^k$ from the offline bound and $2^{k-1}$ from rounding the weights, the ratio is $20^k$.
We also write out the phases and the offline lower bound of~\cite{BMC26} for generalized requests, for which the authors sketch the proofs, with the requests counted separately in each coordinate.

\begin{figure}[t]
\centering
\begin{tikzpicture}[x=1cm,y=1cm,
  lane/.style={black!40,line width=0.8pt},
  tick/.style={black!60,line width=0.6pt},
  master/.style={blue!70!black,line width=2.4pt,line cap=round},
  jump/.style={blue!70!black,line width=1.1pt,-{Stealth[length=2.2mm]}},
  pbox/.style={draw=black!45,fill=black!5,minimum height=4.6mm,minimum width=10.5mm,inner sep=0pt},
  bar/.style={fill=black!15,draw=black!40,line width=0.3pt},
  topbar/.style={fill=black!60,draw=black!60,line width=0.3pt},
  brace/.style={decorate,decoration={brace,amplitude=4pt,mirror},draw=black!60}]
  \foreach \i in {0,1,2,3} \node[pbox] at (0.65+1.15*\i,1.5) {};
  \draw[master] (0.1,1.5) -- (4.65,1.5);
  \draw[brace] (0.1,1.08) -- (4.65,1.08)
    node[midway,below=5pt,font=\small,align=center] {explore part:\\phases of level $\ell-1$};
  \foreach \i/\h in {0/0.55,1/0.18,2/0.78,3/0.32,4/0.66,5/0.12,6/0.48}
    \path[bar] (0.95+0.42*\i,2.2) rectangle ++(0.3,\h);
  \foreach \i/\h in {0/0.55,2/0.78,4/0.66,6/0.48}
    \path[topbar] (0.95+0.42*\i,2.2) rectangle ++(0.3,\h);
  \draw[black!50] (0.85,2.2) -- (3.9,2.2);
  \node[font=\small,anchor=south,align=center] at (2.4,3.02) {witnesses per point of $M_\ell$;\\the top $d_\ell-1$ form $S$};
  \draw[densely dashed,black!45] (4.95,0.55) -- (4.95,3.3);
  \foreach \y/\e/\lab in {2.75/8.2/p_1, 2.15/12.7/p_2, 1.5/9.6/p_3, 0.85/11.0/p_4} {
    \draw[lane] (5.6,\y) -- (\e,\y);
    \node[font=\small,black!80] at (\e,\y) {$\times$};
    \node[font=\small,anchor=south west,inner sep=1pt] at (5.05,\y+0.02) {$\lab$};
  }
  \foreach \x/\y in {6.0/2.75, 7.6/2.75,
                     6.4/2.15, 9.9/2.15, 11.7/2.15,
                     6.2/1.5, 6.45/1.5, 6.7/1.5, 6.95/1.5, 8.8/1.5,
                     6.8/0.85, 8.9/0.85, 10.4/0.85}
    \draw[tick] (\x,\y-0.11) -- (\x,\y+0.11);
  \draw[master] (4.65,1.5) -- (6.95,1.5);
  \draw[jump] (6.95,1.5) -- (6.95,2.65);
  \draw[master] (6.95,2.75) -- (8.2,2.75);
  \draw[jump] (8.2,2.75) -- (8.2,2.25);
  \draw[master] (8.2,2.15) -- (12.55,2.15);
  \draw[densely dashed,black!45] (12.95,0.55) -- (12.95,3.3);
  \node[font=\small,anchor=west,align=left] at (13.05,2.15) {phase\\stops};
  \draw[brace] (5.05,0.5) -- (12.85,0.5)
    node[midway,below=5pt,font=\small,align=center] {exploit part: for each $p\in S$, a copy keeps server $\ell$ at $p$\\and runs phases of level $\ell-1$ until they stop ($\times$)};
\end{tikzpicture}
\caption{A phase of level $\ell$, after~\cite{BMC26}, and the strategy of \cref{lem:strategy}, drawn with four copies.
The real servers (thick line) serve the explore part, and the $d_\ell-1$ points of $M_\ell$ named most often by its witnesses (bars) form the critical set $S$.
In the exploit part, a simulated copy for each $p\in S$ keeps server~$\ell$ at $p$ and reads the same requests; ticks mark the moves of the copies, and a copy is deleted when its phases stop ($\times$).
The real servers follow one copy at a time, drawn with weights that decay exponentially with the costs of the copies.
The followed copy changes by a maximal coupling of consecutive distributions, so the real servers switch when the followed copy stops and possibly when it moves.}
\label{fig:phase}
\end{figure}

\subsection{Related work}

The generalized $k$-server problem on uniform metrics is a metrical service system~\cite{CL91} on the product space, whose request sets are the sets of configurations that serve a request.
For the $k$-server problem, the work function algorithm is $(2k-1)$-competitive~\cite{KP95}, and a recent preprint of Coester, Koutsoupias, and Zbysi\'nski proves that it is $k$-competitive~\cite{CKZ26}.
For randomized $k$-server, mirror descent with a multiscale entropy gives $O(\log^2k)$ on hierarchically separated trees~\cite{BCLLM18}, and Bubeck, Coester, and Rabani~\cite{BCR23} showed that $\Omega(\log^2k)$ is needed on some metric spaces.

\section{Preliminaries}\label{sec:prelim}

Fix $k\ge1$ and pairwise disjoint uniform metric spaces $M_1,\dots,M_k$, where distinct points of $M_i$ are at distance $w_i\ge0$ and $w_1\le\dots\le w_k$.
In \cref{sec:upper,sec:lower} all weights are $1$.
A \emph{configuration} is a tuple $x$ in $\cX:=M_1\times\dots\times M_k$, whose coordinate $x_i$, also written $x[i]$, is the position of server $i$.
Moving the servers from $x$ to $y$ costs
\[
d_w(x,y)=\sum_{i=1}^kw_i\,\one[x_i\ne y_i];
\]
with unit weights this is the Hamming distance $d(x,y)=|\{i\in[k]:x_i\ne y_i\}|\le k$.
A request is a tuple $r\in\cX$, and a configuration $x$ \emph{serves} $r$ if $x_i=r_i$ for some $i$.
Let $G(r)$ be the set of configurations that serve~$r$.
Weighted $k$-server on a uniform metric $M'$ is the special case in which every $M_i$ is a copy of $M'$ and every request consists of the $k$ copies of one point of $M'$~\cite{BMC26}.

An online algorithm starts in a configuration $a$.
When request $r_t$ arrives, it moves to a configuration $X_t\in G(r_t)$, possibly $X_t=X_{t-1}$, and pays $d_w(X_{t-1},X_t)$.
An offline solution for $\rho=(r_1,\dots,r_T)$ is a sequence $o_0,o_1,\dots,o_T$ of configurations with $o_t\in G(r_t)$ for $t\ge1$.
Its cost is $\sum_{t=1}^Td_w(o_{t-1},o_t)$, and $\OPT(\rho)$ is the minimum cost of an offline solution with $o_0=a$.
A randomized online algorithm is \emph{$\gamma$-competitive} if there is a constant $c$, which may depend on the spaces, the weights and $a$, such that $\E[\ALG(\rho)]\le\gamma\cdot\OPT(\rho)+c$ for every finite request sequence $\rho$.
The expectation is over the random choices of the algorithm, and $\rho$ is fixed in advance; that is, the adversary is oblivious~\cite{BBKTW94}.

\paragraph{Repairs and lazy solutions.}
For a configuration $x$, a request $r$, and $i\in[k]$, let $R_i(x,r)$ be the configuration obtained from $x$ by moving server $i$ to $r_i$; that is, $R_i(x,r)_i=r_i$ and $R_i(x,r)_\ell=x_\ell$ for $\ell\ne i$.
If $x\notin G(r)$, then every $R_i(x,r)$ serves $r$ and $d(x,R_i(x,r))=1$; these $k$ configurations are the \emph{repairs} of $x$ for $r$.
An offline solution $\tilde o_0,\tilde o_1,\dots,\tilde o_T$ is \emph{lazy} if, for every $t$, it stays ($\tilde o_t=\tilde o_{t-1}$) when $\tilde o_{t-1}\in G(r_t)$, and otherwise moves to a repair of $\tilde o_{t-1}$ for~$r_t$.

\begin{lemma}\label[lemma]{lem:lazy}
With unit weights, let $o_0,\dots,o_T$ be an offline solution of cost $C$, and let $a$ be a configuration.
There is a lazy offline solution with $\tilde o_0=a$ that moves at most $C+d(a,o_0)$ times.
\end{lemma}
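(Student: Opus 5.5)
Construct the lazy solution greedily, tracking the set of coordinates in which it disagrees with the given solution. Let $D_t=\{i\in[k]:\tilde o_t[i]\ne o_t[i]\}$, so $|D_0|=d(a,o_0)$ with $\tilde o_0=a$. The potential is $\Phi_t=|D_t|$, and the goal is the amortized bound
\[
(\text{moves of }\tilde o\text{ at step }t)+\Phi_t-\Phi_{t-1}\le d(o_{t-1},o_t)
\]
for every $t$. Summing over $t$ and using $\Phi_T\ge0$ then gives at most $C+\Phi_0=C+d(a,o_0)$ moves.

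The lazy rule is forced except for the choice of repair. If $\tilde o_{t-1}\in G(r_t)$, then $\tilde o_t=\tilde o_{t-1}$ and nothing moves. Each coordinate $i$ with $o_{t-1}[i]\ne o_t[i]$ can enter $D$ at most once, so $\Phi_t-\Phi_{t-1}\le d(o_{t-1},o_t)$, and the inequality holds.

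If $\tilde o_{t-1}\notin G(r_t)$, use the fact that $o_t\in G(r_t)$: fix $i$ with $o_t[i]=r_t[i]$, and let $\tilde o_t=R_i(\tilde o_{t-1},r_t)$, which costs one move. Since $\tilde o_{t-1}[i]\ne r_t[i]=o_t[i]$, either $i\in D_{t-1}$ with $o_{t-1}[i]=o_t[i]$, or $o_{t-1}[i]\ne o_t[i]$.
\begin{itemize}
\item In the first case, coordinate $i$ leaves $D$ at no cost to $o$, while the other coordinates add at most $d(o_{t-1},o_t)$. The net change is at most $d(o_{t-1},o_t)-1$, which pays for the move.
\item In the second case, coordinate $i$ is fixed after the step, since $\tilde o_t[i]=o_t[i]$. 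It contributes $-1$ or $0$ to $\Phi_t-\Phi_{t-1}$ while contributing $1$ to $d(o_{t-1},o_t)$. The remaining coordinates contribute at most their own share of $d(o_{t-1},o_t)$. Hence move${}+\Delta\Phi\le d(o_{t-1},o_t)$ again.
\end{itemize}

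The only step requiring care is this case analysis on coordinate $i$. The point to get right is that $\tilde o_{t-1}[i]\ne o_t[i]$ is guaranteed by the failure of $\tilde o_{t-1}$, and this is exactly what makes the chosen repair coordinate either reduce the potential or be charged to a move of $o$.
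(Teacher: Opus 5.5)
Your proof is correct and takes essentially the paper's approach: the same greedy construction, which repairs a coordinate $i$ with $(o_t)_i=(r_t)_i$, and the same potential $d(\tilde o_t,o_t)$ with the amortized inequality (move indicator)${}+d(\tilde o_t,o_t)\le d(\tilde o_{t-1},o_{t-1})+d(o_{t-1},o_t)$. The paper verifies this inequality without your case split on coordinate $i$: it observes that a move lowers the distance to $o_t$ by exactly one, since $(\tilde o_{t-1})_i\ne(r_t)_i=(o_t)_i$, so the left side equals $d(\tilde o_{t-1},o_t)$, and then applies the triangle inequality.
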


\begin{proof}
Let $\tilde o_0=a$.
For $t\ge1$, if $\tilde o_{t-1}\in G(r_t)$, let $\tilde o_t=\tilde o_{t-1}$.
Otherwise choose $i$ with $(o_t)_i=(r_t)_i$, which exists because $o_t$ serves $r_t$, and let $\tilde o_t=R_i(\tilde o_{t-1},r_t)$.
Let $m_t\in\{0,1\}$ indicate whether the lazy solution moves at time $t$.
If $m_t=1$, then $(\tilde o_{t-1})_i\ne(r_t)_i=(o_t)_i=(\tilde o_t)_i$, and $\tilde o_t$ agrees with $\tilde o_{t-1}$ elsewhere, so $d(\tilde o_t,o_t)=d(\tilde o_{t-1},o_t)-1$.
Hence in both cases
\[
m_t+d(\tilde o_t,o_t)=d(\tilde o_{t-1},o_t)\le d(\tilde o_{t-1},o_{t-1})+d(o_{t-1},o_t).
\]
Summing over $t$ gives $\sum_tm_t\le d(\tilde o_0,o_0)+C-d(\tilde o_T,o_T)\le d(a,o_0)+C$.
\end{proof}

\section{An \texorpdfstring{$O(k\log k)$}{O(k log k)}-competitive algorithm for unit weights}\label{sec:upper}

\subsection{The weight update}

For a function $\phi:\cX\to[0,\infty)$ with finite support, write $\|\phi\|=\sum_x\phi(x)$ and $\phi(A)=\sum_{x\in A}\phi(x)$.
For a request $r$, define $\cT_r\phi:\cX\to[0,\infty)$ by
\begin{equation}\label{eq:update}
(\cT_r\phi)(y)=\one[y\in G(r)]\,\phi(y)+\frac1{2k}\sum_{x\notin G(r)}\ \sum_{i=1}^k\phi(x)\,\one[y=R_i(x,r)].
\end{equation}
Thus $\cT_r\phi$ keeps the weight of every configuration that serves $r$, and every configuration $x$ that fails $r$ passes $\phi(x)/(2k)$ to each of its $k$ repairs.
The map $\cT_r$ is linear, $\cT_r\phi$ is supported on $G(r)$, and
\begin{equation}\label{eq:mass}
\|\cT_r\phi\|=\phi(G(r))+\tfrac12\,\phi(\cX\setminus G(r)).
\end{equation}

Given the starting configuration $a$ and the requests $r_1,r_2,\dots$, let $\phi_0=\one_{\{a\}}$, $\phi_t=\cT_{r_t}\phi_{t-1}$, $\Phi_t=\|\phi_t\|$, and $\mu_t=\phi_t/\Phi_t$.
By~\eqref{eq:mass}, $\Phi_t\ge\Phi_{t-1}/2>0$.
By linearity, $\mu_t=\cT_{r_t}\mu_{t-1}/z_t$, where
\begin{equation}\label{eq:z}
z_t:=\frac{\Phi_t}{\Phi_{t-1}}=1-\frac{b_t}{2},\qquad b_t:=\mu_{t-1}(\cX\setminus G(r_t)).
\end{equation}
The support of $\phi_t$ lies in $\prod_{i=1}^k\bigl(\{a_i\}\cup\{(r_s)_i:s\le t\}\bigr)$, which is finite even if the spaces are infinite.
\Cref{fig:update} shows one update.

\subsection{The algorithm}

The algorithm computes $\mu_t$ from the requests seen so far, starts at $X_0=a$, and moves its servers as follows when request $r_t$ arrives.
\begin{enumerate}[(1)]
\item If $X_{t-1}\in G(r_t)$, it stays: $X_t=X_{t-1}$.
\item Otherwise, with probability $1/2$ it moves to $R_i(X_{t-1},r_t)$ for a uniformly random $i\in[k]$, and with probability $1/2$ it moves to a configuration drawn from $\mu_t$, independently of everything else.
\end{enumerate}
In every case $X_t\in G(r_t)$, because $\mu_t$ is supported on $G(r_t)$.

\begin{lemma}\label[lemma]{lem:follow}
For every fixed request sequence and every $t\ge0$, the configuration $X_t$ has distribution~$\mu_t$.
\end{lemma}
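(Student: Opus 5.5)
We argue by induction on $t$. The base case is immediate: $X_0=a$ and $\mu_0=\one_{\{a\}}$. For the inductive step, assume $X_{t-1}\sim\mu_{t-1}$. Note that $\mu_t$ is a deterministic function of the fixed request sequence, and the fresh sample in step~(2) is drawn independently of $X_{t-1}$. We then compute $\Pr[X_t=y]$ for every $y\in\cX$ by conditioning on $X_{t-1}=x$ and comparing with $\mu_t(y)=(\cT_{r_t}\mu_{t-1})(y)/z_t$, which follows from \eqref{eq:update} and \eqref{eq:z}.

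The computation splits into three contributions.

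\emph{Staying.} If $x\in G(r_t)$, then $X_t=x$. This contributes $\one[y\in G(r_t)]\,\mu_{t-1}(y)$.

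\emph{Repairing.} If $x\notin G(r_t)$, then with probability $\frac1{2k}$ for each $i$ we get $X_t=R_i(x,r_t)$. This contributes
\[
\frac1{2k}\sum_{x\notin G(r_t)}\sum_{i=1}^k\mu_{t-1}(x)\,\one[y=R_i(x,r_t)].
\]

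\emph{Resampling.} With probability $\frac12\Pr[X_{t-1}\notin G(r_t)]=\frac{b_t}{2}$, the new configuration is a fresh draw from $\mu_t$. By independence, this contributes $\frac{b_t}{2}\mu_t(y)$.

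The first two terms together are exactly $(\cT_{r_t}\mu_{t-1})(y)$, which equals $z_t\mu_t(y)$ by linearity of $\cT_{r_t}$. Hence
\[
\Pr[X_t=y]=z_t\mu_t(y)+\tfrac{b_t}{2}\mu_t(y)=\Bigl(1-\tfrac{b_t}2+\tfrac{b_t}2\Bigr)\mu_t(y)=\mu_t(y),
\]
using $z_t=1-b_t/2$ from \eqref{eq:z}.

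The only subtle point is justifying the resampling term. One needs that the event of resampling depends only on $X_{t-1}$ and an independent coin, and that the draw from $\mu_t$ is independent of both. Then the joint probability factors as $\Pr[\text{resample}]\cdot\mu_t(y)$, and this factorization holds precisely because $\mu_t$ is determined by the fixed requests and not by the algorithm's randomness. Everything else is bookkeeping with \eqref{eq:update}. Since all supports are finite, the sums are finite even when the spaces are infinite.
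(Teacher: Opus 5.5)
Your proof is correct and follows the paper's argument: induction on $t$, with the staying and repair branches together giving $(\cT_{r_t}\mu_{t-1})(y)=z_t\mu_t(y)$, and the resampling branch, taken with probability $b_t/2$, contributing $\frac{b_t}{2}\mu_t(y)$. The two sum to $\mu_t(y)$ because $z_t=1-b_t/2$.
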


\begin{proof}
We use induction on $t$; the case $t=0$ holds because $X_0=a$.
Suppose that $X_{t-1}$ has distribution $\mu_{t-1}$, and write $r=r_t$.
Case~(1) and the repair branch of case~(2) together put probability
\[
\one[y\in G(r)]\,\mu_{t-1}(y)+\sum_{x\notin G(r)}\mu_{t-1}(x)\cdot\frac12\cdot\frac1k\sum_{i=1}^k\one[y=R_i(x,r)]=(\cT_r\mu_{t-1})(y)=z_t\,\mu_t(y)
\]
on each configuration $y$.
The resampling branch of case~(2) is taken with total probability $b_t/2$, and then lands at $y$ with probability $\mu_t(y)$.
So $\Pr[X_t=y]=(z_t+b_t/2)\,\mu_t(y)=\mu_t(y)$ by~\eqref{eq:z}.
\end{proof}

\begin{lemma}\label[lemma]{lem:movement}
For every fixed request sequence and every $t\ge1$, $\E[d(X_{t-1},X_t)]\le(k+1)\ln(\Phi_{t-1}/\Phi_t)$.
\end{lemma}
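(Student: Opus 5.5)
We compute the expected movement directly from the two branches of the algorithm and compare it with $\ln(1/z_t)$, using \eqref{eq:z}.
The servers move only in case~(2), which happens exactly when $X_{t-1}\notin G(r_t)$.
By \cref{lem:follow}, $X_{t-1}$ has distribution $\mu_{t-1}$, so this event has probability $b_t$.
Given that it occurs, the repair branch (probability $1/2$) moves exactly one server and costs $1$.
The resampling branch (probability $1/2$) moves to a point of $\mu_t$, and any two configurations are at Hamming distance at most $k$, so it costs at most $k$.
Hence
\[
\E[d(X_{t-1},X_t)]\le b_t\Bigl(\tfrac12\cdot 1+\tfrac12\cdot k\Bigr)=\frac{(k+1)\,b_t}{2}.
\]

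It remains to compare $b_t/2$ with $\ln(\Phi_{t-1}/\Phi_t)$.
By \eqref{eq:z}, $\Phi_{t-1}/\Phi_t=1/z_t=1/(1-b_t/2)$.
Since $-\ln(1-u)\ge u$ for $u\in[0,1)$, and here $u=b_t/2\le1/2$, we get
\[
\ln(\Phi_{t-1}/\Phi_t)=-\ln(1-b_t/2)\ge b_t/2.
\]
Combining the two displays gives $\E[d(X_{t-1},X_t)]\le(k+1)\ln(\Phi_{t-1}/\Phi_t)$, as claimed.

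No step here is a real obstacle.
The only points needing care are that $b_t$ is the probability of case~(2), which is exactly what \cref{lem:follow} provides, and that the trivial diameter bound $k$ suffices for the resampling branch.
A sharper coupling of $X_{t-1}$ with the fresh sample is not needed, since the constant $k+1$ already absorbs this crude bound.
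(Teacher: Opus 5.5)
Your proof is correct and follows the paper's argument exactly: \cref{lem:follow} gives probability $b_t$ for case~(2), the two branches cost $1$ and at most $k$, and $b_t/2=1-z_t\le\ln(1/z_t)=\ln(\Phi_{t-1}/\Phi_t)$ closes the bound.
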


\begin{proof}
The servers move only in case~(2), which by \cref{lem:follow} has probability $b_t$.
In that case the repair branch costs $1$ and the resampling branch costs at most $k$.
Hence $\E[d(X_{t-1},X_t)]\le b_t(\frac12+\frac k2)=(k+1)(1-z_t)\le(k+1)\ln(1/z_t)$, using $1-z\le-\ln z$ for $z\in(0,1]$.
\end{proof}

\begin{lemma}\label[lemma]{lem:weight}
If $\tilde o_0=a,\tilde o_1,\dots,\tilde o_T$ is a lazy offline solution that moves $m$ times, then $\Phi_T\ge\phi_T(\tilde o_T)\ge(2k)^{-m}$.
\end{lemma}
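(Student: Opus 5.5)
We prove the stronger statement $\phi_t(\tilde o_t)\ge(2k)^{-m_t}$ for every $t\le T$, by induction on $t$, where $m_t$ is the number of moves of the lazy solution among its first $t$ steps. The first inequality $\Phi_T\ge\phi_T(\tilde o_T)$ then follows because $\phi_T$ is nonnegative and $\Phi_T=\|\phi_T\|$ is the sum of its values. The base case $t=0$ holds because $\phi_0=\one_{\{a\}}$ and $\tilde o_0=a$, so $\phi_0(\tilde o_0)=1=(2k)^{0}$.

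For the inductive step, fix $t\ge1$, write $r=r_t$, and split according to the two cases in the definition of a lazy solution.

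\emph{Case $\tilde o_{t-1}\in G(r)$.} Here $\tilde o_t=\tilde o_{t-1}$ and $m_t=m_{t-1}$. By~\eqref{eq:update}, the first term gives $(\cT_r\phi_{t-1})(\tilde o_t)\ge\phi_{t-1}(\tilde o_{t-1})$, since the second term is nonnegative.

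\emph{Case $\tilde o_{t-1}\notin G(r)$.} Here $\tilde o_t=R_i(\tilde o_{t-1},r)$ for some $i$, and $m_t=m_{t-1}+1$. In the double sum of~\eqref{eq:update}, keep only the term with $x=\tilde o_{t-1}$ and this index $i$, and drop all other terms, which are nonnegative. This gives $\phi_t(\tilde o_t)\ge\phi_{t-1}(\tilde o_{t-1})/(2k)$.

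In both cases the inductive hypothesis $\phi_{t-1}(\tilde o_{t-1})\ge(2k)^{-m_{t-1}}$ yields $\phi_t(\tilde o_t)\ge(2k)^{-m_t}$. Taking $t=T$ completes the proof.

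The argument has no real obstacle. The only point to check is that dropping terms from~\eqref{eq:update} is legitimate, which holds because all weights are nonnegative. In particular, the first term vanishes when $\tilde o_t$ fails $r$, but $\tilde o_t$ is a repair and so serves $r$, and in any case we only ever need a lower bound.
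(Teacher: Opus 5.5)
Your proof is correct and follows the paper's argument: induction on $t$, using the nonnegativity of every term in~\eqref{eq:update} to get $\phi_t(\tilde o_t)\ge\phi_{t-1}(\tilde o_{t-1})$ when the lazy solution stays and $\phi_t(\tilde o_t)\ge\phi_{t-1}(\tilde o_{t-1})/(2k)$ when it moves to a repair, starting from $\phi_0(a)=1$. Your version just spells out which term of the double sum is kept and why $\Phi_T\ge\phi_T(\tilde o_T)$.
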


\begin{proof}
All terms in~\eqref{eq:update} are nonnegative.
So if $\tilde o_t=\tilde o_{t-1}\in G(r_t)$, then $\phi_t(\tilde o_t)\ge\phi_{t-1}(\tilde o_{t-1})$, and if $\tilde o_t$ is a repair of $\tilde o_{t-1}\notin G(r_t)$, then $\phi_t(\tilde o_t)\ge\phi_{t-1}(\tilde o_{t-1})/(2k)$.
Since $\phi_0(a)=1$, the claim follows by induction on $t$.
\end{proof}

\begin{proof}[Proof of \cref{thm:upper}]
Fix a finite request sequence $\rho=(r_1,\dots,r_T)$.
Summing \cref{lem:movement} over $t$ and using $\Phi_0=1$ gives $\E[\ALG(\rho)]\le(k+1)\ln(1/\Phi_T)$.
Apply \cref{lem:lazy} to an optimal offline solution starting from $a$: there is a lazy solution from $a$ that moves at most $\OPT(\rho)$ times.
By \cref{lem:weight}, $\Phi_T\ge(2k)^{-\OPT(\rho)}$.
Hence $\E[\ALG(\rho)]\le(k+1)\ln(2k)\cdot\OPT(\rho)$.
\end{proof}

\begin{remark}
The analysis charges the movement of the algorithm to the moves of a single lazy offline solution over the whole sequence, rather than phase by phase.
Against an offline solution that starts from another configuration $b$, the same argument gives $\E[\ALG(\rho)]\le(k+1)\ln(2k)\,(C+d(a,b))$, where $C$ is the cost of that solution, so the additive term is at most $k(k+1)\ln(2k)$.
\Cref{lem:follow} uses that the request sequence is fixed in advance; the guarantee is against an oblivious adversary.
Passing a fraction $\theta/k$ of the weight to each repair, for some $\theta\in(0,1)$, and taking the repair branch with probability $\theta$ gives the ratio $(k+\frac{\theta}{1-\theta})\ln\frac k\theta$ in the same way; for $k\ge2$ and $\frac{\theta}{1-\theta}=\sqrt{k/\ln k}$, this is at most $k\ln k+2\sqrt{k\ln k}+1$.
\end{remark}

\subsection{Requests that specify sets}

Suppose now that a request specifies a finite set $P_i\subseteq M_i$ for every $i\in[k]$, with $1\le\sum_i|P_i|\le B$, and that a configuration $x$ serves it if $x_i\in P_i$ for some $i$.
The repairs of a configuration $x$ that fails the request are the $\sum_i|P_i|$ configurations obtained from $x$ by moving one server $i$ to one point of $P_i$; they are distinct, at distance $1$ from $x$, and serve the request.
We run the algorithm of \cref{sec:upper} with these repairs: a configuration that fails a request passes the same fraction $1/(2\sum_i|P_i|)$ of its weight to each of its repairs, and the repair branch of case~(2) moves to a uniformly random repair.

\begin{theorem}\label{thm:sets}
For requests that specify sets of at most $B$ points in total, this algorithm satisfies $\E[\ALG(\rho)]\le(k+1)\ln(2B)\cdot\OPT(\rho)$ for every finite request sequence $\rho$.
\end{theorem}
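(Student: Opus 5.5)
The plan is to repeat the four-lemma argument of \cref{sec:upper} with the new repair sets. For a request $r=(P_1,\dots,P_k)$ with $n_r:=\sum_i|P_i|\le B$, define $G(r)=\{x:x_i\in P_i\text{ for some }i\}$. The update becomes
\[
(\cT_r\phi)(y)=\one[y\in G(r)]\,\phi(y)+\frac1{2n_r}\sum_{x\notin G(r)}\sum_{i=1}^k\sum_{p\in P_i}\phi(x)\,\one[y=x^{i\to p}],
\]
where $x^{i\to p}$ is $x$ with server $i$ moved to $p$. Each failing $x$ has exactly $n_r$ repairs, and they are distinct. The update therefore still sends exactly half of the failing mass to $G(r)$, so \eqref{eq:mass} and \eqref{eq:z} hold verbatim. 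The support of $\phi_t$ lies in the product of $\{a_i\}$ with the finitely many requested points, so it stays finite.

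\textbf{Step 1: the analogues of \cref{lem:follow,lem:movement}.} In \cref{lem:follow} I replace the uniform choice over $k$ repairs by the uniform choice over $n_r$ repairs. The repair branch then contributes $\frac12\cdot\frac1{n_r}$ per repair, which is exactly the coefficient in $\cT_r$, and the resampling branch again restores the missing mass $b_t/2$. In \cref{lem:movement} the repair branch still costs $1$ and the resampling branch at most $k$. Both bounds are independent of $B$, so the factor $(k+1)$ survives unchanged.

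\textbf{Step 2: lazy solutions.} A solution is lazy if it stays when it serves the request and otherwise moves one server $i$ to a point of $P_i$. I redo \cref{lem:lazy}: when $\tilde o_{t-1}$ fails $r_t$, pick $i$ with $(o_t)_i\in P_i$ and set $\tilde o_t=\tilde o_{t-1}^{\,i\to(o_t)_i}$. Since $(\tilde o_{t-1})_i\notin P_i$, we have $(\tilde o_{t-1})_i\ne(o_t)_i$, so the Hamming distance to $o_t$ drops by one. The telescoping argument then goes through unchanged, giving at most $C+d(a,o_0)$ moves, and hence at most $\OPT(\rho)$ moves for an optimal solution from $a$.

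\textbf{Step 3: the analogue of \cref{lem:weight} and the conclusion.} Each lazy move keeps at least a fraction $1/(2n_r)\ge1/(2B)$ of the weight, so $\Phi_T\ge(2B)^{-\OPT(\rho)}$. Summing Step 1 over $t$ gives $\E[\ALG(\rho)]\le(k+1)\ln(1/\Phi_T)\le(k+1)\ln(2B)\,\OPT(\rho)$.

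The only step needing genuine care is Step 2. It relies on the chosen move being a legitimate repair that strictly reduces the distance to $o_t$, and that works because a failing coordinate lies outside $P_i$ while $(o_t)_i$ lies inside it. The rest is bookkeeping.
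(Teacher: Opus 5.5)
Your proposal is correct and follows the paper's proof essentially step for step. The paper likewise reuses \cref{lem:follow,lem:movement} unchanged, redoes \cref{lem:lazy} by moving server $i$ to $(o_t)_i\in P_i$ (which lowers the distance to $o_t$ by one because the lazy configuration fails the request), and applies \cref{lem:weight} with $2B$ in place of $2k$.
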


\begin{proof}
\Cref{lem:follow,lem:movement} hold with the same proofs.
In the proof of \cref{lem:lazy}, choose $i$ with $(o_t)_i\in P_i$ and move server $i$ of the lazy solution to $(o_t)_i$; since the lazy solution fails the request, this decreases its distance to $o_t$ by one, so the lemma holds.
Each move of a lazy solution keeps at least a $1/(2B)$ fraction of its weight, so \cref{lem:weight} holds with $2B$ in place of $2k$, and the proof of \cref{thm:upper} gives the bound.
\end{proof}

In paging with heterogeneous cache slots~\cite{CHLPRSY23}, a cache has $k$ slots, initially empty, each request names a page $p$ and a set $S$ of slots, and it is served if a slot in $S$ holds $p$; loading a page into a slot costs $1$, and emptying a slot is free.
Taking every $M_i$ to be the set of pages together with a point for an empty slot, $P_i=\{p\}$ for $i\in S$, and $P_i=\emptyset$ otherwise gives $B\le k$; since an offline solution never needs to empty a slot, the costs agree, and \cref{thm:sets} gives the ratio $(k+1)\ln(2k)$.
The previous bound was $O(k^2\log k)$, through a reduction to generalized $k$-server~\cite{CHLPRSY23,BJS19}, and the lower bound is $\Omega(k)$~\cite{CHLPRSY23}.
In metrical service systems with $k$ servers on a uniform metric~\cite{CV15}, a request is a set $R$ of at most $\ell$ points, served if some server is at a point of $R$.
Labeling the servers changes no cost, and taking $P_i=R$ for every $i$ gives $B\le k\ell$ and the ratio $(k+1)\ln(2k\ell)$; for $\ell\ge2$ the previous bound was $O(k^3\log\ell)$, and the lower bound is $\Omega(\log(k\ell))$~\cite{CV15}.

\section{An \texorpdfstring{$\Omega(k\log k)$}{Omega(k log k)} lower bound for unit weights}\label{sec:lower}

The proof follows the plan in \cref{sec:techniques}.
\Cref{sec:perm,sec:moments} set up random permutation constraints and compute two moments of their number of solutions, and \cref{sec:components} shows that for some fixed system of constraints, the solutions are numerous and split into components at distance more than $\beta k$ from each other, each of which varies in fewer than $\eta k$ coordinates and has fewer than $2^k$ configurations (\cref{lem:fixed,lem:structure}).
\Cref{sec:elim} shows that a request that forbids one position to one server, drawn among those that the hidden permutation serves, eliminates a given component without it with probability at least $(\beta-\eta)/(k-1)$; \cref{sec:posterior} shows that the hidden permutation is uniform on the feasible configurations and that the expected inverse of their number stays small, and \cref{sec:adversary} assembles the adversary.

Throughout this section $k\ge2$, and every space $M_i$ is $[k]=\{1,\dots,k\}$ with unit distances, so $\cX=[k]^k$.
Statements ``for large $k$'' hold for all $k\ge k_0$, for an absolute constant $k_0$.
We fix the constants
\[
\varepsilon=\tfrac1{10},\qquad\alpha=1-\varepsilon,\qquad\delta=\tfrac1{25},\qquad\eta=\tfrac2{25},\qquad\beta=\tfrac{77}{100},\qquad h=\lfloor\alpha\ln k\rfloor,
\]
and write $N=k!$ and $p=k!/k^k$.
A configuration $x$ is a \emph{permutation} if $x_1,\dots,x_k$ are distinct.
Let $\Perm\subseteq\cX$ be the set of permutations, so $|\Perm|=N$.
For $x\in\cX$ and $I\subseteq[k]$, write $x(I)=\{x_\ell:\ell\in I\}$.
We use the bounds $(k/e)^k\le k!\le ek(k/e)^k$, which give
\begin{equation}\label{eq:stirling}
k\ln k-k\le\ln N\le k\ln k-k+\ln k+1,\qquad -k\le\ln p\le-k+\ln k+1.
\end{equation}
We also use $\alpha\ln k-1<h\le\alpha\ln k$.

\subsection{Permutation constraints}\label{sec:perm}

A \emph{family} is a $k$-tuple $\pi=(\pi_1,\dots,\pi_k)$ of bijections of $[k]$.
For $x\in\cX$ let $\pi(x)=(\pi_1(x_1),\dots,\pi_k(x_k))$.
The family \emph{accepts} a permutation $x$ if $\pi(x)$ is a permutation.
A \emph{system} is an $h$-tuple $F=(\pi^1,\dots,\pi^h)$ of families.
Let $Q(F)$ be the set of permutations that every family of $F$ accepts, and let $Z(F)=|Q(F)|$.

\begin{lemma}\label[lemma]{lem:enforce}
A configuration serves all of the $k+hk$ requests
\[
(a,a,\dots,a)\quad(a\in[k]),\qquad r^{j,a}=\bigl((\pi^j_1)^{-1}(a),\dots,(\pi^j_k)^{-1}(a)\bigr)\quad(j\in[h],\ a\in[k])
\]
if and only if it lies in $Q(F)$.
\end{lemma}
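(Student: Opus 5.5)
The plan is to treat the two kinds of requests separately and show that each kind carves out exactly one of the two conditions defining $Q(F)$: the diagonal requests force a permutation, and the requests $r^{j,a}$ force acceptance by $\pi^j$.

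\emph{Diagonal requests.} A configuration $x$ serves $(a,\dots,a)$ if and only if $x_i=a$ for some $i$, that is, $a\in x([k])$. So $x$ serves all $k$ of these requests if and only if $x([k])=[k]$. Since $x$ has only $k$ coordinates, this holds exactly when $x_1,\dots,x_k$ are distinct, i.e.\ $x\in\Perm$.

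\emph{Constraint requests.} Fix $j$ and $a$. Since $\pi^j_i$ is a bijection, $x_i=(\pi^j_i)^{-1}(a)$ holds if and only if $\pi^j_i(x_i)=a$. Hence $x$ serves $r^{j,a}$ if and only if $a$ is a coordinate of $\pi^j(x)$. So $x$ serves all $r^{j,a}$ with $a\in[k]$ if and only if $\pi^j(x)$ takes all $k$ values, which by the same counting happens exactly when $\pi^j(x)$ is a permutation.

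\emph{Combining.} A configuration serves all $k+hk$ requests if and only if it is a permutation and, for every $j\in[h]$, $\pi^j(x)$ is a permutation. This is precisely the statement that every family of $F$ accepts $x$, i.e.\ $x\in Q(F)$. The constraint requests alone already force $x$ to be a permutation, since $\pi^j(x)$ is a permutation exactly when $x$ is (each $\pi^j_i$ is injective), so the diagonal requests matter only in the degenerate case $h=0$.

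There is no real obstacle here. The only step needing care is the pigeonhole equivalence between "covers all of $[k]$" and "has distinct coordinates" for a $k$-tuple, together with noting that the bijections $\pi^j_i$ can be inverted coordinatewise.
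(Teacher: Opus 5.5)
Your main argument is correct and is essentially the paper's proof: the diagonal requests cut out exactly $\Perm$, and for each $j$ the requests $r^{j,a}$, $a\in[k]$, say exactly that $\pi^j(x)$ takes every value in $[k]$, i.e.\ that $\pi^j(x)\in\Perm$.

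Your closing remark, however, is false, and you should delete it. The remark claims that the constraint requests alone force $x$ to be a permutation, so that the diagonal requests matter only when $h=0$. The bijections $\pi^j_1,\dots,\pi^j_k$ differ from coordinate to coordinate, so their injectivity says nothing about distinctness across coordinates. Here is a counterexample. Take $k=2$, $\pi_1=\mathrm{id}$, and let $\pi_2$ be the transposition. The non-permutation $x=(1,1)$ has $\pi(x)=(1,2)\in\Perm$, and it serves both $r^{1}=(1,2)$ and $r^{2}=(2,1)$; it fails only the diagonal request $(2,2)$.

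More generally, $x\mapsto\pi^j(x)$ is a bijection of $\cX$, and under the uniform law only an expected $p$-fraction of the $k!$ configurations it maps into $\Perm$ are permutations. The diagonal requests are therefore needed for every $h$. The paper relies on this: the adversary uses that everything in $Q(F)$ is a permutation, for instance when it applies \cref{lem:forbid}. The proof of the lemma itself stands without that sentence.
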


\begin{proof}
A configuration $x$ serves $(a,\dots,a)$ if and only if $a\in\{x_1,\dots,x_k\}$, so $x$ serves all these requests if and only if $x\in\Perm$.
It serves $r^{j,a}$ if and only if $\pi^j_i(x_i)=a$ for some $i$, so it serves $r^{j,a}$ for all $a$ if and only if $\pi^j(x)$ takes every value in $[k]$, that is, $\pi^j(x)\in\Perm$.
\end{proof}

We call these requests the \emph{setup requests} of $F$.

\subsection{Two moments}\label{sec:moments}

Let $\Pr_0$ and $\E_0$ refer to a uniformly random system, in which all $hk$ bijections $\pi^j_i$ are independent and uniform.

\begin{lemma}\label[lemma]{lem:first}
$\E_0Z=Np^h$, and for large $k$, $\ \varepsilon k\ln k-k\le\ln\E_0Z\le\varepsilon k\ln k+k$.
\end{lemma}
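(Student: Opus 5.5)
By linearity of expectation over permutations, I will write $Z=\sum_{x\in\Perm}\prod_{j=1}^h\one[\pi^j(x)\in\Perm]$.

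Fix a permutation $x$. Since the bijections $\pi^j_i$ are independent and uniform, the values $\pi^j_i(x_i)$ for $i\in[k]$ are independent and uniform on $[k]$. So $\pi^j(x)$ is a uniformly random element of $\cX=[k]^k$, and it is a permutation with probability $k!/k^k=p$. The families $\pi^1,\dots,\pi^h$ are independent, hence $\Pr_0[x\in Q(F)]=p^h$. Summing over the $N$ permutations gives $\E_0Z=Np^h$.

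For the estimate I take logarithms, $\ln\E_0Z=\ln N+h\ln p$, and apply \eqref{eq:stirling} together with $\alpha\ln k-1<h\le\alpha\ln k$.

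For the lower bound, $\ln p\ge-k$ and $h\le\alpha\ln k$ give $h\ln p\ge-\alpha k\ln k$. With $\ln N\ge k\ln k-k$ this yields
\[
\ln\E_0Z\ge(1-\alpha)k\ln k-k=\varepsilon k\ln k-k .
\]

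For the upper bound, $\ln p\le-k+\ln k+1<0$ and $h>\alpha\ln k-1$ give
\[
h\ln p\le(\alpha\ln k-1)(-k+\ln k+1)\le-\alpha k\ln k+k+\alpha\ln k(\ln k+1).
\]
This step uses $\ln p<0$ and $\alpha\ln k-1\ge0$, the latter for large $k$. Adding $\ln N\le k\ln k-k+\ln k+1$ gives
\[
\ln\E_0Z\le\varepsilon k\ln k+\alpha(\ln k)^2+(1+\alpha)\ln k+1 .
\]
The remaining terms are $O((\ln k)^2)\le k$ for large $k$, which gives the claimed bound $\varepsilon k\ln k+k$.

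There is no real obstacle. The only care needed is the sign bookkeeping when multiplying the bounds on $h$ and on $\ln p$: because $\ln p$ is negative, the lower bound on $h$ must be paired with the upper bound on $\ln p$, and vice versa.
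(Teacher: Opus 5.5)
Your proof is correct and follows the paper's argument: the same independence computation gives $\E_0Z=Np^h$, and both bounds come from substituting \eqref{eq:stirling} and $\alpha\ln k-1<h\le\alpha\ln k$ into $\ln N+h\ln p$. You reach the same final upper bound $\varepsilon k\ln k+\ln k+1+\alpha\ln k\,(\ln k+1)$ as the paper; your way of pairing the bounds on $h$ and $\ln p$ differs only trivially from theirs.
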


\begin{proof}
For a fixed $x\in\Perm$, the values $\pi^j_1(x_1),\dots,\pi^j_k(x_k)$ are independent and uniform on $[k]$, so $\pi^j$ accepts $x$ with probability $p$, independently over $j$.
Hence $\E_0Z=Np^h$.
By~\eqref{eq:stirling} and $h\le\alpha\ln k$, $\ln\E_0Z=\ln N+h\ln p\ge k\ln k-k-\alpha k\ln k=\varepsilon k\ln k-k$.
By~\eqref{eq:stirling} and $\alpha\ln k-1<h\le\alpha\ln k$,
\[
\ln\E_0Z\le k\ln k-k+\ln k+1-(\alpha\ln k-1)k+\alpha\ln k\,(\ln k+1)=\varepsilon k\ln k+\ln k+1+\alpha\ln k\,(\ln k+1),
\]
which is at most $\varepsilon k\ln k+k$ for large $k$.
\end{proof}

\begin{lemma}\label[lemma]{lem:residual}
Fix $x\in\Perm$, and let $\pi$ be a uniformly random family conditioned on accepting $x$.
Then $v=\pi(x)$ is a uniformly random permutation.
Conditioned on $v$, the restrictions of $\pi_1,\dots,\pi_k$ to $[k]\setminus\{x_1\},\dots,[k]\setminus\{x_k\}$ are independent, and the restriction of $\pi_i$ is a uniformly random bijection from $[k]\setminus\{x_i\}$ onto $[k]\setminus\{v_i\}$.
\end{lemma}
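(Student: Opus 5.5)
The plan is to compute the joint law of $(v,\text{restrictions})$ directly by counting, since everything is uniform and finite.

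\emph{Decomposition.} For a fixed $x\in\Perm$, a family $\pi$ is determined by two pieces of data: the vector $v=\pi(x)=(\pi_1(x_1),\dots,\pi_k(x_k))\in\cX$, and, for each $i$, the restriction $\rho_i$ of $\pi_i$ to $[k]\setminus\{x_i\}$. This restriction is a bijection from $[k]\setminus\{x_i\}$ onto $[k]\setminus\{v_i\}$. Conversely, any vector $v\in\cX$ together with any bijections $\rho_i:[k]\setminus\{x_i\}\to[k]\setminus\{v_i\}$ determines a unique family. So the map
\[
\pi\longmapsto\bigl(v,(\rho_1,\dots,\rho_k)\bigr)
\]
is a bijection between families and such tuples. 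For every $v$, the number of choices of the $\rho_i$ is $((k-1)!)^k$, independent of $v$.

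\emph{Unconditioned law.} Under the uniform law on families, every tuple has probability $1/(k!)^k$. It follows that $v$ is uniform on $\cX$. Given $v$, the tuple $(\rho_1,\dots,\rho_k)$ is uniform on the product set $\prod_i\mathrm{Bij}([k]\setminus\{x_i\},[k]\setminus\{v_i\})$. A uniform law on a product set is a product of uniform laws, so the $\rho_i$ are independent and each is uniform on its factor.

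\emph{Conditioning.} The event that $\pi$ accepts $x$ is the event $v\in\Perm$, and it depends only on $v$. Conditioning on it therefore makes $v$ uniform on $\Perm$. The conditional law of the $\rho_i$ given $v$ is unchanged: it is still the product of uniform bijections described above. This gives all three claims of the lemma.

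No step is a real obstacle. The only point needing care is that the number of completions $((k-1)!)^k$ does not depend on $v$, and this is exactly what makes $v$ uniform and the conditional law given $v$ a product law.
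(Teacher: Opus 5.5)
Your proposal is correct and follows the paper's proof: it counts the $((k-1)!)^k$ families with $\pi(x)=v$ as independent per-coordinate choices of bijections sending $x_i$ to $v_i$, and uses that this count does not depend on $v$. The only difference is that you first derive the unconditioned law, with $v$ uniform on $\cX$, and then condition on the event $v\in\Perm$, which depends only on $v$.
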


\begin{proof}
For each $v\in\Perm$, the families $\pi$ with $\pi(x)=v$ are obtained by choosing, independently for each $i$, a bijection of $[k]$ that maps $x_i$ to $v_i$.
There are $((k-1)!)^k$ of them, a number that does not depend on $v$.
\end{proof}

\begin{lemma}\label[lemma]{lem:pair}
Let $x,y\in\Perm$ differ in exactly $d$ coordinates, and let $D_d$ be the number of derangements of $d$ elements ($D_0=1$, $D_1=0$).
For a uniformly random family $\pi$, the probability that $\pi$ accepts $y$, conditioned on $\pi$ accepting $x$, is $D_d/(k-1)^d$.
\end{lemma}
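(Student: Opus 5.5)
By \cref{lem:residual}, conditioned on $\pi$ accepting $x$, the permutation $v=\pi(x)$ is uniform. Given $v$, each restriction of $\pi_i$ to $[k]\setminus\{x_i\}$ is an independent uniform bijection onto $[k]\setminus\{v_i\}$. We compute the probability that $\pi(y)$ is a permutation by conditioning on $v$ and then averaging.

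Let $D=\{i:x_i\ne y_i\}$, with $|D|=d$. For $i\notin D$ we have $\pi_i(y_i)=\pi_i(x_i)=v_i$, so these coordinates of $\pi(y)$ are fixed and distinct. For $i\in D$, the value $u_i:=\pi_i(y_i)$ is uniform on $[k]\setminus\{v_i\}$, and these values are independent over $i\in D$ by \cref{lem:residual}. Thus $\pi(y)$ is a permutation if and only if $(u_i)_{i\in D}$ is a bijection from $D$ onto the remaining set of values $\{v_i:i\in D\}$. There is one more constraint: $u_i\ne v_i$ for every $i\in D$, since $\pi_i$ is injective and $y_i\ne x_i$.

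So we must count maps $u$ on $D$ with values in $[k]$ that form a bijection onto $V_D:=\{v_i:i\in D\}$ with $u_i\ne v_i$. Identifying $D$ with $V_D$ through $i\mapsto v_i$, these are exactly the derangements of a $d$-element set, and there are $D_d$ of them. Each such assignment has probability $\prod_{i\in D}\frac1{k-1}=(k-1)^{-d}$, because each $u_i$ is uniform over $k-1$ values independently. The conditional probability given $v$ is therefore $D_d/(k-1)^d$. Since this does not depend on $v$, averaging over $v$ gives the claim.

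The step needing the most care is the independence and uniformity of the $u_i$ for $i\in D$. It follows because $y_i$ lies in the domain $[k]\setminus\{x_i\}$ of the restricted bijection, whose image of a single point is uniform on $[k]\setminus\{v_i\}$. The edge cases $d=0$ (probability $1=D_0$) and $d=1$ (impossible, since permutations cannot differ in exactly one coordinate, consistent with $D_1=0$) are covered by the same computation.
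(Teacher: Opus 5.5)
Your proposal is correct and follows essentially the same route as the paper: condition on $v=\pi(x)$ via \cref{lem:residual}, note that $\pi(y)$ agrees with $v$ outside the differing coordinates, and identify the successful assignments of the independent uniform values $\pi_i(y_i)\in[k]\setminus\{v_i\}$ with the $D_d$ derangements, each of probability $(k-1)^{-d}$. The only cosmetic difference is that the condition $u_i\ne v_i$, which you list as an extra constraint, is already built into the support of $u_i$; this does not affect the count.
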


\begin{proof}
Let $J=\{i:y_i\ne x_i\}$, and use \cref{lem:residual}.
For $i\notin J$ we have $\pi_i(y_i)=v_i$.
For $i\in J$, the value $\pi_i(y_i)$ is uniform on $[k]\setminus\{v_i\}$, independently over $i\in J$.
Since $\pi(y)$ agrees with the permutation $v$ outside $J$, it is a permutation if and only if the values $\pi_i(y_i)$, $i\in J$, are distinct and form the set $v(J)$.
Writing $\pi_i(y_i)=v_{f(i)}$, this says that $f$ is a bijection of $J$ with $f(i)\ne i$ for all $i\in J$, that is, a derangement of $J$.
Each of the $D_d$ derangements has probability $(k-1)^{-d}$.
\end{proof}

\begin{lemma}\label[lemma]{lem:second}
For $0\le d\le k$, let $Z_d$ be the number of ordered pairs of configurations in $Q(F)$ that differ in exactly $d$ coordinates, and let
\[
s_d=\binom kdD_d\Bigl(\frac{D_d}{(k-1)^d}\Bigr)^h .
\]
Then $\E_0Z_d=s_d\,\E_0Z$; for large $k$ and $2\le d\le k$, we have $\ln s_d\le k\ln k\cdot g(d/k)+3k$, where $g(u)=u(\varepsilon+\alpha\ln u)$, and $\E_0[Z^2]\le e^{5k}(\E_0Z)^2$.
\end{lemma}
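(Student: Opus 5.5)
The plan has three steps. First, compute $\E_0Z_d$ by linearity. Then bound $\ln s_d$ with Stirling-type estimates. Finally, sum over $d$, treating small $d$ separately.

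\emph{The identity $\E_0Z_d=s_d\,\E_0Z$.} Write
\[
\E_0Z_d=\sum_{x\in\Perm}\ \sum_{y\in\Perm,\,d(x,y)=d}\Pr_0[x\in Q]\,\Pr_0[y\in Q\mid x\in Q].
\]
The families of $F$ are independent. By \cref{lem:pair}, the conditional probability is therefore $(D_d/(k-1)^d)^h$. For fixed $x$, the number of permutations $y$ at distance exactly $d$ is $\binom kdD_d$: choose the set $J$ of coordinates where $y$ differs, and then $y$ restricted to $J$ must be a derangement of the values $x(J)$ relative to $x$. Summing $\Pr_0[x\in Q]$ over $x$ gives $\E_0Z$, so $\E_0Z_d=s_d\E_0Z$.

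\emph{The bound on $\ln s_d$.} Use $D_d\le d!\le ed(d/e)^d$ and $D_d\ge d!/3$ for $d\ge2$, together with $\binom kd\le e^k$. For the $h$-th power, note that $D_d/(k-1)^d\le d!/(k-1)^d$. Hence
\[
h\ln\frac{D_d}{(k-1)^d}\le h\bigl(d\ln d-d+\ln d+1-d\ln(k-1)\bigr).
\]
Since $\ln d-\ln(k-1)=\ln(d/k)+O(1/k)$, this is at most $h\,d\ln(d/k)+O(h)$. The quantity $d\ln(d/k)-d+\ln d+1$ is $\le 0$ up to $O(\ln k)$, so the sign is safe to multiply by $h\ge\alpha\ln k-1$. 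This gives $h\ln(\cdot)\le\alpha\ln k\cdot d\ln(d/k)+O(\ln^2k)+d$, up to a careful treatment of the factor $h$ versus $\alpha\ln k$. The term $\ln D_d\le d\ln d-d+O(\ln k)$ contributes $d\ln k+d\ln(d/k)-d$. Combining,
\[
\ln s_d\le k + d\ln k+\alpha\,d\ln k\,\ln(d/k)+O(d+\ln^2k),
\]
and this equals $k\ln k\,g(d/k)\,-\,\varepsilon d\ln k+d\ln k+\dots$. Here I must recheck the constants, since $g$ has $\varepsilon$ and not $1$. The resolution is that the exponent should use $\binom kd D_d\approx k^d/e^{d}\cdot$(corrections), with $D_d\approx d!$, while the power term carries $(d/k)^{\alpha d\ln k}$ times $e^{-\alpha d\ln k\cdot(\text{stuff})}$. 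The bookkeeping I expect is that $\ln\binom kd+\ln D_d\le d\ln k+O(k)$, and that the power term gives $\alpha d\ln k\ln(d/k)-\alpha d\ln k+\dots$. The latter is wrong, because $h(d\ln d-d\ln k)$ has no $-\alpha d\ln k$ term. So the honest route is to combine $\ln\binom kd\le d\ln(k/d)+d$ with $\ln D_d\le d\ln d$, giving $d\ln k+d$. The power term gives $\alpha\ln k\cdot d\ln(d/k)+d\ln(d/k)\cdot O(1)$. The remaining discrepancy, $d\ln k$ versus $\varepsilon d\ln k$, must then come from $-hd\cdot(1-o(1))$ in the Stirling term, namely $-hd\ge-\alpha d\ln k$. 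So $d\ln k-\alpha d\ln k=\varepsilon d\ln k$, which is exactly $k\ln k\cdot(d/k)\varepsilon$. The $\ln d$ and $+1$ terms, multiplied by $h$, give $O(\ln^2k)\le k$, and the $+1$ in $h$'s lower bound gives an error of at most $d\ln(k/d)\le k$. All errors thus total at most $3k$.

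\emph{The second moment.} We have $\E_0[Z^2]=\sum_d\E_0Z_d=\E_0Z\sum_ds_d$. Here $s_0=1$ and $s_1=0$. For $d\ge2$, the function $g(u)=u(\varepsilon+\alpha\ln u)$ is convex with $g(0^+)=0$ and $g(1)=\varepsilon$, so $g(u)\le\varepsilon u\le\varepsilon$ on $(0,1]$. Therefore $s_d\le e^{\varepsilon k\ln k+3k}$, and summing the $k+1$ terms gives at most $(k+1)e^{\varepsilon k\ln k+3k}\le e^{\varepsilon k\ln k+4k}$. By \cref{lem:first}, $\E_0Z\ge e^{\varepsilon k\ln k-k}$, so $\sum_d s_d\le e^{5k}\E_0Z$. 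Hence $\E_0Z^2\le e^{5k}(\E_0Z)^2$.

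The main obstacle is the careful Stirling bookkeeping in the middle step: the factor $-hd$ from $d!/(k-1)^d$ must cancel the $d\ln k$ from $\binom kdD_d$ down to $\varepsilon d\ln k$, while keeping every lower-order error within $3k$ uniformly in $2\le d\le k$.
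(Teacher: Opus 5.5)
Your three steps are the paper's proof: the identity from \cref{lem:pair} and the independence of the families; a Stirling estimate in which the factor $e^{-d}$ in $d!$, raised to the power $h$, cuts the leading $d\ln k$ down to $\varepsilon d\ln k$; and convexity of $g$, summation over $d$, and $\E_0Z\ge e^{\varepsilon k\ln k-k}$. The first and last steps are fine as written. The middle step identifies the right cancellation, but it does not yet prove the bound with the constant $3k$.

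\emph{The inequality is in the wrong direction.} You invoke $-hd\ge-\alpha d\ln k$, which points the wrong way for an upper bound on $\ln s_d$.
\begin{itemize}
\item You need $-hd\le-(\alpha\ln k-1)d=-\alpha d\ln k+d$, and likewise $hd\ln(d/k)\le\alpha\ln k\,d\ln(d/k)+d\ln(k/d)$. Both use $h>\alpha\ln k-1$ together with the negativity of the multiplied terms.
\item The positive term $h(\ln d+3)$ is handled separately with $h\le\alpha\ln k$, giving $h(\ln d+3)\le\alpha\ln k\,(\ln k+3)\le k$.
\end{itemize}

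\emph{The error budget is asserted, not checked.} Your error terms are $d$ (from $\ln\binom kd+\ln D_d\le d\ln k+d$), $d$ (from $-hd$), $d\ln(k/d)$, and at most $k$. Bounding the first three by $k$ each gives $4k$. That does not prove the statement, and at the end it would only yield $(k+1)e^{5k}$ instead of $e^{5k}$. You need $d\bigl(2+\ln(k/d)\bigr)\le2k$, which holds because the left side is increasing in $d$ on $[1,k]$ and equals $2k$ at $d=k$.

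The paper avoids the extra $d$ by using $\binom kdD_d\le k!/(k-d)!\le k^d$. It also treats both negative terms at once by writing $\Lambda_d=d\ln(d/k)-d<0$, so that $h\Lambda_d\le\alpha\ln k\,\Lambda_d-\Lambda_d\le\alpha\ln k\,\Lambda_d+2k$.
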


\begin{proof}
For a fixed $x\in\Perm$, exactly $\binom kdD_d$ permutations differ from $x$ in $d$ coordinates, and by \cref{lem:pair} and the independence of the families, each of them lies in $Q(F)$ together with $x$ with probability $p^h(D_d/(k-1)^d)^h$.
Hence $\E_0Z_d=Np^hs_d=s_d\,\E_0Z$.
Here $s_0=1$ and $s_1=0$.
Let $2\le d\le k$.
We use $\binom kdD_d\le k^d$, $D_d\le d!\le ed(d/e)^d$, and $(k-1)^d\ge k^d(1-1/k)^k\ge k^d/4$, which give
\[
\ln s_d\le d\ln k+h\bigl(\ln(4ed)+\Lambda_d\bigr),\qquad\Lambda_d:=d\ln(d/k)-d<0.
\]
Since $h>\alpha\ln k-1$ and $\Lambda_d<0$, we have $h\Lambda_d\le\alpha\ln k\cdot\Lambda_d-\Lambda_d\le\alpha\ln k\cdot\Lambda_d+2k$, because $-\Lambda_d=d\ln(k/d)+d\le k/e+k$.
Also $h\ln(4ed)\le\alpha\ln k\cdot\ln(4ek)\le k$ for large $k$.
With $u=d/k$ this gives
\[
\ln s_d\le k\ln k\,\bigl(u+\alpha u\ln u-\alpha u\bigr)+3k=k\ln k\cdot g(u)+3k.
\]
The function $g$ is convex on $(0,1]$, since $g''(u)=\alpha/u>0$; it tends to $0$ as $u\to0$, and $g(1)=\varepsilon$.
Hence $g\le\varepsilon$ on $(0,1]$, and $s_d\le e^{\varepsilon k\ln k+3k}$.
Since $Z^2=\sum_dZ_d$, \cref{lem:first} gives
\[
\frac{\E_0[Z^2]}{(\E_0Z)^2}=\frac{\sum_ds_d}{\E_0Z}\le\frac{1+k\,e^{\varepsilon k\ln k+3k}}{e^{\varepsilon k\ln k-k}}\le2k\,e^{4k}\le e^{5k}.\qedhere
\]
\end{proof}

\subsection{Components of the feasible set}\label{sec:components}

For a system $F$, join two configurations of $Q(F)$ by an edge if they differ in fewer than $\delta k$ coordinates.
The connected components of this graph are the \emph{components} of $Q(F)$.
For a component $C$ and $i\in[k]$, let $A_i(C)=\{y_i:y\in C\}$ be the set of positions that server $i$ takes in~$C$; the component \emph{varies} in coordinate $i$ if $|A_i(C)|\ge2$.
To bound these sets, we fix $x\in C$ and a set $I$ of coordinates and consider the pairs $(i,y_i)$ with $i\in I$ and $y_i\ne x_i$, for the configurations $y\in Q(F)$ that agree with $x$ outside $I$; by \cref{lem:alt}, the following set contains them.

\begin{definition}
For $x\in Q(F)$ and $I\subseteq[k]$, let $\cL_x(I)$ be the set of pairs $(i,b)$ with $i\in I$ and $b\in x(I)\setminus\{x_i\}$ such that $\pi^j_i(b)\in\pi^j(x)(I)$ for every $j\in[h]$.
\end{definition}

\begin{lemma}\label[lemma]{lem:alt}
Let $x,y\in Q(F)$ agree outside $I\subseteq[k]$.
Then $(i,y_i)\in\cL_x(I)$ for every $i\in I$ with $y_i\ne x_i$.
\end{lemma}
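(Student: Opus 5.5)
The proof is a direct check of the three conditions in the definition of $\cL_x(I)$, using only that $x$ and $y$ are permutations accepted by every family of $F$. Fix $i\in I$ with $y_i\ne x_i$. The condition $i\in I$ holds by assumption. Two conditions remain: $y_i\in x(I)\setminus\{x_i\}$, and $\pi^j_i(y_i)\in\pi^j(x)(I)$ for every $j\in[h]$.

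For the first, use that $x$ and $y$ are both permutations that agree outside $I$. Since they agree on $[k]\setminus I$, we have
\[
y([k]\setminus I)=x([k]\setminus I).
\]
Because both are bijections onto $[k]$, taking complements gives $y(I)=[k]\setminus x([k]\setminus I)=x(I)$. In particular $y_i\in x(I)$, and $y_i\ne x_i$ by hypothesis, so $y_i\in x(I)\setminus\{x_i\}$.

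For the second, fix $j$ and apply the same complement argument to the configurations $\pi^j(x)$ and $\pi^j(y)$. Both are permutations, since $x,y\in Q(F)$ means every family of $F$ accepts them. They agree outside $I$, because $\pi^j_\ell(y_\ell)=\pi^j_\ell(x_\ell)$ whenever $y_\ell=x_\ell$. Hence $\pi^j(y)(I)=\pi^j(x)(I)$, and so $\pi^j_i(y_i)\in\pi^j(y)(I)=\pi^j(x)(I)$.

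No step is a real obstacle. The only point needing care is the observation that two permutations agreeing outside $I$ take the same set of values on $I$, which is used once for $x,y$ and once for each pair $\pi^j(x),\pi^j(y)$.
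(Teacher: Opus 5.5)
Your proposal is correct and follows the paper's proof essentially verbatim. Both observe that two permutations agreeing outside $I$ take the same set of values on $I$, and apply this once to $x,y$ and once to each pair $\pi^j(x),\pi^j(y)$.
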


\begin{proof}
Since $x$ and $y$ are permutations that agree outside $I$, we have $y(I)=[k]\setminus x([k]\setminus I)=x(I)$, so $y_i\in x(I)\setminus\{x_i\}$.
For each $j$, the permutations $\pi^j(x)$ and $\pi^j(y)$ also agree outside $I$, so likewise $\pi^j(y)(I)=\pi^j(x)(I)$, and this set contains $\pi^j_i(y_i)$.
\end{proof}

So a component can vary in many coordinates or take many values only if $\cL_x(I)$ is large for some $x$ and some set $I$ (\cref{lem:structure}).
The system $F$ is \emph{good} if it has the following three properties.
\begin{enumerate}[label=(G\arabic*),nosep]
\item No two configurations of $Q(F)$ differ in at least $\delta k$ and at most $\beta k$ coordinates.
\item There are no $x\in Q(F)$ and $I\subseteq[k]$ with $\eta k\le|I|<(\eta+\delta)k$ such that every $i\in I$ is the first coordinate of some pair in $\cL_x(I)$.
\item There are no $x\in Q(F)$ and $I\subseteq[k]$ with $|I|<\eta k$ and $|\cL_x(I)|\ge k$.
\end{enumerate}

\begin{lemma}\label[lemma]{lem:rows}
Fix $x\in\Perm$ and $I\subseteq[k]$ with $|I|=d\ge1$, and condition on $x\in Q(F)$ under $\Pr_0$.
\begin{enumerate}[(a)]
\item For distinct pairs $(i_1,b_1),\dots,(i_m,b_m)$ with $i_s\in I$ and $b_s\in x(I)\setminus\{x_{i_s}\}$, the probability that all of them lie in $\cL_x(I)$ is at most $((d-1)/(k-1))^{hm}$.
\item The probability that every $i\in I$ is the first coordinate of some pair in $\cL_x(I)$ is at most $\bigl[(d-1)((d-1)/(k-1))^h\bigr]^d$.
\end{enumerate}
\end{lemma}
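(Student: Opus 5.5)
The plan is to condition on $x\in Q(F)$ and use \cref{lem:residual} separately for each family $\pi^j$. Under $\Pr_0$ the families are independent, and the event $x\in Q(F)$ is the intersection of the events that each $\pi^j$ accepts $x$. So, conditioned on $x\in Q(F)$, the families remain independent, and each is a uniform family conditioned on accepting $x$. By \cref{lem:residual}, for each $j$ the vector $v^j=\pi^j(x)$ is a uniform permutation. Given $v^j$, the restrictions of $\pi^j_i$ to $[k]\setminus\{x_i\}$ are independent uniform bijections onto $[k]\setminus\{v^j_i\}$. The condition defining $\cL_x(I)$ for family $j$ is $\pi^j_i(b)\in v^j(I)$. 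Since $b\ne x_i$, the value $\pi^j_i(b)$ is uniform on $[k]\setminus\{v^j_i\}$, and $v^j(I)\setminus\{v^j_i\}$ has $d-1$ elements. So a single pair passes family $j$ with probability $(d-1)/(k-1)$.

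For (a), fix $j$ and group the $m$ pairs by their first coordinate. Pairs with distinct first coordinates involve different, conditionally independent bijections. For pairs with the same $i$ and distinct $b$'s, the values $\pi^j_i(b)$ are drawn without replacement from $[k]\setminus\{v^j_i\}$. The probability that $t$ of them all land in the $(d-1)$-set is
\[
\frac{(d-1)(d-2)\cdots(d-t)}{(k-1)(k-2)\cdots(k-t)}\le\Bigl(\frac{d-1}{k-1}\Bigr)^t,
\]
because $(d-s)/(k-s)\le(d-1)/(k-1)$ for $s\ge1$ when $d\le k$. Multiplying over the $i$'s gives $((d-1)/(k-1))^m$ for family $j$. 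Independence over $j$ then raises this to the power $h$. The argument is valid for every conditioning value $v^j$, so it survives averaging over $v^j$. Conditioning on $v^j$ is harmless because $v^j(I)$ is then fixed.

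For (b), apply a union bound. If every $i\in I$ is the first coordinate of some pair, then we can choose, for each $i\in I$, a value $b_i\in x(I)\setminus\{x_i\}$ such that $(i,b_i)\in\cL_x(I)$. There are at most $(d-1)^d$ such choices. Each choice yields $d$ distinct pairs, since their first coordinates differ. By (a) with $m=d$, each choice has probability at most $((d-1)/(k-1))^{hd}$. Summing over choices gives $\bigl[(d-1)((d-1)/(k-1))^h\bigr]^d$.

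The only delicate point is the dependence inside one bijection $\pi^j_i$ in part (a), which is sampling without replacement. The monotone bound on the falling-factorial ratio handles it. Everything else is independence bookkeeping from \cref{lem:residual}.
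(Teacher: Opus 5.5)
Your proof is correct and follows the paper's argument: condition on $x\in Q(F)$ and on each $v^j=\pi^j(x)$, use \cref{lem:residual} to see that the images of the distinct $b$'s under $\pi^j_i$ are sampled without replacement from $[k]\setminus\{v^j_i\}$, bound the falling-factorial ratio by $((d-1)/(k-1))^{n_i}$, and multiply over coordinates and families. The only cosmetic difference is in (b). You union-bound over all $(d-1)^d$ choice functions and invoke (a) with $m=d$, whereas the paper union-bounds over $b$ separately for each $i$ and then multiplies using conditional independence across coordinates; both give the identical bound.
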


\begin{proof}
Under this conditioning the families are independent, and each is uniform among the families that accept $x$.
Condition further on $v^j=\pi^j(x)$ for every $j$, and apply \cref{lem:residual} to each family.
For a pair $(i,b)$ with $b\ne x_i$, the value $\pi^j_i(b)$ is uniform on $[k]\setminus\{v^j_i\}$, and $(i,b)\in\cL_x(I)$ if and only if $\pi^j_i(b)\in v^j(I)\setminus\{v^j_i\}$ for every $j$, a set of $d-1$ values.
If $n_i$ of the given pairs have first coordinate $i$, their second coordinates are distinct, so their images under $\pi^j_i$ form a uniformly random $n_i$-tuple of distinct elements of $[k]\setminus\{v^j_i\}$.
All of them lie in the $(d-1)$-element set with probability $\frac{(d-1)(d-2)\cdots(d-n_i)}{(k-1)(k-2)\cdots(k-n_i)}\le\bigl(\frac{d-1}{k-1}\bigr)^{n_i}$.
Different coordinates and different families are independent, which proves~(a).
For~(b), a union bound over the $d-1$ choices of $b$ bounds the probability for each $i\in I$ by $(d-1)((d-1)/(k-1))^h$, and different coordinates are independent.
Both bounds hold for every value of the $v^j$, hence also after averaging over them.
\end{proof}

\begin{lemma}\label[lemma]{lem:good}
For large $k$, $\Pr_0[F\text{ is not good}]\le e^{-k\ln k/600}$.
\end{lemma}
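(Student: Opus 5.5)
The plan is a union bound over the three ways (G1), (G2), (G3) can fail. Each failure probability is bounded by the first moment of the number of witnesses under $\Pr_0$, and each bound should be at most $e^{-ck\ln k}$ with $c>1/600$ plus lower-order $O(k)$ terms. The errors coming from $h$ versus $\alpha\ln k$ are handled as in \cref{lem:second}: since $\ln(d/k)<0$ and $h>\alpha\ln k-1$, we have $h\ln u\le\alpha\ln k\ln u-\ln u$, which costs at most $O(k)$ after multiplying by $d\le k$.

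\textbf{(G1).} For $\delta k\le d\le\beta k$, Markov's inequality and \cref{lem:second} give $\Pr_0[Z_d\ge1]\le\E_0Z_d=s_d\E_0Z$. Combining this with \cref{lem:first} gives
\[
\ln\E_0Z_d\le k\ln k\,\bigl(g(d/k)+\varepsilon\bigr)+4k .
\]
Since $g$ is convex, $\max_{[\delta,\beta]}g$ is attained at an endpoint. Numerically, $g(\delta)=0.04\,(0.1+0.9\ln0.04)\approx-0.112$ and $g(\beta)=0.77\,(0.1+0.9\ln0.77)\approx-0.1041$. Hence $g+\varepsilon\le-0.004$ on $[\delta,\beta]$. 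Summing over at most $k$ values of $d$ gives $e^{-0.004k\ln k+O(k)}$. I expect this step to be the main obstacle: the margin is tiny, since $\beta$ was evidently chosen near the root of $g(u)+\varepsilon$, and this is where the constant $1/600$ must come from. The endpoint computation at $u=\beta$ has to be done carefully, with rigorous bounds on $\ln0.77$, and the additive $O(k)$ terms must be absorbed only for large $k$.

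\textbf{(G2).} The expected number of pairs $(x,I)$ witnessing a violation equals $\sum_I\E_0[Z]\cdot\Pr_0[\text{event}\mid x\in Q(F)]$, summed also over $x\in\Perm$ and normalized as in \cref{lem:first}. By \cref{lem:rows}(b), with $|I|=d=uk$ and $u\in[\eta,\eta+\delta)=[0.08,0.12)$, using $(d-1)/(k-1)\le u$ and $\binom kd\le e^k$, this is at most
\[
\exp\bigl(\varepsilon k\ln k+O(k)+d\ln k+d\,h\ln u\bigr).
\]
The last term is about $-2.1\,\alpha\, u\,k(\ln k)^2$, which dominates everything else, so this case is far from tight.

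\textbf{(G3).} Fix $x$ and $I$ with $|I|=d<\eta k$. The set $\cL_x(I)$ lies in a set of at most $d^2$ pairs, so it has size at least $k$ only if $d^2\ge k$. If $|\cL_x(I)|\ge k$, some $k$ distinct pairs all lie in $\cL_x(I)$. A union bound over the $\binom{d^2}{k}\le(ed^2/k)^k$ choices of these pairs, together with \cref{lem:rows}(a) applied with $m=k$, bounds the conditional probability by $(eu^2k)^k u^{hk}$, where $u=d/k$. Multiplying by $2^k$ choices of $I$ and by $\E_0Z$ gives exponent
\[
k\ln k\,(\varepsilon+1+\alpha\ln u)+O(k).
\]
Since $u<\eta$ gives $\alpha\ln u<0.9\ln0.08\approx-2.27$, this is at most $-k\ln k$ for large $k$. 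Summing the three bounds proves the lemma.
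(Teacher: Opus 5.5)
Your plan (union bound over (G1)–(G3), first moments, convexity so that only endpoints need checking) is the paper's. Your (G1) and (G3) bounds are correct, and counting $\binom{d^2}{k}$ candidate sets in (G3) is a harmless variant of the paper's $\binom{k^2}{k}$. The gap is in (G2), where an order-of-magnitude error leads you to skip the one check that actually determines the constant.

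The term $d\,h\ln u$ is not of order $k(\ln k)^2$. With $d=uk$ and $h\approx\alpha\ln k$ it equals about $\alpha u\ln u\cdot k\ln k$, the same order as $\varepsilon k\ln k$ and $d\ln k=uk\ln k$. So nothing dominates, and your (G2) exponent is really
\[
k\ln k\,\bigl(\varepsilon+u(1+\alpha\ln u)\bigr)+O(k).
\]
Here the $O(k)$ collects three terms: $\binom kd\le2^k$, the bound $\ln\E_0Z-\varepsilon k\ln k\le k$, and $d\ln\frac{k-1}{d-1}\le k\ln\frac2\eta$, which comes from replacing $h$ by $\alpha\ln k-1$.

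The coefficient is convex in $u$, so its maximum over $[\eta,\eta+\delta]$ is at an endpoint. It is about $-0.009$ at $u=0.12$, but only about $-0.00185$ at $u=\eta=0.08$; its root is near $0.075$. This is the smallest margin of the three cases, smaller than your $-0.0041$ for (G1). It is what dictates the constant, since $0.00185>1/600\approx0.00167$. So your diagnosis is inverted: (G2) is the binding case rather than ``far from tight'', and your proof as written never verifies it.

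The repair is exactly this endpoint computation. With it, the three bounds sum to at most $e^{-k\ln k/600}$ for large $k$, as in the paper.
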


\begin{proof}
For (G1), \cref{lem:first,lem:second} bound the expected number of pairs of configurations of $Q(F)$ that differ in $d$ coordinates, for $\delta k\le d\le\beta k$, by $s_d\,\E_0Z\le\exp\bigl(k\ln k\,(\varepsilon+g(d/k))+4k\bigr)$.

For (G2) and (G3), a union bound over $x\in\Perm$ and $I\subseteq[k]$ bounds the probability that some pair $(x,I)$ violates them by $\sum_{x,I}\Pr_0[x\in Q(F)]\,P(x,I)\le\E_0Z\cdot2^k\max_{x,I}P(x,I)$, where $P(x,I)$ is the probability, conditioned on $x\in Q(F)$, that $(x,I)$ violates the property.
For (G2), let $\eta k\le d=|I|<(\eta+\delta)k$ and $u=d/k$.
Since $h>\alpha\ln k-1$, $(d-1)/(k-1)\le u$, and $(k-1)/(d-1)\le2/\eta$ for large $k$, \cref{lem:rows}(b) gives
\[
\ln P(x,I)\le d\ln k+(\alpha\ln k-1)\,d\ln\frac{d-1}{k-1}\le k\ln k\,(u+\alpha u\ln u)+k\ln\frac2\eta.
\]
For (G3), let $d=|I|<\eta k$.
Then $(d-1)/(k-1)<\eta$, and there are at most $k^2$ candidate pairs, so by \cref{lem:rows}(a), $P(x,I)\le\binom{k^2}{k}\eta^{hk}\le(ek)^k\eta^{hk}$, and $\ln P(x,I)\le k\ln k\,(1+\alpha\ln\eta)+k(1+\ln(1/\eta))$.

With \cref{lem:first}, the logarithms of the three probabilities are therefore at most $k\ln k$ times
\[
\max_{\delta\le u\le\beta}\bigl(\varepsilon+u(\varepsilon+\alpha\ln u)\bigr),\qquad\max_{\eta\le u\le\eta+\delta}\bigl(\varepsilon+u(1+\alpha\ln u)\bigr),\qquad\varepsilon+1+\alpha\ln\eta,
\]
plus $O(k)$.
Both functions of $u$ are convex, so each maximum is attained at an endpoint; the values are $-0.0118\ldots$ and $-0.0041\ldots$ at $\delta$ and $\beta$, $-0.0018\ldots$ and $-0.0089\ldots$ at $\eta$ and $\eta+\delta$, and the last expression is $-1.17\ldots$.
So the sum of the three probabilities is at most $e^{-k\ln k/600}$ for large $k$.
\end{proof}

\begin{lemma}\label[lemma]{lem:fixed}
For large $k$, there is a good system $F$ with $Z(F)\ge\E_0Z/2$.
\end{lemma}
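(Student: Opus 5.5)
The plan is the second moment method, combined with the bound of \cref{lem:good}. We show that the event $\{Z\ge\E_0Z/2\}$ has probability larger than $\Pr_0[F\text{ is not good}]$ under $\Pr_0$. Then some system lies in this event and is good, and that system is the required $F$.

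First, apply the Paley--Zygmund inequality to the nonnegative variable $Z$:
\[
\Pr_0\bigl[Z\ge\tfrac12\E_0Z\bigr]\ge\frac14\cdot\frac{(\E_0Z)^2}{\E_0[Z^2]}.
\]
By \cref{lem:second}, $\E_0[Z^2]\le e^{5k}(\E_0Z)^2$, so this probability is at least $\tfrac14e^{-5k}$. For completeness I would derive the inequality directly. Write $\E_0Z=\E_0[Z\one[Z<\E_0Z/2]]+\E_0[Z\one[Z\ge\E_0Z/2]]$. The first term is less than $\E_0Z/2$. By Cauchy--Schwarz, the second term is at most $\sqrt{\E_0[Z^2]\Pr_0[Z\ge\E_0Z/2]}$. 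Squaring the resulting inequality gives the displayed bound.

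Second, \cref{lem:good} gives $\Pr_0[F\text{ is not good}]\le e^{-k\ln k/600}$. This is smaller than $\tfrac14e^{-5k}$ once $\ln k/600>5+\ln4/k$, which holds for large $k$. Therefore
\[
\Pr_0\bigl[F\text{ is good and }Z(F)\ge\tfrac12\E_0Z\bigr]\ge\tfrac14e^{-5k}-e^{-k\ln k/600}>0.
\]
A system in this event exists, and it has both properties.

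There is no real obstacle here, since the two moment bounds are already in hand. The only point that needs care is the comparison of the rates. The failure probability decays like $e^{-\Theta(k\ln k)}$, while the second-moment ratio only costs $e^{O(k)}$. This is why the constant $1/600$ in \cref{lem:good}, although tiny, suffices for large $k$. It also explains why the proof needs the superexponential bound from \cref{lem:good} rather than merely an exponentially small one.
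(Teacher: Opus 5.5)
Your proposal is correct and is essentially the paper's proof: you derive $\Pr_0[Z\ge\E_0Z/2]\ge e^{-5k}/4$ from \cref{lem:second} via the same Cauchy--Schwarz (Paley--Zygmund) step, then note that it exceeds the bound $e^{-k\ln k/600}$ of \cref{lem:good} for large $k$. Your rate comparison, $\ln k/600>5+\ln4/k$, is the right condition.
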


\begin{proof}
Since $\E_0\bigl[Z\,\one[Z<\E_0Z/2]\bigr]\le\E_0Z/2$, the Cauchy--Schwarz inequality and \cref{lem:second} give
\[
\tfrac12\E_0Z\le\E_0\bigl[Z\,\one[Z\ge\E_0Z/2]\bigr]\le\bigl(\E_0[Z^2]\,\Pr_0[Z\ge\E_0Z/2]\bigr)^{1/2}\le\bigl(e^{5k}(\E_0Z)^2\,\Pr_0[Z\ge\E_0Z/2]\bigr)^{1/2},
\]
so $\Pr_0[Z\ge\E_0Z/2]\ge e^{-5k}/4$.
By \cref{lem:good}, this exceeds $\Pr_0[F\text{ is not good}]$ for large $k$.
\end{proof}

\begin{lemma}\label[lemma]{lem:structure}
Let $F$ be good.
Configurations in distinct components of $Q(F)$ differ in more than $\beta k$ coordinates, and every component $C$ of $Q(F)$ varies in fewer than $\eta k$ coordinates and satisfies $\sum_{i=1}^k(|A_i(C)|-1)<k$; in particular $|C|<2^k$.
\end{lemma}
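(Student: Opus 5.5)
The plan is to prove the parts of \cref{lem:structure} in order, each from one of the properties (G1)--(G3).

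\emph{Separation.} If $x\in C$ and $y\in C'$ lie in distinct components, then they are not adjacent, so they differ in at least $\delta k$ coordinates. By (G1) they cannot differ in at most $\beta k$ coordinates. Hence they differ in more than $\beta k$.

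\emph{Few varying coordinates.} Fix a component $C$ and $x\in C$. By the separation step, any two configurations of $C$ that differ in at least $\delta k$ coordinates already differ in more than $\beta k$. I would grow the set of coordinates in which $C$ departs from $x$ along paths, in steps of fewer than $\delta k$ coordinates each.

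Order the vertices of $C$ by BFS from $x$. Let $I_m$ be the union of the sets $\{i:y_i\ne x_i\}$ over the first $m$ vertices. Each new vertex is adjacent to an earlier vertex $z$, which agrees with $x$ outside $I_{m-1}$. The new vertex differs from $z$ in fewer than $\delta k$ coordinates, so $|I_m|<|I_{m-1}|+\delta k$.

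Suppose $C$ varies in at least $\eta k$ coordinates. Then the final union has size at least $\eta k$, so some first index $m$ has $\eta k\le|I_m|<(\eta+\delta)k$. Set $I=I_m$.

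Now check that every $i\in I$ is the first coordinate of a pair in $\cL_x(I)$. Each $i\in I$ has some vertex $y$ among the first $m$ with $y_i\ne x_i$. That $y$ agrees with $x$ outside $I$ and lies in $Q(F)$. So \cref{lem:alt} gives $(i,y_i)\in\cL_x(I)$. This contradicts (G2). Hence the set $V$ of varying coordinates has $|V|<\eta k$.

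The step I expect to be the main obstacle is the subtlety that vertices of $C$ agree with $x$ outside the set of coordinates where $C$ varies. This holds by definition of varying, since a non-varying coordinate takes a single value on all of $C$. So the prefix argument above is sound.

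\emph{Sum bound.} Apply (G3) with $I=V$, which satisfies $|V|<\eta k$. Every $y\in C$ agrees with $x$ outside $V$. So by \cref{lem:alt}, for each $i\in V$, each value $b\in A_i(C)\setminus\{x_i\}$ gives a pair $(i,b)\in\cL_x(V)$. These pairs are distinct, so
\[
\sum_i(|A_i(C)|-1)\le|\cL_x(V)|<k,
\]
where coordinates outside $V$ contribute $0$.

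\emph{Size of $C$.} Finally $|C|\le\prod_i|A_i(C)|$. Using $1+t\le 2^t$ for integers $t\ge0$, this gives
\[
|C|\le 2^{\sum_i(|A_i(C)|-1)}<2^k.
\]
The case where $V$ is empty, so that $C$ is a single configuration, is included in this bound.
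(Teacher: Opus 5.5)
Your proof is correct and follows the paper's argument essentially step for step. Separation comes from (G1). A connected ordering of $C$ starting at $x$, with the growing coordinate sets $I_m$, contradicts (G2) at the first index where $|I_m|\ge\eta k$. Then \cref{lem:alt} together with (G3), applied to the set of varying coordinates, gives the sum bound, and $|C|\le\prod_i|A_i(C)|\le2^{\sum_i(|A_i(C)|-1)}$ finishes; your BFS order is just one concrete choice of the ordering the paper uses.
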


\begin{proof}
Configurations in distinct components are not joined by an edge, so they differ in at least $\delta k$ coordinates, and hence, by (G1), in more than $\beta k$.

Fix $x\in C$, and let $I_C$ be the set of coordinates in which $C$ varies, that is, in which some $y\in C$ differs from $x$.
Since $C$ is connected, it can be ordered as $x=y^0,y^1,\dots,y^n$ so that every $y^s$ with $s\ge1$ differs in fewer than $\delta k$ coordinates from some $y^{s'}$ with $s'<s$.
Let $I_s$ be the set of coordinates in which at least one of $y^0,\dots,y^s$ differs from $x$; so $y^0,\dots,y^s$ agree with $x$ outside $I_s$.
Since $y^{s'}$ agrees with $x$ outside $I_{s-1}$, every coordinate in $I_s\setminus I_{s-1}$ is one in which $y^s$ differs from $y^{s'}$, so $|I_s\setminus I_{s-1}|<\delta k$.
Suppose that $|I_n|\ge\eta k$, and let $s$ be the least index with $|I_s|\ge\eta k$; then $\eta k\le|I_s|<(\eta+\delta)k$.
Every $i\in I_s$ has some $s''\le s$ with $y^{s''}_i\ne x_i$, and $y^{s''}$ agrees with $x$ outside $I_s$, so $(i,y^{s''}_i)\in\cL_x(I_s)$ by \cref{lem:alt}.
This contradicts (G2), so $|I_C|=|I_n|<\eta k$.

Every $y\in C$ agrees with $x$ outside $I_C$.
By \cref{lem:alt}, every pair $(i,b)$ with $b\in A_i(C)\setminus\{x_i\}$ lies in $\cL_x(I_C)$, so (G3) gives $\sum_i(|A_i(C)|-1)\le|\cL_x(I_C)|<k$.
Finally, $|C|\le\prod_i|A_i(C)|\le\prod_i2^{|A_i(C)|-1}<2^k$.
\end{proof}

\subsection{Requests that forbid one position}\label{sec:elim}

A request $r$ \emph{eliminates} a set $C$ of configurations if no configuration in $C$ serves $r$.
For $i,a\in[k]$, let $\langle i,a\rangle$ be the request whose $i$-th coordinate is $a+1$ (or $1$ if $a=k$) and whose other coordinates are all $a$.

\begin{lemma}\label[lemma]{lem:forbid}
A permutation $x$ serves $\langle i,a\rangle$ if and only if $x_i\ne a$.
\end{lemma}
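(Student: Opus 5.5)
The plan is a direct case analysis on the definition of serving, using only that $x$ is a permutation. Write $r=\langle i,a\rangle$, and let $a'$ denote $a+1$ if $a<k$ and $1$ if $a=k$. Then $a'\ne a$ because $k\ge2$, $r_i=a'$, and $r_\ell=a$ for every $\ell\ne i$. By definition, $x$ serves $r$ if and only if $x_i=a'$ or $x_\ell=a$ for some $\ell\ne i$.

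First I will show that $x_i\ne a$ implies that $x$ serves $r$. Since $x$ is a permutation, every value of $[k]$ occurs among $x_1,\dots,x_k$. In particular, $a=x_\ell$ for some $\ell$. Because $x_i\ne a$, this $\ell$ differs from $i$, so $x_\ell=a=r_\ell$ and $x$ serves $r$.

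Conversely, suppose $x_i=a$. Then $x_i\ne a'=r_i$. For $\ell\ne i$, the values of a permutation are distinct, so $x_\ell\ne x_i=a=r_\ell$. Hence $x$ agrees with $r$ in no coordinate and fails $r$.

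There is no real obstacle here. The only points to check are that $a'\ne a$, which needs $k\ge2$ (the standing assumption of this section), and that both directions use the permutation property: surjectivity for the first direction and injectivity for the second.
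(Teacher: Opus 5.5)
Your proposal is correct and follows the paper's own proof: in both, injectivity of the permutation gives failure when $x_i=a$, and surjectivity gives some $\ell\ne i$ with $x_\ell=a$ when $x_i\ne a$. Your explicit check that $a+1$ (or $1$) differs from $a$ because $k\ge2$ is a point the paper leaves implicit.
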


\begin{proof}
If $x_i=a$, then $x_\ell\ne a$ for every $\ell\ne i$, and $x_i$ differs from the $i$-th coordinate of $\langle i,a\rangle$, so $x$ fails $\langle i,a\rangle$.
If $x_i\ne a$, then $x_\ell=a$ for some $\ell\ne i$, since $x$ is a permutation, so $x$ serves $\langle i,a\rangle$.
\end{proof}

\begin{lemma}\label[lemma]{lem:elim}
Let $F$ be good, let $C$ be a component of $Q(F)$, and let $\sigma\in Q(F)\setminus C$.
Let $(i,a)$ be uniformly random among the $k(k-1)$ pairs in $[k]^2$ with $\sigma_i\ne a$.
Then $\langle i,a\rangle$ eliminates $C$ with probability at least $(\beta-\eta)/(k-1)$.
\end{lemma}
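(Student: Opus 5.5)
Fix any $x\in C$. By \cref{lem:structure}, $C$ varies in a set $I_C$ of fewer than $\eta k$ coordinates. Every $y\in C$ agrees with $x$ outside $I_C$, so $y_i=x_i$ for all $i\notin I_C$. Also by \cref{lem:structure}, $\sigma$ lies in a different component from $x$, so $\sigma$ and $x$ differ in more than $\beta k$ coordinates. Let
\[
J=\{i\notin I_C:\sigma_i\ne x_i\}.
\]
Since fewer than $\eta k$ of the coordinates where $\sigma$ and $x$ differ can lie in $I_C$, we get $|J|>(\beta-\eta)k$.

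Now take any $i\in J$ and set $a=x_i$. Then $\sigma_i\ne a$, so the pair $(i,a)$ is one of the $k(k-1)$ pairs allowed in the statement. Every configuration $y\in C$ is a permutation (since $C\subseteq Q(F)\subseteq\Perm$) and has $y_i=x_i=a$. By \cref{lem:forbid}, no $y\in C$ serves $\langle i,a\rangle$, so this request eliminates $C$.

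Distinct $i\in J$ give distinct pairs $(i,x_i)$, so at least $|J|>(\beta-\eta)k$ of the $k(k-1)$ equally likely pairs eliminate $C$. The probability of elimination is therefore more than
\[
\frac{(\beta-\eta)k}{k(k-1)}=\frac{\beta-\eta}{k-1}.
\]

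No step is a real obstacle. The only point needing care is to combine the two facts from \cref{lem:structure} in the right way: the distance bound between components must be applied at a single representative $x$, and the non-varying coordinates are exactly those where every $y\in C$ shares the same position $x_i$. With that, the count goes through directly.
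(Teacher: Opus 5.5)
Your proof is correct and is essentially the paper's argument: fix $x\in C$, use \cref{lem:structure} to get more than $(\beta-\eta)k$ coordinates where $\sigma$ differs from $x$ and all of $C$ agrees with $x$, and apply \cref{lem:forbid} to each pair $(i,x_i)$. The counting over the $k(k-1)$ allowed pairs is the same as in the paper.
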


\begin{proof}
Fix $x\in C$.
By \cref{lem:structure}, $\sigma$ differs from $x$ in more than $\beta k$ coordinates, and $C$ varies in fewer than $\eta k$ coordinates.
So more than $(\beta-\eta)k$ coordinates $i$ satisfy $\sigma_i\ne x_i$ and $y_i=x_i$ for every $y\in C$.
For each of them, $(i,x_i)$ is one of the $k(k-1)$ pairs, and $\langle i,x_i\rangle$ eliminates $C$ by \cref{lem:forbid}.
\end{proof}

\subsection{The posterior of the hidden permutation}\label{sec:posterior}

From now on, fix a good system $F$ with $Z=Z(F)\ge\E_0Z/2$, which exists for large $k$ by \cref{lem:fixed}, and let $T=\lfloor(k-1)(\varepsilon k\ln k-4k)\rfloor$.
Let $\sigma$ be uniformly random on $Q(F)$, and, given $\sigma$, let $(i_1,a_1),\dots,(i_T,a_T)$ be independent, each uniformly random among the pairs $(i,a)\in[k]^2$ with $\sigma_i\ne a$.
Let $r_t=\langle i_t,a_t\rangle$, and for $0\le t\le T$, let $Q_t$ be the set of configurations in $Q(F)$ that serve $r_1,\dots,r_t$.

\begin{lemma}\label[lemma]{lem:posterior}
For $0\le t\le T$, given $(i_1,a_1),\dots,(i_t,a_t)$, the permutation $\sigma$ is uniform on $Q_t$, and $2^k\,\E[1/|Q_t|]\le e^{-2k}$ for large $k$.
\end{lemma}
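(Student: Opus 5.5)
The plan is to compute the joint law of $\sigma$ and the pairs exactly. Both claims then reduce to counting request sequences.

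\emph{Uniform posterior.} For a fixed sequence of pairs $\omega=((i_1,a_1),\dots,(i_t,a_t))\in([k]^2)^t$ and a fixed $x\in Q(F)$, the definition of the adversary gives
\[
\Pr[\sigma=x,\ (i_s,a_s)_{s\le t}=\omega]=\frac1Z\,(k(k-1))^{-t}\,\one[x_{i_s}\ne a_s\text{ for all }s\le t].
\]
Every configuration of $Q(F)$ is a permutation. So by \cref{lem:forbid} the indicator equals $\one[x\in Q_t(\omega)]$, where $Q_t(\omega)$ is the set of configurations of $Q(F)$ that serve $\langle i_1,a_1\rangle,\dots,\langle i_t,a_t\rangle$. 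The factor in front does not depend on $x$. Hence, conditioned on $\omega$, $\sigma$ is uniform on $Q_t(\omega)$. Summing over $x$ gives the marginal
\[
\Pr[(i_s,a_s)_{s\le t}=\omega]=\frac{|Q_t(\omega)|}{Z}\,(k(k-1))^{-t}.
\]

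\emph{The expected inverse.} This marginal is positive only when $Q_t(\omega)\ne\emptyset$. Therefore
\[
\E\bigl[1/|Q_t|\bigr]=\sum_{\omega:\,Q_t(\omega)\ne\emptyset}\frac{1}{Z}\,(k(k-1))^{-t}\le\frac{k^{2t}}{Z\,(k(k-1))^{t}}=\frac1Z\Bigl(\frac{k}{k-1}\Bigr)^{t}\le\frac{e^{t/(k-1)}}{Z},
\]
because there are at most $k^{2t}$ sequences $\omega$. The size-biasing is what makes this work: the $1/|Q_t|$ cancels the factor $|Q_t|$ in the probability of each sequence. This cancellation is the step I regard as the crux, although it is not hard once the joint law is written down.

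\emph{Plugging in the constants.} We have $t\le T\le(k-1)(\varepsilon k\ln k-4k)$, so $e^{t/(k-1)}\le e^{\varepsilon k\ln k-4k}$. By the choice of $F$ and \cref{lem:first}, $Z\ge\E_0Z/2\ge e^{\varepsilon k\ln k-k}/2$. Hence
\[
2^k\,\E\bigl[1/|Q_t|\bigr]\le 2\cdot 2^k e^{-3k}=2e^{-(3-\ln 2)k},
\]
which is at most $e^{-2k}$ for large $k$.
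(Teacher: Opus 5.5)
Your proof is correct and follows the paper's argument essentially step for step: you write the joint law of $\sigma$ and the pairs explicitly, use the size-biased marginal $|Q_t|/(Z(k(k-1))^t)$ to cancel the $1/|Q_t|$, count at most $k^{2t}$ sequences, and apply the same bounds $t\le T$ and $Z\ge\E_0Z/2$. Your final bound $2\cdot2^ke^{-3k}$ is exactly the paper's $2^{k+1}e^{-3k}$.
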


\begin{proof}
By \cref{lem:forbid}, a permutation $x\in Q(F)$ lies in $Q_t$ if and only if $x_{i_s}\ne a_s$ for every $s\le t$.
Hence $\Pr[\sigma=x,(i_1,a_1),\dots,(i_t,a_t)]=Z^{-1}(k(k-1))^{-t}\,\one[x\in Q_t]$, which, for fixed pairs, is proportional to $\one[x\in Q_t]$.
Summing over $x$ shows that the pairs have probability $|Q_t|/(Z(k(k-1))^t)$, so $\E[1/|Q_t|]$ is the number of sequences of $t$ pairs with $Q_t\ne\emptyset$ divided by $Z(k(k-1))^t$, which is at most $k^{2t}/(Z(k(k-1))^t)=1/(Z(1-1/k)^t)$.
Since $\ln(1-1/k)\ge-1/(k-1)$ and $t\le T$, we have $(1-1/k)^{-t}\le e^{\varepsilon k\ln k-4k}$, and $Z\ge\E_0Z/2\ge e^{\varepsilon k\ln k-k}/2$ by \cref{lem:first}.
So $2^k\,\E[1/|Q_t|]\le2^{k+1}e^{-3k}\le e^{-2k}$.
\end{proof}

\subsection{The adversary}\label{sec:adversary}

Fix a randomized online algorithm $\ALG$ for $\cX=[k]^k$ with starting configuration $a$.
If $\ALG$ is not $\gamma$-competitive for any $\gamma$, there is nothing to prove.
Otherwise fix $\gamma$ and $c_\ALG$ such that $\E[\ALG(\rho)]\le\gamma\OPT(\rho)+c_\ALG$ for every finite request sequence $\rho$.
For a finite request sequence $\rho$, let $\mu_\rho$ be the distribution of the configuration of $\ALG$ after it serves $\rho$; it is determined by $\ALG$ and $\rho$.

\begin{lemma}\label[lemma]{lem:confine}
Let $\rho_0$ be a finite request sequence, let $\mathcal R$ be a finite set of requests that occur in $\rho_0$ such that the set $Q$ of configurations serving every request in $\mathcal R$ is nonempty, and let $\tau>0$.
There is a finite sequence $\rho_1$ of requests from $\mathcal R$, determined by $\ALG$, $\rho_0$, $\mathcal R$ and $\tau$, such that $\mu_{\rho_0\rho_1}(\cX\setminus Q)\le\tau$.
\end{lemma}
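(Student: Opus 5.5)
We build $\rho_1$ greedily from the competitiveness of $\ALG$: while the algorithm sits outside $Q$ with probability more than $\tau$, we issue a request of $\mathcal R$ that it fails with probability at least $\tau/|\mathcal R|$, and we show this cannot go on forever. Every such request costs the algorithm something in expectation but costs a suitable offline solution nothing.

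Concretely, set $\rho^{(0)}=\rho_0$. Given $\rho^{(n)}$, if $\mu_{\rho^{(n)}}(\cX\setminus Q)\le\tau$, stop and let $\rho_1$ be the appended part. Otherwise, every configuration outside $Q$ fails some request of $\mathcal R$, so a union bound gives some $r\in\mathcal R$ with $\mu_{\rho^{(n)}}(\cX\setminus G(r))\ge\tau/|\mathcal R|$. Choose the first such $r$ in a fixed order of $\mathcal R$ and set $\rho^{(n+1)}=\rho^{(n)}r$. This choice depends only on $\mu_{\rho^{(n)}}$, which is determined by $\ALG$ and the sequence so far, so $\rho_1$ is determined by $\ALG$, $\rho_0$, $\mathcal R$ and $\tau$, as the statement requires. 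When $r$ arrives, the algorithm is at a configuration failing $r$ with probability at least $\tau/|\mathcal R|$, and in that event it must move at least distance $1$. So each appended request raises the expected cost by at least $\tau/|\mathcal R|$.

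To see that the procedure stops, fix an optimal offline solution for $\rho_0$. Since $Q\ne\emptyset$ and every request in $\mathcal R$ occurs in $\rho_0$, extend this solution to serve $\rho_0\rho_1$ as follows. At the end of $\rho_0$ it moves once to a fixed $q\in Q$, at cost at most $k$, and then stays at $q$; this serves every appended request, since each lies in $\mathcal R$ and $q$ serves all of $\mathcal R$. Hence $\OPT(\rho^{(n)})\le\OPT(\rho_0)+k$ for every $n$. Competitiveness then bounds the expected cost of the algorithm on $\rho^{(n)}$:
\[
\E[\ALG(\rho^{(n)})]\le\gamma\bigl(\OPT(\rho_0)+k\bigr)+c_\ALG .
\]
By the previous paragraph, this expected cost is also at least $n\tau/|\mathcal R|$. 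Therefore
\[
n\le\frac{|\mathcal R|}{\tau}\bigl(\gamma(\OPT(\rho_0)+k)+c_\ALG\bigr),
\]
so the procedure stops after finitely many steps. The sequence $\rho_1$ is finite and consists of requests from $\mathcal R$, and when the procedure stops we have $\mu_{\rho_0\rho_1}(\cX\setminus Q)\le\tau$ by the stopping rule.

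The main subtlety is in the cost accounting for the algorithm. The expected cost on $\rho^{(n+1)}$ exceeds that on $\rho^{(n)}$ by the expected movement at the new step, and this movement is at least the probability that the previous configuration fails $r$, which is $\mu_{\rho^{(n)}}(\cX\setminus G(r))$. This holds because the adversary is oblivious: the sequence $\rho^{(n)}$ is fixed, so $\mu_{\rho^{(n)}}$ is exactly the law of the algorithm's configuration before $r$ arrives.
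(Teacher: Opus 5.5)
Your proof is correct and matches the paper's argument: you greedily append requests of $\mathcal R$ that each raise the algorithm's expected cost by a fixed positive amount, while the offline cost stays below $\OPT(\rho_0)+k$, so competitiveness forces the procedure to stop. The one difference is how the next request is chosen. You pick it by a union bound over $\mathcal R$, which gives a gain of $\tau/|\mathcal R|$ per request, whereas the paper first picks a configuration $u\notin Q$ with $\mu_{\rho'}(u)>\tau/|\cX|$ and then appends a request that $u$ fails; your version has the small advantage of not needing $\cX$ to be finite.
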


\begin{proof}
We build $\rho_1$ one request at a time.
While the current sequence $\rho'$ satisfies $\mu_{\rho'}(\cX\setminus Q)>\tau$, let $u$ be the first configuration outside $Q$, in a fixed order of $\cX$, with $\mu_{\rho'}(u)>\tau/|\cX|$, and append the first request of $\mathcal R$, in a fixed order, that $u$ fails, which exists because $u\notin Q$.
With probability $\mu_{\rho'}(u)$ the algorithm is at $u$ and must move, so each appended request raises the expected cost of $\ALG$ by more than $\tau/|\cX|$.
On the other hand, an offline solution can serve $\rho_0$ optimally, move to a fixed configuration of $Q$ at the first appended request, at cost at most $k$, and stay there, since every appended request lies in $\mathcal R$.
So every sequence $\rho'$ built this way has $\OPT(\rho')\le\OPT(\rho_0)+k$.
If the construction never stopped, $\E[\ALG(\rho')]$ would eventually exceed $\gamma(\OPT(\rho_0)+k)+c_\ALG$, a contradiction.
\end{proof}

\paragraph{One phase.}
Let $\tau=k^{-3}$.
A \emph{phase} draws $\sigma$ and $(i_1,a_1),\dots,(i_T,a_T)$ as in \cref{sec:posterior}, independently of all earlier phases, and issues the following requests.
\begin{enumerate}[(1)]
\item The setup requests of $F$, followed by the confinement sequence of \cref{lem:confine} for the set $\mathcal R_0$ of setup requests.
\item For $t=1,\dots,T$: the request $r_t$, followed by the confinement sequence for $\mathcal R_t=\mathcal R_0\cup\{r_1,\dots,r_t\}$.
\end{enumerate}
By \cref{lem:enforce}, the set of configurations serving every request of $\mathcal R_t$ is $Q_t$, which contains $\sigma$.
Each confinement sequence is computed from the distributions $\mu_\rho$ of $\ALG$, never from its random choices.
Hence the requests of any number of phases form a random sequence that is independent of the random choices of $\ALG$, as required for an oblivious adversary, and it takes finitely many values.

\begin{lemma}\label[lemma]{lem:phase}
For large $k$, the expected cost of $\ALG$ during each phase is at least $\frac1{20}k^2\ln k$.
\end{lemma}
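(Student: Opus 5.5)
We lower-bound the cost of $\ALG$ in step~(2) only, by charging each of the $T$ requests $r_{t+1}$ together with the confinement sequence that follows it. Let $Y_t$ be the configuration of $\ALG$ after $r_t$ and its confinement sequence (for $t=0$, after step~(1)), and let $Y_{t+1}$ be its configuration after $r_{t+1}$ and its confinement sequence. We will show that for each $0\le t<T$, with probability at least roughly $(\beta-\eta)/(k-1)$ minus small error terms, the movement from $Y_t$ to $Y_{t+1}$ is more than $\beta k$. Summing over $t$ then gives an expected cost of at least about
\[
T\cdot\frac{\beta-\eta}{k-1}\cdot\beta k\approx(\beta-\eta)\beta\,\varepsilon\,k^2\ln k .
\]
With $\beta-\eta=0.69$, $\beta=0.77$ and $\varepsilon=0.1$, this is about $0.053\,k^2\ln k$, which exceeds $\frac1{20}k^2\ln k$ once the error terms are $o(1)$ and $T=(1-o(1))\varepsilon k^2\ln k$.

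Consider one step $t$. Condition on the whole history of the phase up to time $t$. The requests so far are a function of $(i_1,a_1),\dots,(i_t,a_t)$ and earlier phases, so by the oblivious structure $Y_t$ is independent of $\sigma$ given these pairs. By \cref{lem:posterior}, $\sigma$ is uniform on $Q_t$. We define a good event $\mathcal G_t$ with three parts:
\begin{enumerate}[(i)]
\item $Y_t\in Q_t$. This fails with probability at most $\tau=k^{-3}$ by \cref{lem:confine}, since $\mu(\cX\setminus Q_t)\le\tau$.
\item Writing $C$ for the component of $Q(F)$ containing $Y_t$, we have $\sigma\notin C$. Given the history and $Y_t$, this fails with probability $|C\cap Q_t|/|Q_t|\le 2^k/|Q_t|$ by \cref{lem:structure}. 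Averaging and applying \cref{lem:posterior} bounds this by $e^{-2k}$.
\item $Y_{t+1}\in Q_{t+1}$. This fails with probability at most $\tau$.
\end{enumerate}

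On (i) and (ii), the new pair $(i_{t+1},a_{t+1})$ is drawn given $\sigma$, uniformly among pairs that $\sigma$ avoids. By \cref{lem:elim}, $r_{t+1}$ eliminates $C$ with conditional probability at least $(\beta-\eta)/(k-1)$. If $C$ is eliminated and (iii) holds, then $Y_{t+1}\in Q_{t+1}\subseteq Q(F)\setminus C$, since $r_{t+1}$ is in $\mathcal R_{t+1}$. By \cref{lem:structure}, $Y_{t+1}$ then differs from $Y_t$ in more than $\beta k$ coordinates, and by the triangle inequality the cost incurred in between is more than $\beta k$.

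Hence the step costs at least
\[
\beta k\Bigl(\frac{\beta-\eta}{k-1}-2k^{-3}-e^{-2k}\Bigr)
\]
in expectation. We subtract the failure probabilities rather than conditioning on (iii), which avoids any dependence issue. The error terms are $O(k^{-3})$ against a main term of order $1/k$, so they are negligible.

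The main obstacle is the independence claim in~(ii): that $\sigma$ is still uniform on $Q_t$ conditioned also on $Y_t$ and on the prior history. This needs care, because the confinement sequences depend on the pairs, and $\ALG$'s randomness is independent of $\sigma$ only given the issued requests. The argument is that $\sigma$ influences the issued sequence only through the pairs $(i_s,a_s)$, and $\ALG$ sees only requests. So $\sigma$ and $Y_t$ are conditionally independent given the pairs, and \cref{lem:posterior} applies.

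The remaining work is the constant check. We have $T\ge(k-1)(\varepsilon k\ln k-4k)-1$, so
\[
\frac{T\beta k(\beta-\eta)}{k-1}\ge 0.77\cdot0.69\cdot0.1\,k^2\ln k-O(k^2)\ge 0.0531\,k^2\ln k-O(k^2),
\]
which is at least $\frac1{20}k^2\ln k$ for large $k$.
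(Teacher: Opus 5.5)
Your proposal is correct and follows the paper's proof essentially step for step. You use the same conditioning on earlier phases and the first $t$ pairs, which makes $\sigma$ uniform on $Q_t$ and independent of $Y_t$, and the same good event $Y_t\in Q_t$, $\sigma\notin C(Y_t)$, whose failure probability is at most $e^{-2k}+\tau$. As in the paper, you then apply \cref{lem:elim}, subtract $\Pr[Y_{t+1}\notin Q_{t+1}]\le\tau$, and finish with $\varepsilon\beta(\beta-\eta)=0.05313>\frac1{20}$. The only difference is cosmetic: you subtract the error terms additively instead of multiplying the main term by $1-e^{-2k}-\tau$, and the resulting bound is equally sufficient.
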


\begin{proof}
Fix a phase and $t\in\{0,\dots,T-1\}$.
Let $Y_t$ be the configuration of $\ALG$ at the end of the confinement sequence that follows $r_t$ (for $t=0$, the one that follows the setup requests).
When $Y_t\in Q(F)$, let $C(Y_t)$ be the component of $Q(F)$ that contains it.
Let $\cH_t$ consist of the draws of all earlier phases together with $(i_1,a_1),\dots,(i_t,a_t)$ of the current phase.
The requests issued up to time $Y_t$ are a function of $\cH_t$, and the random choices of $\ALG$ are independent of $(\sigma,\cH_t)$.
So, given $\cH_t$, the configuration $Y_t$ is independent of $\sigma$, and since $\sigma$ is independent of the earlier phases, \cref{lem:posterior} shows that $\sigma$ is uniform on $Q_t$ given $(\cH_t,Y_t)$.

Let $E$ be the event that $Y_t\in Q_t$ and $\sigma\notin C(Y_t)$.
If $Y_t\in Q_t$, then by \cref{lem:structure} the conditional probability that $\sigma\in C(Y_t)$, given $(\cH_t,Y_t)$, is at most $2^k/|Q_t|$.
With \cref{lem:posterior,lem:confine}, this gives $\Pr[\overline E]\le2^k\,\E[1/|Q_t|]+\tau\le e^{-2k}+\tau$.

Given $\sigma$, the pair $(i_{t+1},a_{t+1})$ is independent of $(\cH_t,Y_t)$ and uniform among the pairs $(i,a)$ with $\sigma_i\ne a$.
So by \cref{lem:elim}, on $E$ the request $r_{t+1}$ eliminates $C(Y_t)$ with conditional probability at least $(\beta-\eta)/(k-1)$.
If it does and $Y_{t+1}\in Q_{t+1}$, then $Y_{t+1}\in Q(F)\setminus C(Y_t)$, so by \cref{lem:structure} $\ALG$ moves a distance of more than $\beta k$ between the two times.
Since $\Pr[Y_{t+1}\notin Q_{t+1}]\le\tau$, the expected cost of $\ALG$ between the two times is at least $\beta k\bigl((\beta-\eta)(1-e^{-2k}-\tau)/(k-1)-\tau\bigr)\ge\beta(\beta-\eta)-1/k$ for large $k$.
Summing over $t=0,\dots,T-1$ and using $T\ge\varepsilon k^2\ln k-5k^2$ and $\varepsilon\beta(\beta-\eta)=0.05313>\frac1{20}$ proves the lemma for large $k$.
\end{proof}

\begin{proof}[Proof of \cref{thm:lower}]
Suppose that $\ALG$ is $\gamma$-competitive with additive constant $c_\ALG$, and let the adversary run $L$ phases.
In each phase, the offline solution that moves to that phase's $\sigma$ at the first request of the phase and stays there serves every request of the phase: $\sigma$ serves the setup requests and, by \cref{lem:forbid}, every $r_t$, and every confinement request repeats one of them.
So $\OPT\le Lk$ for every outcome.
By \cref{lem:phase}, $\ALG$ has expected cost at least $\frac1{20}Lk^2\ln k$.
Since the input is independent of the random choices of $\ALG$ and takes finitely many values, averaging the competitive guarantee over the input gives $\frac1{20}Lk^2\ln k\le\gamma Lk+c_\ALG$.
Letting $L\to\infty$ yields $\gamma\ge\frac1{20}k\ln k$.
\end{proof}

\section{A \texorpdfstring{$20^k$}{20\^{}k}-competitive algorithm for weighted uniform metrics}\label{sec:weighted}

In this section the weights are arbitrary, and we prove \cref{thm:weighted}.
We use the phases of Bijoy, Mondal, and Chiplunkar~\cite{BMC26}, stated for generalized requests, and replace their rule for following virtual algorithms by the aggregation of \cref{lem:aggregate}.
Bijoy et al.\ sketch how their phases and their offline lower bound carry over to generalized requests~\cite[Appendix~C]{BMC26}; we give complete proofs in this setting.
As in their sketch, demand is counted separately in every coordinate (\cref{lem:basic}(d)).
\Cref{sec:phases} defines the phases, \cref{sec:aggregate} proves the aggregation lemma, \cref{sec:strategies} bounds the cost of our strategy for a phase, \cref{sec:offline} shows that every offline solution pays a $2^{-k}$ fraction of the largest weight per completed phase of level $k$, and \cref{sec:weighted-proof} combines these bounds.

\paragraph{Normalized weights.}
Until \cref{sec:weighted-proof}, we assume $k\ge2$ and that the weights, now written $v_1,\dots,v_k$, are positive and satisfy
\[
m_\ell:=v_{\ell+1}/v_\ell\in\{2,3,\dots\}\qquad(1\le\ell<k).
\]
Following~\cite{BMC26}, let $d_\ell=2^{5^{\ell-1}-1}$, so $d_1=1$ and $d_2=16$; we also assume that $|M_\ell|\ge d_\ell-1$ for every $\ell\ge2$.
Movement costs are measured by $d_v$, and $\sum_{i\le\ell}v_i\le2v_\ell$ for every $\ell$.

\subsection{Phases}\label{sec:phases}

Let $M=M_1\cup\dots\cup M_k$.
For $1\le\ell\le k$, let $\cU_\ell$ be the family of sets $U\subseteq M_{\ell+1}\cup\dots\cup M_k$ that contain at most one point of each $M_j$.
A set $U$ \emph{satisfies} a request $r$ if $r[j]\in U$ for some $j$, and a configuration $x$ \emph{covers} $U$ if $x[j]=u$ for every $u\in U\cap M_j$.
A configuration that covers $U$ serves every request that $U$ satisfies.
A \emph{demand vector} is a finitely supported function $V:M\to[0,\infty)$; $V|_j$ is its restriction to $M_j$, $\|V|_j\|=\sum_{p\in M_j}V[p]$, and $\one_p$ is the indicator vector of a point $p$.

A \emph{parser} reads a stream of requests.
At each request it either accepts the request or stops; when it stops it does not consume the current request, which is then read by whatever runs next.
In our strategy for an $(\ell,U)$-phase (\cref{lem:strategy}), only servers $1,\dots,\ell$ move and the configuration covers $U$, so the requests that $U$ satisfies cost nothing.
The witness of a $(1,U)$-phase, defined next, is its first request that $U$ does not satisfy; the strategy then places server~$1$ at the first coordinate of the witness, and the phase stops at the first later request that neither $U$ nor this position of server~$1$ satisfies.

\begin{definition}[phases and multiphases, after~\cite{BMC26}]\label[definition]{def:phase}
Let $1\le\ell\le k$ and $U\in\cU_\ell$.
\begin{itemize}
\item The \emph{$(1,U)$-phase} accepts every request that $U$ satisfies, until it reads a request $r^*$ that $U$ does not satisfy, its \emph{witness}.
It accepts $r^*$, then accepts every request that $U\cup\{r^*[1]\}$ satisfies, and stops at the first request that $U\cup\{r^*[1]\}$ does not satisfy.
Its demand vector is $\sum_{j=1}^k\one_{r^*[j]}$.
\item For $\ell<k$, the \emph{$(\ell,U)$-multiphase} runs $m_\ell$ copies of the $(\ell,U)$-phase one after another, each starting with the request at which the previous one stopped; it stops when the last one stops, and its demand vector is the sum of theirs.
\item For $\ell\ge2$, the $(\ell,U)$-phase first runs an $(\ell-1,U)$-multiphase, its \emph{explore part}, with demand vector $V^0$.
When the explore part stops, the \emph{critical set} $S$ consists of the $d_\ell-1$ points $p\in M_\ell$ with the largest values $V^0[p]$, with ties broken by a fixed well-order of $M_\ell$.
From the request at which the explore part stopped on, the phase runs, for every $p\in S$, an $(\ell-1,U\cup\{p\})$-multiphase, its \emph{exploit part} for $p$; all of them read the same requests, each until it stops.
The phase accepts a request as long as at least one exploit part accepts it, and stops when all of them have stopped.
Its demand vector is $V^0+\sum_{p\in S}V^p$, where $V^p$ is the demand vector of the exploit part for $p$.
\end{itemize}
A phase or multiphase is \emph{completed} if it has stopped; demand vectors are used only for completed ones.
\end{definition}

Note that $U\cup\{p\}\in\cU_{\ell-1}$ for $p\in M_\ell$.
The request sequence is parsed into consecutive $(k,\emptyset)$-phases, each starting with the request at which the previous one stopped.
For example, if the first coordinates of the requests are $a,a,b,b,c,\dots$, the first $(1,\emptyset)$-phase has the first request as its witness, accepts the second, and stops at the third, which is the witness of the next $(1,\emptyset)$-phase; that phase accepts the fourth and stops at the fifth.

\begin{lemma}\label[lemma]{lem:basic}
\begin{enumerate}[(a)]
\item Every phase and multiphase accepts the first request it reads; in particular, every completed one has accepted at least one request.
\item Which requests a phase or multiphase accepts, when it stops, its critical sets, and its demand vector once it is completed are determined by $\ell$, $U$ and the requests it reads.
\item If $r_1^*$ and $r_2^*$ are the witnesses of two consecutive phases of a completed $(1,U)$-multiphase, then $r_1^*[1]\ne r_2^*[1]$.
\item For every $j\in[k]$, the demand vector $V$ of a completed $(\ell,U)$-phase satisfies $\|V|_j\|=\kappa_\ell$, where $\kappa_\ell=(v_\ell/v_1)\prod_{i=1}^\ell d_i$, and that of a completed $(\ell,U)$-multiphase satisfies $\|V|_j\|=m_\ell\kappa_\ell$.
Moreover $V[u]=0$ for every $u\in U$.
\end{enumerate}
\end{lemma}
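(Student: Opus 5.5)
Each of the four parts follows by induction on $\ell$, in the order of \cref{def:phase}: first the $(1,U)$-phase, then the $(\ell,U)$-multiphase from the $(\ell,U)$-phase, then the $(\ell,U)$-phase from the $(\ell-1,\cdot)$-multiphases. For~(a), the $(1,U)$-phase accepts its first request whichever case applies: either $U$ satisfies it, or it is the witness. A multiphase begins with a phase. An $(\ell,U)$-phase begins with its explore part. So (a) propagates upward at once.

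Part~(b) is also a direct induction. The $(1,U)$-phase depends only on $U$ and the requests: the witness is the first request that $U$ fails, and the stopping rule uses $U\cup\{r^*[1]\}$. For a multiphase, each copy starts at the request where the previous copy stopped, and by induction that copy is determined by $\ell$, $U$ and the requests it reads. For a phase, the explore part is determined. The critical set is then a function of $V^0$ and the fixed well-order of $M_\ell$. Each exploit part is determined by $U\cup\{p\}$ and the common requests. Acceptance and stopping are defined from these pieces, and so is the demand vector.

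For~(c), take two consecutive phases of a completed $(1,U)$-multiphase. The first phase stops at the first request $r$ that $U\cup\{r_1^*[1]\}$ does not satisfy. This request $r$ is the first request read by the second phase. Since $U$ does not satisfy $r$, the second phase takes $r$ as its witness, so $r_2^*=r$. Because $r[1]\ne r_1^*[1]$, we get $r_2^*[1]\ne r_1^*[1]$.

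For~(d), we again use induction. For $\ell=1$, the demand vector $\sum_j\one_{r^*[j]}$ has exactly mass $1$ in each coordinate $j$, since the points of different $M_j$ are distinct. Also $\kappa_1=d_1=1$. Every $u\in U$ has zero demand: $U$ does not satisfy $r^*$, so $r^*[j]\notin U$ for every $j$. The multiphase statement is the sum of $m_\ell$ phase vectors, giving $m_\ell\kappa_\ell$ per coordinate. For $\ell\ge2$, each coordinate receives $m_{\ell-1}\kappa_{\ell-1}$ from the explore part and $m_{\ell-1}\kappa_{\ell-1}$ from each of the $d_\ell-1$ exploit parts. The total is $d_\ell m_{\ell-1}\kappa_{\ell-1}=d_\ell(v_\ell/v_{\ell-1})(v_{\ell-1}/v_1)\prod_{i<\ell}d_i=\kappa_\ell$.

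The step needing care is that $S$ really has $d_\ell-1$ points; the assumption $|M_\ell|\ge d_\ell-1$ guarantees this. A second point is the vanishing on $U$. The explore part vanishes on $U$ by induction. The exploit part for $p$ vanishes on $U\cup\{p\}\supseteq U$, again by induction, so every term of $V^0+\sum_{p\in S}V^p$ vanishes on $U$. A third point is that completion of the phase means every exploit part has stopped, so their demand vectors are defined.
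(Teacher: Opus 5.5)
Your proof is correct and follows the paper's argument essentially step for step. It uses the same induction on $\ell$ for (a) and (d), the same observation for (c) that the request stopping the first phase is not satisfied by $U$ and so is the next witness, and determinism of the rules for (b). Your remarks that $|S|=d_\ell-1$ and that each part vanishes on a superset $U\cup\{p\}\supseteq U$ make explicit what the paper leaves implicit.
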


\begin{proof}
(a) A $(1,U)$-phase accepts its first request: either $U$ satisfies it, or it is the witness.
A multiphase starts with a phase, and a phase of level $\ell\ge2$ starts with its explore part, so (a) follows by induction on $\ell$.
(b) holds because all rules are deterministic.

(c) The first phase stops at a request that $U\cup\{r_1^*[1]\}$ does not satisfy.
This request is therefore not satisfied by $U$, and its first coordinate differs from $r_1^*[1]$.
It is the first request read by the second phase, which does not satisfy $U$, so it is the witness $r_2^*$.

(d) By induction on $\ell$, as in~\cite[Lemma~46]{BMC26}.
For $\ell=1$, the demand vector $\sum_j\one_{r^*[j]}$ has one unit in each $M_j$, and $\kappa_1=d_1=1$; since $U$ does not satisfy $r^*$, no point of $U$ receives demand.
A multiphase sums $m_\ell$ phases.
For $\ell\ge2$, the explore part and the $d_\ell-1$ exploit parts of a completed phase are completed $(\ell-1)$-multiphases, each with $\|V|_j\|=m_{\ell-1}\kappa_{\ell-1}$, so the phase has $\|V|_j\|=d_\ell m_{\ell-1}\kappa_{\ell-1}=\kappa_\ell$.
Each of these parts belongs to a set that contains $U$, so by induction its demand vector vanishes on $U$.
\end{proof}

\subsection{Aggregating parallel runs}\label{sec:aggregate}

The next lemma is a version of the experts algorithms with switching costs of Blum and Burch~\cite{BB00}, in which experts drop out over time.

\begin{lemma}\label[lemma]{lem:aggregate}
Let $(\cX',d')$ be a metric space of diameter at most $\Delta>0$, and consider $n$ experts over $t^*$ steps.
Initially every expert $i$ has a position $x_i(0)\in\cX'$.
At step $t=1,\dots,t^*$, a nonempty set $\cA_t\subseteq\cA_{t-1}$ of active experts is revealed, where $\cA_0=[n]$, together with a position $x_i(t)$ and a cost $\lambda_i(t)\ge d'(x_i(t-1),x_i(t))$ for every $i\in\cA_t$.
There is a randomized master that starts at an arbitrary point of $\cX'$, is at $x_{I_t}(t)$ for some $I_t\in\cA_t$ after step $t$, and satisfies, for every fixed choice of the active sets, positions and costs and every expert $j\in\cA_{t^*}$,
\[
\E[\text{movement of the master}]\le\Delta+2\Delta\ln n+2\sum_{t=1}^{t^*}\lambda_j(t),
\]
where the expectation is over the random choices of the master only.
\end{lemma}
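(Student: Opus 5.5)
We plan to use exponential weights with learning rate $\eta=1/(2\Delta)$, combined with a maximal coupling between consecutive distributions. Let $L_i(t)=\sum_{s\le t}\lambda_i(s)$, and put
\[
w_i(t)=\one[i\in\cA_t]\,e^{-L_i(t)/(2\Delta)},\qquad W_t=\sum_i w_i(t),\qquad p_i(t)=w_i(t)/W_t .
\]
The master keeps a random index $I_t$ with distribution $p(t)$ and sits at $x_{I_t}(t)$. At time $0$ it draws $I_0$ from the uniform distribution and moves there, at cost at most $\Delta$. At step $t$ it passes from $I_{t-1}$ to $I_t$ by a maximal coupling: it keeps $I_t=I_{t-1}$ with probability $\min(1,p_{I_{t-1}}(t)/p_{I_{t-1}}(t-1))$, and otherwise resamples from the normalized excess $(p(t)-p(t-1))^+$. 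This keeps $I_t\sim p(t)$ exactly, in the same way as the exact-following argument of \cref{lem:follow}, and the probability of switching is the total variation distance $\|p(t)-p(t-1)\|_{TV}$.

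We bound the expected movement in step $t$ by
\[
\sum_i p_i(t-1)\,\lambda_i(t)\;+\;\Delta\cdot\|p(t)-p(t-1)\|_{TV}.
\]
The first term is the movement of the followed expert if the master stays. Experts dropping out of $\cA_t$ have $\lambda$ undefined; we treat their weight as removed. The second term covers switching, which costs at most $\Delta$. The goal is to charge both terms to the potential drop $2\Delta\ln(W_{t-1}/W_t)$.

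The key per-step inequality is as follows. Write $W_t/W_{t-1}=1-b_t$, where $b_t$ is the relative weight lost, both through dropouts and through the factors $e^{-\lambda_i/(2\Delta)}$. The total variation distance between $p(t-1)$ and $p(t)$ is at most $b_t$, since weights only decrease: the decreases have total relative mass $b_t$, and renormalizing raises the others by the same total. For the cost term we would like $\sum_i p_i(t-1)\lambda_i(t)\le\Delta\sum_i p_i(t-1)\,2(1-e^{-\lambda_i/(2\Delta)})$, using $1-e^{-u}\ge u/2$ for $u\le1$. This needs $\lambda_i\le 2\Delta$, which does not hold automatically. To handle it, we clip: $\lambda$ above $\Delta$ is never needed for movement, since movement is at most $\Delta$, so we replace $\lambda_i(t)$ by $\min(\lambda_i(t),\Delta)$ in the weights. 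This only helps the final bound. Then each term is at most $2\Delta b_t$ in total, so the per-step movement is at most about $2\Delta b_t\le 2\Delta\ln(W_{t-1}/W_t)$ via $1-z\le-\ln z$.

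The constants need checking. The cost term gives $2\Delta\sum_i p_i(1-e^{-\lambda_i/(2\Delta)})$ and the switching term gives $\Delta b_t$, so we may need $\eta=1/\Delta$ with the inequality $u\le(1-e^{-u})/(1-e^{-1})$. I expect this constant bookkeeping to be the main obstacle: getting exactly $2\Delta\ln n+2\sum\lambda_j$. To fix it we would split the loss $b_t$ into the part due to dropouts (cost only $\Delta b$ from switching) and the part due to costs, then tune the rate accordingly.

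Summing over $t$ gives movement at most $\Delta+2\Delta\ln(W_0/W_{t^*})$ with $W_0=n$. For any $j\in\cA_{t^*}$ we have $W_{t^*}\ge e^{-L_j(t^*)/(2\Delta)}$. Hence $2\Delta\ln(W_0/W_{t^*})\le 2\Delta\ln n+L_j(t^*)$, which is within the claimed $2\Delta\ln n+2\sum_t\lambda_j(t)$, leaving slack to absorb the rate adjustment.
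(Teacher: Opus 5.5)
Your architecture matches the paper's: exponential weights, a maximal coupling that keeps $I_t\sim p(t)$ exactly, $\mathrm{TV}(p(t-1),p(t))\le b_t$, and the potential $\ln W_t$. Clipping $\lambda_i(t)$ at $\Delta$ is also legitimate. The gap is the step you leave open: with the charging you propose, the constants of the statement are out of reach.

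You charge the movement of a master that does not switch to the prior, $\sum_ip_i(t-1)\lambda_i(t)$. Take rate $c/\Delta$ and $\lambda_i(t)\le\Delta$. Then concavity gives at best $\lambda_i(t)\le\frac{\Delta}{1-e^{-c}}(1-e^{-c\lambda_i(t)/\Delta})$, and this is tight at $\lambda_i(t)=\Delta$. So your per-step bound is $\Delta(1+\frac1{1-e^{-c}})b_t$. After telescoping, the coefficient of $\Delta\ln n$ is $1+\frac1{1-e^{-c}}>2$ for every $c$, about $2.58$ at $c=1$.

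Separating dropouts from costs does not help, because the bad case has no dropouts. If one expert of small prior mass $q$ pays $\Delta$ and the others pay $0$, then $\mathrm{TV}\approx b_t\approx q(1-e^{-c})$, while the prior-weighted cost is $q\Delta$. The slack you note in the $\lambda_j$ term cannot absorb an excess in front of $\ln n$.

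Your displayed inequality $\sum_ip_i(t-1)\lambda_i(t)\le2\Delta\sum_ip_i(t-1)(1-e^{-\lambda_i(t)/(2\Delta)})$ is also false. It amounts to $u\le1-e^{-u}$, and $1-e^{-u}\ge u/2$ only gives the factor $4\Delta$.

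What your plan actually proves is $\Delta+K\Delta\ln n+K\sum_t\lambda_j(t)$ with $K\approx2.58$. That would still give an $\exp(O(k))$ ratio in \cref{thm:weighted}, but it is not this lemma, and it does not give the recursion $c_\ell=3c_{\ell-1}+4\ln(d_\ell-1)+2$.

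The missing idea is to charge the non-switching movement to the posterior. Given $I_{t-1}=i$, the master stays with probability $\min\{1,p_i(t)/p_i(t-1)\}$. So this movement is at most $\sum_i\min\{p_i(t-1),p_i(t)\}\lambda_i(t)\le\sum_ip_i(t)\lambda_i(t)$, and dropped experts contribute nothing.

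Now use rate $1/\Delta$ and set $\psi_t=\ln(W_{t-1}/W_t)$. On the support of $p(t)$ one has $\ln(p_i(t)/p_i(t-1))=\psi_t-\lambda_i(t)/\Delta$. Nonnegativity of the relative entropy of $p(t)$ with respect to $p(t-1)$ then gives $\sum_ip_i(t)\lambda_i(t)\le\Delta\psi_t$, with no clipping needed.

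Together with $\mathrm{TV}\le1-W_t/W_{t-1}\le\psi_t$, each step costs at most $2\Delta\psi_t$ in expectation. Finally, $W_{t^*}\ge e^{-\sum_t\lambda_j(t)/\Delta}$ gives exactly $\Delta+2\Delta\ln n+2\sum_t\lambda_j(t)$. This is the paper's proof.
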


\begin{proof}
Let $\omega_i(0)=1$, and for $t\ge1$ let $\omega_i(t)=\omega_i(t-1)e^{-\lambda_i(t)/\Delta}$ for $i\in\cA_t$ and $\omega_i(t)=0$ otherwise.
Let $\nu_t=\omega(t)/\|\omega(t)\|$.
The master draws $I_0$ from $\nu_0$ and moves to $x_{I_0}(0)$, at cost at most $\Delta$.
At step $t$, given $I_{t-1}=i$, it keeps $I_t=i$ with probability $\min\{1,\nu_t(i)/\nu_{t-1}(i)\}$, and otherwise draws $I_t$ with probabilities proportional to $(\nu_t(i')-\nu_{t-1}(i'))_+$; then it moves to $x_{I_t}(t)$.
This maximal coupling gives $I_t\sim\nu_t$ and $\Pr[I_t\ne I_{t-1}]=\mathrm{TV}(\nu_{t-1},\nu_t)$.

Fix $t$, and let $\zeta_t=\sum_{i\in\cA_t}\nu_{t-1}(i)e^{-\lambda_i(t)/\Delta}=\|\omega(t)\|/\|\omega(t-1)\|\in(0,1]$ and $\psi_t=-\ln\zeta_t$.
Since $\nu_t(i)\ge\nu_{t-1}(i)e^{-\lambda_i(t)/\Delta}\one[i\in\cA_t]$, we get $\mathrm{TV}(\nu_{t-1},\nu_t)\le\sum_i\nu_{t-1}(i)\bigl(1-e^{-\lambda_i(t)/\Delta}\one[i\in\cA_t]\bigr)=1-\zeta_t\le\psi_t$.
On the support of $\nu_t$, $\ln(\nu_t(i)/\nu_{t-1}(i))=-\lambda_i(t)/\Delta+\psi_t$, and the relative entropy of $\nu_t$ with respect to $\nu_{t-1}$ is nonnegative, so $\sum_i\nu_t(i)\lambda_i(t)\le\Delta\psi_t$.
The master moves at most $\Delta$ when it switches experts and $\lambda_{I_t}(t)$ otherwise, so its expected movement at step $t$ is at most $\Delta\,\mathrm{TV}(\nu_{t-1},\nu_t)+\sum_i\nu_t(i)\lambda_i(t)\le2\Delta\psi_t$.
Finally, $\sum_t\psi_t=\ln\|\omega(0)\|-\ln\|\omega(t^*)\|\le\ln n+\sum_t\lambda_j(t)/\Delta$, because $\omega_j(t^*)=e^{-\sum_t\lambda_j(t)/\Delta}$.
\end{proof}

\subsection{Phase strategies}\label{sec:strategies}

The phases are determined by the requests alone (\cref{lem:basic}(b)); our strategy adds two independent sources of randomness, the random choices of the simulated copies and those of the master of \cref{lem:aggregate}, which chooses the copy that the real servers follow.
Let $c_1=1$ and $c_\ell=3c_{\ell-1}+4\ln(d_\ell-1)+2$ for $\ell\ge2$.

\begin{lemma}\label[lemma]{lem:strategy}
For every $1\le\ell\le k$ and $U\in\cU_\ell$, there is a randomized online strategy for the $(\ell,U)$-phase with the following properties, for every fixed request stream and every starting configuration that covers $U$.
\begin{enumerate}[(a)]
\item It moves only servers $1,\dots,\ell$.
\item It serves every request that the phase accepts, and it stops when the phase stops.
\item Its expected cost is at most $c_\ell v_\ell$, also when the stream ends before the phase stops.
\end{enumerate}
For $\ell<k$, running it for the $m_\ell$ phases of an $(\ell,U)$-multiphase gives expected cost at most $c_\ell v_{\ell+1}$.
\end{lemma}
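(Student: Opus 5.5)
We argue by induction on $\ell$, proving (a)--(c) for the phase together with the multiphase bound at each level before moving to the next.

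\emph{Base case $\ell=1$.} The strategy keeps the configuration until the witness $r^*$ arrives. Every earlier accepted request is satisfied by $U$, which the starting configuration covers, so it is served at no cost. At the witness the strategy moves server~$1$ to $r^*[1]$, at cost at most $v_1$. It then stays until the phase stops. Every request accepted afterwards is satisfied by $U\cup\{r^*[1]\}$, which the configuration now covers. So (a)--(c) hold with $c_1=1$.

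\emph{Multiphase claim.} Only servers $1,\dots,\ell$ move, so the configuration still covers $U$ when the next phase starts, and the phase bound applies to each of the $m_\ell$ phases in turn. The total is at most $m_\ell c_\ell v_\ell=c_\ell v_{\ell+1}$. If the stream ends early, the unfinished phase also obeys (c), so the bound still holds.

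\emph{Inductive step $\ell\ge2$.} First run the $(\ell-1,U)$-multiphase strategy for the explore part, at expected cost at most $c_{\ell-1}v_\ell$. Once the critical set $S$ is fixed, simulate, for each $p\in S$, a copy that starts from the current configuration, moves server~$\ell$ to $p$ (cost $v_\ell$), and then runs the $(\ell-1,U\cup\{p\})$-multiphase strategy. The moved configuration covers $U\cup\{p\}$, so the induction hypothesis applies to each copy.

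The real servers follow the copies through the master of \cref{lem:aggregate}, with the following parameters.
\begin{itemize}
\item The experts are the $n=d_\ell-1$ copies, and a copy becomes inactive when its exploit part stops.
\item The positions are the copies' configurations, and the costs are their actual movements.
\item The metric space is the set of configurations that agree with the start outside servers $1,\dots,\ell$, with $d_v$. Its diameter is at most $\sum_{i\le\ell}v_i\le 2v_\ell=:\Delta$.
\item The first step of each copy is its initial move of server~$\ell$.
\end{itemize}
When a request is accepted by the phase, at least one exploit part accepts it, so the active set is nonempty. Since the master always sits at the configuration of an active copy, every accepted request is served, giving (b). The master moves only within the space above, giving (a).

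\emph{Cost bound and the main obstacle.} \cref{lem:aggregate} bounds the master's movement by $\Delta+2\Delta\ln n+2\Lambda_j$, where $\Lambda_j$ is the cost of a copy $j$ that is active at the last step. The main obstacle is that the master's randomness is independent of the copies' internal randomness, so the lemma must be applied conditionally on the copies' random choices. We take $j$ to be the last copy to stop, chosen after the fact but among the experts active at the final step, which the lemma allows. We must then bound $\E[\Lambda_j]$, and this expected cost of a data-dependent copy is not obviously at most $v_\ell+c_{\ell-1}v_\ell$.

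The plan is to pick $j$ as a function of the requests alone. This works because which exploit parts are still running at each step depends only on the requests (\cref{lem:basic}(b)), not on the copies' randomness. Then $\E[\Lambda_j]\le v_\ell+c_{\ell-1}v_\ell$ by the induction hypothesis, applied to the fixed copy $j$, including when the stream is truncated.

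Adding the parts, the expected cost is at most
\[
c_{\ell-1}v_\ell+2v_\ell+4v_\ell\ln(d_\ell-1)+2(1+c_{\ell-1})v_\ell .
\]
The constant comes to $3c_{\ell-1}+4\ln(d_\ell-1)+4$, two more than the target $c_\ell$. The fix is to count the initial move of server~$\ell$ as part of the master's first switch rather than as a separate copy cost. The first term $\Delta$ in \cref{lem:aggregate} already pays for reaching $x_{I_0}(0)$ when $x_i(0)$ is the copy's position after that move, and it can be charged against $\lambda$ only for copies that have not yet moved. This recovers $c_\ell=3c_{\ell-1}+4\ln(d_\ell-1)+2$.
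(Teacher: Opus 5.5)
Your proposal is correct and, with your final fix, it is the paper's proof. The paper likewise takes each copy's initial position $x_i(0)$ to be the configuration with server $\ell$ already at $p$, so the master's initial move (at most $\Delta$) pays for that move and $\lambda_j$ is only the level-$(\ell-1)$ multiphase cost; it chooses $j$ as the first copy, in a fixed order, still active after the $t^*$ accepted requests, which by \cref{lem:basic}(b) depends on the requests alone. Your provisional accounting, which counts the move of server $\ell$ as a copy step, and the unclear clause about charging only copies that have not yet moved can simply be dropped.
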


\begin{proof}
For $\ell=1$, servers $2,\dots,k$ stay where they are and serve every request that $U$ satisfies; when the witness $r^*$ arrives, server $1$ moves to $r^*[1]$, at cost at most $v_1$, and the configuration then covers $U\cup\{r^*[1]\}$, so it serves every later request of the phase.

For a multiphase, servers $\ell+1,\dots,k$ never move, so each of the $m_\ell$ phases starts from a configuration that covers $U$; each phase uses fresh randomness and costs at most $c_\ell v_\ell$ in expectation given the past, so the multiphase costs at most $m_\ell c_\ell v_\ell=c_\ell v_{\ell+1}$.

Let $\ell\ge2$.
The strategy first runs the strategy for the $(\ell-1,U)$-multiphase on the real servers while the explore part runs, at expected cost at most $c_{\ell-1}v_\ell$.
When the explore part stops, it computes $S$ and starts, for every $p\in S$, a simulated copy of the strategy for the $(\ell-1,U\cup\{p\})$-multiphase with fresh independent randomness, from the current configuration with server $\ell$ moved to $p$; this configuration covers $U\cup\{p\}$.
Every copy reads every request until its multiphase stops.
The real servers follow the master of \cref{lem:aggregate}, whose experts are the copies, with their configurations as positions, their movement costs as costs, and the copies that have not stopped as active experts.
All these configurations agree on servers $\ell+1,\dots,k$, so their distances are at most $\sum_{i\le\ell}v_i\le2v_\ell=:\Delta$.
After each accepted request the real servers are at the configuration of an active copy, which serves the request, and when all copies have stopped, the phase stops.
So (a) and (b) hold.

For (c), condition on everything up to the end of the explore part; this fixes $S$ and the starting configurations of the copies.
By \cref{lem:basic}(b), when each copy stops is determined by the request stream, independently of all random choices.
Let $t^*$ be the number of requests that the exploit part accepts before it stops or the stream ends.
If $t^*=0$, the exploit part costs nothing.
Otherwise let $j$ be the first copy, in a fixed order, that is still active after these $t^*$ requests; $j$ does not depend on any random choice.
For every fixed outcome of the randomness of the copies, \cref{lem:aggregate} bounds the expected movement of the master by $\Delta+2\Delta\ln(d_\ell-1)+2\lambda_j$, where $\lambda_j$ is the movement of copy $j$ during these $t^*$ requests.
Averaging over the randomness of copy $j$, and applying the multiphase bound at level $\ell-1$ to the first $t^*$ requests, gives $\E[\lambda_j]\le c_{\ell-1}v_\ell$.
So the exploit part costs at most $(2+4\ln(d_\ell-1)+2c_{\ell-1})v_\ell$ in expectation, and the phase at most $(3c_{\ell-1}+4\ln(d_\ell-1)+2)v_\ell=c_\ell v_\ell$.
\end{proof}

Since $\ln(d_\ell-1)<(5^{\ell-1}-1)\ln2\le5^{\ell-1}-1$, induction gives $c_\ell\le3\cdot10\cdot5^{\ell-2}+4(5^{\ell-1}-1)+2\le10\cdot5^{\ell-1}$ for $\ell\ge2$, and $c_1=1$; so
\begin{equation}\label{eq:ck}
c_k\le10\cdot5^{k-1}.
\end{equation}

\subsection{The offline lower bound}\label{sec:offline}

Let $\rho$ be the sequence of requests accepted by a completed phase or multiphase.
A \emph{solution} for $\rho$ is a sequence $\xi=(\xi_0,\xi_1,\dots,\xi_n)$ of configurations, where $n=|\rho|$ and $\xi_t$ serves the $t$-th request of $\rho$ for $t\ge1$; the configuration $\xi_0$ is arbitrary.
Its cost is $\sum_{t=1}^nd_v(\xi_{t-1},\xi_t)$.
The \emph{restriction} of $\xi$ to a part, such as a phase of a multiphase, consists of the configuration just before the first request of the part, followed by the configurations at its requests.
The phases of a multiphase accept consecutive disjoint intervals of requests, so the costs of the restrictions of $\xi$ to them add up to at most the cost of $\xi$.

\begin{definition}[hard multiphase and contamination, after~\cite{BMC26}]\label[definition]{def:contamination}
Let $P$ be a completed $(\ell,U)$-phase and $\xi$ a solution for it.
\begin{itemize}
\item If $\ell\ge2$, the \emph{hard multiphase} $\cQ(P,\xi)$ is the explore part of $P$ if $\xi_0[\ell]\notin S$, and the exploit part for $\xi_0[\ell]$ if $\xi_0[\ell]\in S$.
\item For $j>\ell$, the solution $\xi$ is \emph{$j$-contaminated} for $P$ if either $\ell=1$ and $\xi_0[j]=r^*[j]$, where $r^*$ is the witness of $P$; or $\ell\ge2$ and, for at least $2^{-(j-\ell+3)}m_{\ell-1}$ of the $m_{\ell-1}$ phases $P'$ of $\cQ(P,\xi)$, the restriction of $\xi$ to $P'$ is $j$-contaminated for~$P'$.
\item The solution $\xi$ is \emph{contaminated} for $P$ if it is $j$-contaminated for some $j>\ell$.
\end{itemize}
\end{definition}

Informally, the hard multiphase is the part of $P$ in which server~$\ell$, kept at $\xi_0[\ell]$, serves few witnesses (\cref{lem:hard}), and a $j$-contaminated solution in which server~$j$ stays still keeps it at a point that many witnesses of $P$ request (\cref{lem:demand}).
\Cref{lem:phasecost} shows that a solution that is not contaminated for a phase of level $\ell\ge2$ pays at least $v_\ell/2^\ell$; for $\ell=k$, no solution is contaminated.

Define $\Gamma_{1,j}=1$ and $\Gamma_{\ell,j}=2^{j-\ell+3}d_{\ell-1}\Gamma_{\ell-1,j}$ for $2\le\ell\le j$.

\begin{lemma}[after~{\cite[Claim~33]{BMC26}}]\label[lemma]{lem:gamma}
$d_\ell\ge2\Gamma_{\ell,\ell}$ for every $\ell\ge2$.
\end{lemma}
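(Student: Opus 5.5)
The plan is to unroll the recursion for $\Gamma_{\ell,j}$ into a closed form, take base-$2$ logarithms, and reduce the claim to an elementary inequality between a quadratic in $\ell$ and $5^{\ell-1}$.

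\textbf{Unrolling.} Iterating $\Gamma_{\ell,j}=2^{j-\ell+3}d_{\ell-1}\Gamma_{\ell-1,j}$ down to $\Gamma_{1,j}=1$ gives
\[
\Gamma_{\ell,j}=\prod_{i=2}^{\ell}2^{\,j-i+3}\,d_{i-1}.
\]
Setting $j=\ell$ and using $\log_2 d_i=5^{i-1}-1$, we get
\[
\log_2\Gamma_{\ell,\ell}=\sum_{s=0}^{\ell-2}(s+3)+\sum_{i=1}^{\ell-1}\bigl(5^{i-1}-1\bigr).
\]
Summing the two series gives
\[
\log_2\Gamma_{\ell,\ell}=\frac{(\ell-1)(\ell-2)}2+3(\ell-1)+\frac{5^{\ell-1}-1}4-(\ell-1).
\]

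\textbf{Reduction.} The claim $d_\ell\ge2\Gamma_{\ell,\ell}$ is equivalent to $\log_2\Gamma_{\ell,\ell}\le5^{\ell-1}-2$. Subtracting $(5^{\ell-1}-1)/4$ from both sides, this becomes
\[
\frac{(\ell-1)(\ell-2)}2+2(\ell-1)+1\le\frac34\bigl(5^{\ell-1}-1\bigr).
\]

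\textbf{Induction on $\ell\ge2$.} For $\ell=2$ both sides equal $3$; indeed $\Gamma_{2,2}=8$ and $d_2=16$, so the lemma holds with equality. Going from $\ell$ to $\ell+1$, the left side increases by $(\ell-1)+2=\ell+1$ and the right side by $\frac34\cdot4\cdot5^{\ell-1}=3\cdot5^{\ell-1}$. Since $3\cdot5^{\ell-1}\ge\ell+1$ for all $\ell\ge2$, the inequality propagates.

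The only delicate point is the base case, which is tight, so the bookkeeping of the exponents $j-i+3$ must be exact. Before relying on the closed form, I would check it directly against $\Gamma_{2,2}=2^3d_1\Gamma_{1,2}=8$.
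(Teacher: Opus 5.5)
Your proposal is correct and follows the paper's proof: you unroll the recursion to get the same closed form for $\log_2\Gamma_{\ell,\ell}$ and reduce the claim to $\frac{\ell^2+\ell}{2}\le\frac34(5^{\ell-1}-1)$, which is tight at $\ell=2$, and then prove it by induction. The only cosmetic difference is in the induction step: you compare the increments of the two sides ($\ell+1$ versus $3\cdot5^{\ell-1}$), whereas the paper writes $5^\ell-1=5(5^{\ell-1}-1)+4$.
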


\begin{proof}
Unrolling the definition,
\[
\log_2\Gamma_{\ell,\ell}=\sum_{i=2}^\ell(\ell-i+3)+\sum_{i=1}^{\ell-1}(5^{i-1}-1)=\frac{\ell^2+3\ell-4}2+\frac{5^{\ell-1}-1}4-(\ell-1)=\frac{\ell^2+\ell-2}2+\frac{5^{\ell-1}-1}4.
\]
So the claim is equivalent to $\frac34(5^{\ell-1}-1)\ge\frac{\ell^2+\ell}2$, that is, $5^{\ell-1}-1\ge\frac23\ell(\ell+1)$.
This holds with equality for $\ell=2$, and if it holds for $\ell$, then $5^\ell-1=5(5^{\ell-1}-1)+4\ge\frac{10}3\ell(\ell+1)\ge\frac23(\ell+1)(\ell+2)$.
\end{proof}

\begin{lemma}[after~{\cite[Lemmas~34 and~48]{BMC26}}]\label[lemma]{lem:demand}
Let $P$ be a completed $(\ell,U)$-phase with demand vector $V$, let $j>\ell$, and let $\xi$ be a solution for $P$ in which server $j$ stays at a point $b$.
If $\xi$ is $j$-contaminated for $P$, then $V[b]\ge\|V|_j\|/(d_\ell\Gamma_{\ell,j})$.
\end{lemma}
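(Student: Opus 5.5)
Induction on $\ell$, using the recursive definitions of contamination and of $\Gamma_{\ell,j}$.

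\emph{Base case $\ell=1$.} Contamination means $b=\xi_0[j]=r^*[j]$. The demand vector is $V=\sum_{j'}\one_{r^*[j']}$, so $V[b]\ge1$. By \cref{lem:basic}(d), $\|V|_j\|=\kappa_1=1$, and $d_1\Gamma_{1,j}=1$. The claim follows.

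\emph{Inductive step $\ell\ge2$.} Let $\cQ=\cQ(P,\xi)$ be the hard multiphase, either the explore part or the exploit part for $\xi_0[\ell]$. Its $m_{\ell-1}$ phases $P'$ are completed $(\ell-1,U')$-phases, where $U'=U$ or $U'=U\cup\{p\}$. The restriction of $\xi$ to each $P'$ is a solution in which server $j$ still stays at $b$, and $j>\ell>\ell-1$.

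By the definition of contamination, at least $2^{-(j-\ell+3)}m_{\ell-1}$ of these restrictions are $j$-contaminated. Apply the induction hypothesis to each such $P'$ with its demand vector $V'$. Using \cref{lem:basic}(d), which gives $\|V'|_j\|=\kappa_{\ell-1}$, we get $V'[b]\ge\kappa_{\ell-1}/(d_{\ell-1}\Gamma_{\ell-1,j})$.

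The demand vector of $P$ is a sum of nonnegative vectors that includes the demand vector of $\cQ$, which in turn is the sum of the $V'$. Summing only over the contaminated phases gives
\[
V[b]\ge 2^{-(j-\ell+3)}m_{\ell-1}\,\frac{\kappa_{\ell-1}}{d_{\ell-1}\Gamma_{\ell-1,j}}=\frac{m_{\ell-1}\kappa_{\ell-1}}{\Gamma_{\ell,j}}.
\]
Since $\kappa_\ell=d_\ell m_{\ell-1}\kappa_{\ell-1}$ and $\|V|_j\|=\kappa_\ell$, the right side equals $\|V|_j\|/(d_\ell\Gamma_{\ell,j})$, as required.

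\emph{Main point to check.} The restriction of $\xi$ to a phase $P'$ of $\cQ$ must be a valid solution for $P'$ in the sense of the definition. Its initial configuration is the configuration just before $P'$. Server $j$ is at $b$ throughout, so the hypothesis is inherited. The other requirement is that $\|V'|_j\|$ uses the same $\kappa_{\ell-1}$ regardless of $U'$, and \cref{lem:basic}(d) gives this uniformly. Nothing else is delicate.
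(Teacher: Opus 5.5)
Your proof is correct and follows the paper's argument essentially step by step: induction on $\ell$, with the base case read off from the witness, and the inductive step summing the level-$(\ell-1)$ bound over the at least $2^{-(j-\ell+3)}m_{\ell-1}$ contaminated phases of $\cQ(P,\xi)$. It then uses that their demand vectors are summands of $V$, that $\|V'|_j\|=\kappa_{\ell-1}$ holds by \cref{lem:basic}(d), and that $\kappa_\ell=d_\ell m_{\ell-1}\kappa_{\ell-1}$.
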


\begin{proof}
By induction on $\ell$.
For $\ell=1$, contamination means $b=\xi_0[j]=r^*[j]$, so $V[b]=1=\|V|_j\|$, and $d_1\Gamma_{1,j}=1$.
Let $\ell\ge2$ and $m=m_{\ell-1}$, and let $V_1,\dots,V_m$ be the demand vectors of the phases of $\cQ(P,\xi)$.
Server $j$ stays at $b$ in each restriction of $\xi$, and at least $m/2^{j-\ell+3}$ restrictions are $j$-contaminated.
By induction and \cref{lem:basic}(d), each of them has $V_i[b]\ge\kappa_{\ell-1}/(d_{\ell-1}\Gamma_{\ell-1,j})$.
The vectors $V_i$ are summands of $V$, so
\[
V[b]\ge\frac{m}{2^{j-\ell+3}}\cdot\frac{\kappa_{\ell-1}}{d_{\ell-1}\Gamma_{\ell-1,j}}=\frac{m\kappa_{\ell-1}}{\Gamma_{\ell,j}}=\frac{\kappa_\ell}{d_\ell\Gamma_{\ell,j}}=\frac{\|V|_j\|}{d_\ell\Gamma_{\ell,j}}.\qedhere
\]
\end{proof}

\begin{lemma}[after~{\cite[Lemma~18]{BMC26}}]\label[lemma]{lem:hard}
Let $P$ be a completed $(\ell,U)$-phase with $\ell\ge2$, and let $\xi$ be a solution for $P$ in which server $\ell$ stays at a point $b$.
Then the restriction of $\xi$ is $\ell$-contaminated for at most $m_{\ell-1}/16$ of the phases of $\cQ(P,\xi)$.
\end{lemma}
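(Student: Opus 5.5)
The plan is to apply \cref{lem:demand} at level $\ell-1$ with $j=\ell$ to every phase of the hard multiphase, and then compare the resulting lower bound on the demand at $b$ with an upper bound that comes from how $\cQ(P,\xi)$ is chosen. Write $m=m_{\ell-1}$, and note that $\xi_0[\ell]=b$ because server $\ell$ stays at $b$. Let $P'$ be a phase of $\cQ(P,\xi)$ whose restriction of $\xi$ is $\ell$-contaminated. Server $\ell$ stays at $b$ in that restriction, and $\ell>\ell-1$. By \cref{lem:demand} and \cref{lem:basic}(d), the demand vector $V'$ of $P'$ then satisfies
\[
V'[b]\ge\kappa_{\ell-1}/(d_{\ell-1}\Gamma_{\ell-1,\ell}),
\]
and this is positive. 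I split into the two cases of \cref{def:contamination}.

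\emph{Case $b\in S$.} Here $\cQ(P,\xi)$ is the exploit part for $b$, an $(\ell-1,U\cup\{b\})$-multiphase. Each of its phases is an $(\ell-1,U\cup\{b\})$-phase, so by \cref{lem:basic}(d) its demand vector vanishes at $b$. This contradicts the positive lower bound above, so no restriction is $\ell$-contaminated, which is certainly at most $m/16$.

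\emph{Case $b\notin S$.} Here $\cQ(P,\xi)$ is the explore part, with demand vector $V^0$ equal to the sum of the demand vectors of its $m$ phases. By \cref{lem:basic}(d), $\|V^0|_\ell\|=m\kappa_{\ell-1}$. Because $b\notin S$, each of the $d_\ell-1$ critical points $p$ has $V^0[p]\ge V^0[b]$. These points and $b$ are $d_\ell$ distinct points of $M_\ell$, so $d_\ell V^0[b]\le m\kappa_{\ell-1}$. Now suppose more than $m/16$ restrictions are $\ell$-contaminated. Summing the lower bound over those phases gives
\[
V^0[b]>\frac{m}{16}\cdot\frac{\kappa_{\ell-1}}{d_{\ell-1}\Gamma_{\ell-1,\ell}}.
\]
By definition $\Gamma_{\ell,\ell}=2^3d_{\ell-1}\Gamma_{\ell-1,\ell}$, and \cref{lem:gamma} gives $d_\ell\ge2\Gamma_{\ell,\ell}=16d_{\ell-1}\Gamma_{\ell-1,\ell}$. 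The lower bound therefore yields $V^0[b]>m\kappa_{\ell-1}/d_\ell$, which contradicts the upper bound from the critical set.

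The main point of care is the constant bookkeeping in the second case. The factor $2^{j-\ell+3}$ with $j=\ell$ must match the $2$ in \cref{lem:gamma} so that exactly the threshold $1/16$ comes out, and the case "more than $m/16$" must produce a strict inequality. One also has to check that the tie-breaking in the choice of $S$ still gives $V^0[p]\ge V^0[b]$ for every $p\in S$ when $b\notin S$; it does, since $S$ consists of points with the largest values.
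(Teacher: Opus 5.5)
Your proof is correct and is essentially the paper's argument: the same split on whether $b\in S$, the same application of \cref{lem:demand} at level $\ell-1$ with $j=\ell$, the same critical-set bound $d_\ell V^0[b]\le m\kappa_{\ell-1}$, and \cref{lem:gamma} via $\Gamma_{\ell,\ell}=8d_{\ell-1}\Gamma_{\ell-1,\ell}$. The only difference is cosmetic: the paper bounds the number $s$ of contaminated restrictions directly, $s\le m\Gamma_{\ell,\ell}/(8d_\ell)\le m/16$, rather than arguing by contradiction.
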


\begin{proof}
Let $m=m_{\ell-1}$.
If $b\in S$, then $\cQ(P,\xi)$ is the exploit part for $b$, an $(\ell-1,U\cup\{b\})$-multiphase.
By \cref{lem:basic}(d) the demand vector of each of its phases vanishes at $b$, while by \cref{lem:demand} an $\ell$-contaminated restriction would give it positive demand at $b$; so no restriction is $\ell$-contaminated.

If $b\notin S$, then $\cQ(P,\xi)$ is the explore part, with demand vector $V^0=\sum_{i=1}^mV_i$ and $\|V^0|_\ell\|=m\kappa_{\ell-1}$.
The $d_\ell-1$ points of $S$ have demand at least $V^0[b]$, so $d_\ell V^0[b]\le m\kappa_{\ell-1}$.
If $s$ restrictions are $\ell$-contaminated, \cref{lem:demand} at level $\ell-1$ gives $V^0[b]\ge s\,\kappa_{\ell-1}/(d_{\ell-1}\Gamma_{\ell-1,\ell})$.
Hence, by \cref{lem:gamma},
\[
s\le\frac{m\,d_{\ell-1}\Gamma_{\ell-1,\ell}}{d_\ell}=\frac{m\,\Gamma_{\ell,\ell}}{8d_\ell}\le\frac m{16}.\qedhere
\]
\end{proof}

\begin{lemma}[after~{\cite[Lemma~19]{BMC26}}]\label[lemma]{lem:phasecost}
Let $P$ be a completed $(\ell,U)$-phase with $\ell\ge2$, and let $\xi$ be a solution for $P$ that is not contaminated for $P$.
Then $\xi$ costs at least $v_\ell/2^\ell$.
\end{lemma}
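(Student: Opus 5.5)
The plan is induction on $\ell$. First we dispose of the case in which $\xi$ moves one of the servers $\ell,\dots,k$: such a move costs at least $v_\ell\ge v_\ell/2^\ell$. So assume that every server $j\ge\ell$ stays at $\xi_0[j]$ throughout. Put $b=\xi_0[\ell]$ and $m=m_{\ell-1}\ge2$, and let $P'_1,\dots,P'_m$ be the phases of the hard multiphase $\cQ(P,\xi)$, in order. They accept disjoint consecutive intervals of requests, so the cost of $\xi$ is at least the sum of the costs of its restrictions $\xi^{(1)},\dots,\xi^{(m)}$ to them. It therefore suffices to show that enough of these restrictions pay.

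\textbf{Counting contaminated restrictions.} A restriction $\xi^{(s)}$ is contaminated for the $(\ell-1,\cdot)$-phase $P'_s$ if it is $j$-contaminated for some $j\ge\ell$. Since server $\ell$ stays at $b$, \cref{lem:hard} shows that at most $m/16$ restrictions are $\ell$-contaminated. For $j>\ell$, the assumption that $\xi$ is not $j$-contaminated for $P$ means, by \cref{def:contamination}, that fewer than $2^{-(j-\ell+3)}m$ restrictions are $j$-contaminated. Summing over $j>\ell$ contributes fewer than $m\sum_{i\ge1}2^{-(i+3)}=m/8$ restrictions. Hence at least $13m/16$ of the restrictions are not contaminated.

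\textbf{Inductive step, $\ell\ge3$.} By the induction hypothesis at level $\ell-1\ge2$, every uncontaminated restriction costs at least $v_{\ell-1}/2^{\ell-1}$. The total cost is then at least $\tfrac{13}{16}\,m\,v_{\ell-1}/2^{\ell-1}=\tfrac{13}{16}\cdot2v_\ell/2^\ell\ge v_\ell/2^\ell$. Note that lighter servers may move inside the restrictions; this is harmless, because the hypothesis applies to arbitrary solutions of $P'_s$.

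\textbf{Base of the recursion, $\ell=2$.} This is the step I expect to be the real obstacle, since phases of level $1$ have no lower-bound lemma of their own. Here each $P'_s$ is a $(1,U')$-phase with witness $r^*_s$, and $\xi^{(s)}$ being uncontaminated means $\xi^{(s)}_0[j]\ne r^*_s[j]$ for every $j>1$.
\begin{enumerate}[(i)]
\item Suppose such a restriction moves no server. Then its configuration is constant, and it must serve $r^*_s$. Since $U'$ does not satisfy $r^*_s$ and the coordinates $j>1$ miss it, server $1$ must sit at $r^*_s[1]$ throughout $P'_s$. Otherwise the restriction pays at least $v_1$.
\item Two consecutive phases $P'_s,P'_{s+1}$ cannot both have costless uncontaminated restrictions. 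The first configuration of $\xi^{(s+1)}$ is the last one of $\xi^{(s)}$, so server $1$ would be at $r^*_s[1]=r^*_{s+1}[1]$, contradicting \cref{lem:basic}(c).
\end{enumerate}
So at most $\lceil m/2\rceil$ restrictions are costless, and at least $\lceil 13m/16\rceil-\lceil m/2\rceil$ of them pay $v_1=v_2/m$. We then need this count to be at least $m/4$, which gives a cost of at least $v_2/4$. For $m\ge8$ this follows from $13m/16-(m+1)/2\ge m/4$. For $2\le m\le7$ it is a direct integer check: for example, $m=2$ gives $2-1=1\ge1/2$, and $m=7$ gives $6-4=2\ge7/4$.
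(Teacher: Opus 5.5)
Your proof is correct and follows the paper's argument: the same reduction to the case where servers $\ell,\dots,k$ stay still, the same count of contaminated restrictions via \cref{lem:hard} and the definition of contamination, and the same induction for $\ell\ge3$. The only variation is in the case $\ell=2$. There you charge $v_1$ to individual uncontaminated phases whose restriction moves, using that no two consecutive uncontaminated restrictions are costless by \cref{lem:basic}(c), whereas the paper charges moves of server~1 between the witnesses of consecutive good phases. Also, your sentence ``at most $\lceil m/2\rceil$ restrictions are costless'' should say ``uncontaminated restrictions''; that is what (ii) proves and what your subtraction uses.
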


\begin{proof}
If some server $i\ge\ell$ moves, the cost is at least $v_\ell$.
Otherwise servers $\ell,\dots,k$ stay where they are.
Let $m=m_{\ell-1}$, and call a phase $P'$ of $\cQ(P,\xi)$ \emph{good} if the restriction of $\xi$ to $P'$ is not $j$-contaminated for $P'$ for any $j\ge\ell$.
By \cref{lem:hard}, at most $m/16$ phases fail this for $j=\ell$, and since $\xi$ is not contaminated for $P$, fewer than $2^{-(j-\ell+3)}m$ fail it for each $j>\ell$.
So fewer than $m/16+m/8<m/4$ phases are not good.

For $\ell=2$, the phases of $\cQ(P,\xi)$ are $(1,U')$-phases with witnesses $r_1^*,\dots,r_m^*$.
For a good phase $i$, no server $j\ge2$ is at $r_i^*[j]$, because these servers stay at their initial positions and $\xi$ restricted to the phase is not $j$-contaminated.
So server $1$ is at $r_i^*[1]$ when $r_i^*$ arrives.
By \cref{lem:basic}(c), whenever phases $i$ and $i+1$ are both good, server $1$ moves between the arrivals of $r_i^*$ and $r_{i+1}^*$, and these moves happen in disjoint time intervals.
Since each phase that is not good belongs to at most two of the pairs $(i,i+1)$, at least $m-1-2\lfloor m/4\rfloor\ge m/4$ of these pairs consist of two good phases when $m\ge4$, and all $m-1\ge m/4$ do when $m\in\{2,3\}$, where every phase is good.
The cost is therefore at least $(m/4)v_1=v_2/4$.

For $\ell>2$, the restriction of $\xi$ to a good phase is a solution for a completed $(\ell-1)$-phase that is not contaminated for it, so by induction it costs at least $v_{\ell-1}/2^{\ell-1}$.
At least $3m/4$ phases are good, so $\xi$ costs at least $\frac{3m}4\cdot\frac{v_{\ell-1}}{2^{\ell-1}}=\frac{3v_\ell}{2^{\ell+1}}\ge\frac{v_\ell}{2^\ell}$.
\end{proof}

\begin{corollary}[after~{\cite[Corollary~20]{BMC26}}]\label[corollary]{cor:opt}
Every solution for a completed $(k,\emptyset)$-phase, from any initial configuration, costs at least $v_k/2^k$.
\end{corollary}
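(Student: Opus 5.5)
We apply \cref{lem:phasecost} with $\ell=k$ and $U=\emptyset$. Since $k\ge2$ is assumed throughout this part, the lemma applies. It therefore suffices to show that no solution $\xi$ for a completed $(k,\emptyset)$-phase $P$ is contaminated for $P$.

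By \cref{def:contamination}, contamination means being $j$-contaminated for some $j>\ell$. Here $\ell=k$ and the servers are indexed by $[k]$, so there is no such $j$, and the definition is vacuous. Every solution is therefore uncontaminated, and \cref{lem:phasecost} gives cost at least $v_k/2^k$. The solution $\xi$ starts from an arbitrary $\xi_0$, which covers ``from any initial configuration.''

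The only point to check is that contamination at level $k$ is never used recursively with an index beyond $k$. Inside the proof of \cref{lem:phasecost}, the restrictions to phases of level $k-1$ are tested for $j$-contamination only with $j\ge\ell$. That is, $j\in\{k\}$ at the top level, and then $j\le k$ at lower levels, where the notion is well defined. So the argument is already complete as stated, and there is no real obstacle: the corollary is the specialization $\ell=k$ of \cref{lem:phasecost}.
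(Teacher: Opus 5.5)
Your proposal is correct and matches the paper's proof: at level $k$ there is no $j>k$, so no solution is contaminated, and \cref{lem:phasecost} with $\ell=k\ge2$ gives the bound. Your side remark about the recursion is also right, since inside \cref{lem:phasecost} the phases of level $\ell-1$ are tested only for $j\ge\ell>\ell-1$, so every contamination notion used is well defined.
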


\begin{proof}
No solution is contaminated for a phase of level $k$, since there is no $j>k$; apply \cref{lem:phasecost}.
\end{proof}

\subsection{Proof of \texorpdfstring{\cref{thm:weighted}}{Theorem 1.3}}\label{sec:weighted-proof}

For $k=1$, moving the server whenever necessary is optimal.
If $w_1=0$, moving server~$1$ to $r_t[1]$ at every request serves everything at cost~$0$.
So let $k\ge2$ and $w_1>0$.

\emph{Rounding the weights.}
Let $v_1=w_1$ and $v_i=v_{i-1}\max\{2,\lceil w_i/v_{i-1}\rceil\}$ for $i\ge2$.
Then $v_{i+1}/v_i\in\{2,3,\dots\}$, and $w_i\le v_i\le2^{i-1}w_i$: the first inequality is immediate, and inductively $v_i\le\max\{2v_{i-1},w_i+v_{i-1}\}\le\max\{2^{i-1}w_i,(1+2^{i-2})w_i\}=2^{i-1}w_i$.
Run the algorithm below with the weights $v$; its true cost is at most its cost under $v$, and evaluating an optimal solution under $v$ shows $\OPT_v\le2^{k-1}\OPT_w$.

\emph{Enough points.}
Add to each $M_\ell$ points that are never requested, so that $|M_\ell|\ge d_\ell-1$.
Run the algorithm on the enlarged spaces, and keep each real server at the image of its simulated position under the map that fixes the original points and sends the added points to a fixed original point.
The real servers then serve every request served by the simulated ones and never move more, and the enlargement does not increase $\OPT$.

\emph{The algorithm.}
Run the strategy of \cref{lem:strategy} for $(k,\emptyset)$-phases repeatedly, each call starting with the request at which the previous phase stopped.
By \cref{lem:basic}(a,b), every call accepts a request, and the number $n_c$ of calls on a finite request sequence $\rho$ does not depend on any random choice.
All calls except possibly the last serve completed phases, so $\E[\ALG_v(\rho)]\le n_cc_kv_k$, while \cref{cor:opt}, applied to the restrictions of an optimal solution to the completed phases, gives $\OPT_v(\rho)\ge(n_c-1)v_k/2^k$.
Hence $\E[\ALG_v(\rho)]\le2^kc_k\OPT_v(\rho)+c_kv_k$, and with~\eqref{eq:ck}, $\OPT_v\le2^{k-1}\OPT_w$ and $v_k\le2^{k-1}w_k$,
\[
\E[\ALG_w(\rho)]\le2^{2k-1}\cdot10\cdot5^{k-1}\cdot\OPT_w(\rho)+10\cdot5^{k-1}\cdot2^{k-1}w_k=20^k\OPT_w(\rho)+10^kw_k.
\]

\section{Concluding remarks}

\Cref{thm:upper,thm:lower,thm:weighted} determine the randomized competitive ratio of generalized $k$-server on uniform metrics: $\Theta(k\log k)$ with unit distances and $\exp(\Theta(k))$ with weights.
The lower bound for unit distances uses spaces with $k$ points, while the $\Omega(k)$ bound of~\cite{BJS19} needs only two; it would be interesting to determine the ratio as a function of both $k$ and the number of points per space.
The leading constant with unit distances remains between $\frac1{20}$ and $1$, and the base of the exponent with weights between $2$ and $20$.
For deterministic algorithms, the competitive ratio on uniform metrics remains between $2^k-1$ and $O(k2^k)$~\cite{KT04,BEKN18}, and on general metrics no $f(k)$-competitive algorithm is known for $k\ge3$.
Finally, our algorithms maintain distributions with very large supports; algorithms with polynomial running time would be interesting, as Coester and Cosson~\cite{CC26} recently obtained for randomized $k$-server, with ratio $O(\log n\log^2k)$ on $n$-point metrics.

\section*{AI usage disclosure}
\addcontentsline{toc}{section}{AI usage disclosure}

AI agents were used for exploratory discussions during the development of the proofs and to assist with drafting and revising the manuscript.
The author takes full responsibility for the paper's contents and correctness.

\bibliographystyle{alpha}
\bibliography{refs}

\newcommand{\etalchar}[1]{$^{#1}$}
\begin{thebibliography}{BDBK{\etalchar{+}}94}

\bibitem[AC21]{AC21}
Nikhil Ayyadevara and Ashish Chiplunkar.
\newblock The randomized competitive ratio of weighted $k$-server is at least
  exponential.
\newblock In {\em Proceedings of the 29th Annual European Symposium on
  Algorithms (ESA)}, volume 204 of {\em LIPIcs}, pages 9:1--9:11, 2021.

\bibitem[ACO08]{AC08}
Dimitris Achlioptas and Amin Coja-Oghlan.
\newblock Algorithmic barriers from phase transitions.
\newblock In {\em Proceedings of the 49th IEEE Symposium on Foundations of
  Computer Science (FOCS)}, pages 793--802, 2008.

\bibitem[ACS24]{ACS24}
Nikhil Ayyadevara, Ashish Chiplunkar, and Amatya Sharma.
\newblock A decomposition approach to the weighted $k$-server problem.
\newblock In {\em Proceedings of the 44th IARCS Annual Conference on
  Foundations of Software Technology and Theoretical Computer Science
  (FSTTCS)}, volume 323 of {\em LIPIcs}, pages 6:1--6:17, 2024.

\bibitem[AM06]{AM06}
Dimitris Achlioptas and Cristopher Moore.
\newblock Random $k$-{SAT}: Two moments suffice to cross a sharp threshold.
\newblock {\em SIAM Journal on Computing}, 36(3):740--762, 2006.

\bibitem[BB00]{BB00}
Avrim Blum and Carl Burch.
\newblock On-line learning and the metrical task system problem.
\newblock {\em Machine Learning}, 39(1):35--58, 2000.

\bibitem[BCL{\etalchar{+}}18]{BCLLM18}
S{\'e}bastien Bubeck, Michael~B. Cohen, Yin~Tat Lee, James~R. Lee, and
  Aleksander M{\k{a}}dry.
\newblock $k$-server via multiscale entropic regularization.
\newblock In {\em Proceedings of the 50th Annual ACM Symposium on Theory of
  Computing (STOC)}, pages 3--16, 2018.

\bibitem[BCR23]{BCR23}
S{\'e}bastien Bubeck, Christian Coester, and Yuval Rabani.
\newblock The randomized $k$-server conjecture is false!
\newblock In {\em Proceedings of the 55th Annual ACM Symposium on Theory of
  Computing (STOC)}, pages 581--594, 2023.
\newblock Full version: arXiv:2211.05753.

\bibitem[BDBK{\etalchar{+}}94]{BBKTW94}
Shai Ben-David, Allan Borodin, Richard~M. Karp, G{\'a}bor Tardos, and Avi
  Wigderson.
\newblock On the power of randomization in on-line algorithms.
\newblock {\em Algorithmica}, 11(1):2--14, 1994.

\bibitem[BEK17]{BEK17}
Nikhil Bansal, Marek Eli{\'a}{\v{s}}, and Grigorios Koumoutsos.
\newblock Weighted $k$-server bounds via combinatorial dichotomies.
\newblock In {\em Proceedings of the 58th IEEE Symposium on Foundations of
  Computer Science (FOCS)}, pages 493--504, 2017.

\bibitem[BEKN18]{BEKN18}
Nikhil Bansal, Marek Eli{\'a}{\v{s}}, Grigorios Koumoutsos, and Jesper
  Nederlof.
\newblock Competitive algorithms for generalized $k$-server in uniform metrics.
\newblock In {\em Proceedings of the 29th Annual ACM-SIAM Symposium on Discrete
  Algorithms (SODA)}, pages 992--1001, 2018.
\newblock Journal version: ACM Transactions on Algorithms 19(1):8:1--8:15,
  2023.

\bibitem[BJS19]{BJS19}
Marcin Bienkowski, {\L}ukasz Je{\.z}, and Pawe{\l} Schmidt.
\newblock Slaying {H}ydrae: Improved bounds for generalized $k$-server in
  uniform metrics.
\newblock In {\em Proceedings of the 30th International Symposium on Algorithms
  and Computation (ISAAC)}, volume 149 of {\em LIPIcs}, pages 14:1--14:14,
  2019.

\bibitem[BMC26]{BMC26}
Adithya Bijoy, Ankit Mondal, and Ashish Chiplunkar.
\newblock Weighted $k$-server admits an exponentially competitive algorithm.
\newblock In {\em Proceedings of the 37th Annual ACM-SIAM Symposium on Discrete
  Algorithms (SODA)}, pages 4188--4208, 2026.
\newblock Full version: arXiv:2507.12130v2.

\bibitem[CC26]{CC26}
Christian Coester and Romain Cosson.
\newblock Randomized $k$-server in polynomial time.
\newblock In {\em Proceedings of the 53rd International Colloquium on Automata,
  Languages, and Programming (ICALP)}, volume 374 of {\em LIPIcs}, pages
  65:1--65:20, 2026.

\bibitem[CFK21]{CFK20}
Dimitris Christou, Dimitris Fotakis, and Grigorios Koumoutsos.
\newblock Memoryless algorithms for the generalized $k$-server problem on
  uniform metrics.
\newblock In {\em Proceedings of the 18th Workshop on Approximation and Online
  Algorithms (WAOA 2020)}, volume 12806 of {\em Lecture Notes in Computer
  Science}, pages 143--158, 2021.

\bibitem[CHL{\etalchar{+}}23]{CHLPRSY23}
Marek Chrobak, Samuel Haney, Mehraneh Liaee, Debmalya Panigrahi, Rajmohan
  Rajaraman, Ravi Sundaram, and Neal~E. Young.
\newblock Online paging with heterogeneous cache slots.
\newblock In {\em Proceedings of the 40th International Symposium on
  Theoretical Aspects of Computer Science (STACS)}, volume 254 of {\em LIPIcs},
  pages 23:1--23:24, 2023.

\bibitem[CKZ26]{CKZ26}
Christian Coester, Elias Koutsoupias, and Marek Zbysi{\'n}ski.
\newblock The $k$-server conjecture is true.
\newblock arXiv:2609.15979, 2026.

\bibitem[CL92]{CL91}
Marek Chrobak and Lawrence~L. Larmore.
\newblock The server problem and on-line games.
\newblock In {\em On-Line Algorithms, Proceedings of a DIMACS Workshop},
  volume~7 of {\em DIMACS Series in Discrete Mathematics and Theoretical
  Computer Science}, pages 11--64. American Mathematical Society, 1992.

\bibitem[CV15]{CV15}
Ashish Chiplunkar and Sundar Vishwanathan.
\newblock Metrical service systems with multiple servers.
\newblock {\em Algorithmica}, 71(1):219--231, 2015.

\bibitem[CV20]{CV20}
Ashish Chiplunkar and Sundar Vishwanathan.
\newblock Randomized memoryless algorithms for the weighted and the generalized
  $k$-server problems.
\newblock {\em ACM Transactions on Algorithms}, 16(1):14:1--14:28, 2020.

\bibitem[FR94]{FR94}
Amos Fiat and Moty Ricklin.
\newblock Competitive algorithms for the weighted server problem.
\newblock {\em Theoretical Computer Science}, 130(1):85--99, 1994.

\bibitem[GVW10]{GVW10}
Sascha Geulen, Berthold V{\"o}cking, and Melanie Winkler.
\newblock Regret minimization for online buffering problems using the weighted
  majority algorithm.
\newblock In {\em Proceedings of the 23rd Conference on Learning Theory
  (COLT)}, pages 132--143, 2010.

\bibitem[HW98]{HW98}
Mark Herbster and Manfred~K. Warmuth.
\newblock Tracking the best expert.
\newblock {\em Machine Learning}, 32(2):151--178, 1998.

\bibitem[KP95]{KP95}
Elias Koutsoupias and Christos~H. Papadimitriou.
\newblock On the $k$-server conjecture.
\newblock {\em Journal of the ACM}, 42(5):971--983, 1995.

\bibitem[KT04]{KT04}
Elias Koutsoupias and David~Scot Taylor.
\newblock The {CNN} problem and other $k$-server variants.
\newblock {\em Theoretical Computer Science}, 324(2--3):347--359, 2004.

\bibitem[LW94]{LW94}
Nick Littlestone and Manfred~K. Warmuth.
\newblock The weighted majority algorithm.
\newblock {\em Information and Computation}, 108(2):212--261, 1994.

\bibitem[MMS90]{MMS90}
Mark~S. Manasse, Lyle~A. McGeoch, and Daniel~D. Sleator.
\newblock Competitive algorithms for server problems.
\newblock {\em Journal of Algorithms}, 11(2):208--230, 1990.

\bibitem[New91]{New91}
Lee~A. Newberg.
\newblock The $k$-server problem with distinguishable servers.
\newblock Master's thesis, University of California, Berkeley, 1991.

\bibitem[Sit14]{Sit14}
Ren{\'e} Sitters.
\newblock The generalized work function algorithm is competitive for the
  generalized 2-server problem.
\newblock {\em SIAM Journal on Computing}, 43(1):96--125, 2014.

\bibitem[SS06]{SS06}
Ren{\'e} Sitters and Leen Stougie.
\newblock The generalized two-server problem.
\newblock {\em Journal of the ACM}, 53(3):437--458, 2006.

\end{thebibliography}

\end{document}